\documentclass[aps,prl,reprint,superscriptaddress,nofootinbib,longbibliography]{revtex4-2}
\usepackage{mehdis}
\usepackage{amsmath,amssymb,amsthm,mathtools}
\usepackage{bm}
\usepackage{xcolor}
\usepackage{physics}
\usepackage{tikz}
\usetikzlibrary{arrows.meta,decorations.markings}
\usepackage[colorlinks=true,citecolor=blue,linkcolor=blue,urlcolor=blue]{hyperref}
\theoremstyle{plain} 
\newtheorem{theorem}{Theorem}
\newtheorem{corollary}{Corollary}[theorem]
\newtheorem{lemma}{Lemma}
\newtheorem{appendixcorollary}{Corollary}[section]

\newcommand{\dist}{\operatorname{dist}}
\newcommand{\poly}{\operatorname{poly}}
\newcommand{\supp}{\operatorname{supp}}
\newcommand{\cmi}{I(A\,{:}\,C{|}B)}
\newcommand{\rhozero}{\rho_0}
\newcommand{\noise}{\mathcal{N}}

\newcommand{\frakd}{\mathsf{D}}
\newcommand{\sweight}{\mathsf{w}}
\newcommand{\noiseweight}{\mathsf{s}}
\newcommand{\PD}{\mathsf{P}}

\begin{document}

\title{Stability of Quantum LDPC States under Local Pauli Noise}

\author{Tai-Hsuan Yang}
\affiliation{Department of Physics and Institute for Condensed Matter Theory, University of Illinois Urbana-Champaign, Urbana, Illinois 61801, USA}

\author{Andrew D. Kim}
\affiliation{Department of Physics and Institute for Condensed Matter Theory, University of Illinois Urbana-Champaign, Urbana, Illinois 61801, USA}

\author{Akash Vijay}
\affiliation{Department of Physics and Institute for Condensed Matter Theory, University of Illinois Urbana-Champaign, Urbana, Illinois 61801, USA}

\author{Jong Yeon Lee}
\affiliation{Department of Physics and Institute for Condensed Matter Theory, University of Illinois Urbana-Champaign, Urbana, Illinois 61801, USA}
\affiliation{Korea Institute for Advanced Study, Seoul 02455, South Korea} 

\date{\today}
\preprint{PRL manuscript}

\begin{abstract}
Quantum low-density parity-check (qLDPC) codes have a canonical mixed state---the normalized projector onto the code space---that lies at the intersection of quantum error correction and quantum many-body physics, yet the stability of its information-theoretic structure under local noise remains poorly understood. Recent progress on mixed-state phases of matter has identified the spatial decay of conditional mutual information (CMI) $I(A:C|B)$ as a key diagnostic of such stability, analogous to the role of the spectral gap for quantum ground states.
We prove that, for the normalized code-projector state of a $\mathbb{Z}_d$ qLDPC stabilizer code under local Pauli noise, the CMI across a buffer $B$ separating $A$ from $C$ decays exponentially with the width of $B$ in two complementary noise regimes independent of system size. 
At weak noise, this holds for full-system tripartitions, where correlations between the two sides require an error cluster to cross the buffer, an exponentially unlikely event.
At strong noise, exponential decay holds whenever the buffer supports no logical operator, since correlations can persist only along rare chains of buffer qudits that remain informative.
Under an additional space-uniformity assumption, we establish the robustness of Levin--Wen topological entropy.
On finite-dimensional lattices of size $n$, the weak-noise bound further implies a constant-layer quasi-local Petz recovery channel of range $O(\poly(\log n)/\varepsilon)$ that recovers the noiseless state within trace-distance error $\varepsilon$.
Both CMI bounds apply to general qLDPC interaction graphs, including geometries where such constant-depth quasi-local recovery need not exist. 
Thus, the exponential decay of the noise-induced CMI provides a general criterion for mixed-state phase stability.
\end{abstract}
\maketitle

\emph{Introduction.---}
Quantum low-density parity-check (qLDPC) codes are built from sparse stabilizer constraints, with each check acting on a bounded number of qudits and each qudit participating in a bounded number of checks. 
These local constraints collectively protect logical information encoded nonlocally across the system~\cite{qLDPC2021PRXQ}. 
In geometrically local realizations, such as quantum-double models and related topological stabilizer codes, this nonlocal structure is reflected in topological entanglement entropy, information-convex structures, and entanglement-based probes of anyon data~\cite{Kitaev1997,Kitaev2005,KitaevPreskill_2006,LevinWen2006,Shi2019ICS,Shi2019fusion,BombinQuantumDouble2008}.
More generally, qLDPC code states can realize nontrivial quantum phases of matter even beyond conventional spatially local geometries. Recent works have made this perspective precise by establishing their stability under perturbations that are local on the qLDPC interaction graph~\cite{Yinldpccodeasphaseofmatter2025,Deldpccodeasphaseofmatter2025,perturbativestabilityldpccodeli2025}.

Environmental noise poses a complementary stability question: whether qLDPC code states remain stable as mixed-state phases of matter under local quantum channels. 
The challenge of characterizing stable mixed-state phases has motivated several approaches, including quantum-memory and entanglement diagnostics~\cite{lee2024exact,Muller2024,MullerCI2025,kim2024errorthresholdsykcodes,lee2024mixedstatetopologicalordercoherent,wang2025analogtopologicalentanglemententropy,howmuchentanglementisneeded,chen2023separability,wang2023intrinsic, vijay2026hierarchyrenyicoherentinformation, temkin2025chargeinformedquantumerrorcorrection, vijay2026informationcriticalphasesdecoherence}, doubled-state and symmetry-based constructions~\cite{lee2022symmetry,Lee_2023,lessa2024strongtoweakspontaneoussymmetrybreaking,ma2024symmetryprotectedtopologicalphases,Ma2023,ma2025topologicalphasesaveragesymmetries}, operator-algebraic and categorical structures~\cite{Ellison2025,Negari2024,Ogata2025}, local reversibility~\cite{hsieh2026locallystable,chen2026notestrongquantummarkov, Sang2024,mixedstatelocalrev2025}, and mixed-state entanglement bootstrap~\cite{yang2025topologicalmixedstatesphases}.
A central quantity in these frameworks is the conditional mutual information (CMI)~\cite{ Sang2024, yang2025topologicalmixedstatesphases,mixedstatelocalrev2025} 
\[
  I(A:C|B)_{\rho} = S(AB)_{\rho}+S(BC)_{\rho} - S(B)_{\rho}-S(ABC)_{\rho},
\]
defined for a state $\rho$ and a tripartition $ABC$ in which $B$ separates $A$ from $C$, as illustrated in Fig.~\ref{fig:cmi-partitions}. Small CMI implies an approximate quantum Markov property: the state on $ABC$ can be reconstructed, with controlled error, from an appropriate marginal by a recovery channel acting locally near the buffer~\cite{Fawzi2014,Junge2015,PETZ_2003,pinchedpetz2016}. Together with two-way connectivity under local quantum channels, the coarse-grained decay of CMI---either along the time direction of a Lindbladian evolution realizing the channel~\cite{Sang2024, mixedstatelocalrev2025} or across a spatial domain wall~\cite{yang2025topologicalmixedstatesphases}---provides a rigorous framework for organizing mixed states into equivalence classes of phases.


\begin{figure}[!t]
\centering
\begin{tikzpicture}[every node/.style={font=\scriptsize}]
  \begin{scope}
    \fill[red!18] (1.30,0) circle (1.15);
    \fill[gray!22] (1.30,0) circle (0.75);
    \fill[blue!18] (1.30,0) circle (0.40);
    \draw (1.30,0) circle (1.15);
    \draw (1.30,0) circle (0.75);
    \draw (1.30,0) circle (0.40);
    \node at (1.30,0) {$A$};
    \node at (1.30,0.57) {$B$};
    \node at (1.30,0.96) {$C$};
    \node at (1.30,-1.48) {(a)};
  \end{scope}
  \begin{scope}[xshift=4.1cm]
    \fill[red!18] (0,0) circle (1.15);
    \fill[gray!22] (0:0.40) arc (0:180:0.40) -- (180:0.75) arc (180:0:0.75) -- cycle;
    \fill[yellow!28] (180:0.40) arc (180:360:0.40) -- (360:0.75) arc (360:180:0.75) -- cycle;
    \fill[blue!18] (0,0) circle (0.40);
    \draw (0,0) circle (1.15);
    \draw (0,0) circle (0.75);
    \draw (0,0) circle (0.40);
    \draw (0:0.40) -- (0:0.75);
    \draw (180:0.40) -- (180:0.75);
    \node at (0,0) {$A$};
    \node at (90:0.58) {$B$};
    \node at (270:0.58) {$D$};
    \node at (0:0.96) {$C$};
    \node at (0,-1.48) {(b)};
  \end{scope}
  \begin{scope}[xshift=6.9cm]
    \fill[blue!18] (35:0.45) arc (35:145:0.45) -- (145:1.15) arc (145:35:1.15) -- cycle;
    \fill[gray!22] (145:0.45) arc (145:215:0.45) -- (215:1.15) arc (215:145:1.15) -- cycle;
    \fill[red!18] (215:0.45) arc (215:325:0.45) -- (325:1.15) arc (325:215:1.15) -- cycle;
    \fill[gray!22] (325:0.45) arc (325:395:0.45) -- (395:1.15) arc (395:325:1.15) -- cycle;
    \draw (0,0) circle (1.15);
    \draw (0,0) circle (0.45);
    \draw (35:0.45) -- (35:1.15);
    \draw (145:0.45) -- (145:1.15);
    \draw (215:0.45) -- (215:1.15);
    \draw (325:0.45) -- (325:1.15);
    \node at (90:0.82) {$A$};
    \node at (180:0.82) {$B$};
    \node at (270:0.82) {$C$};
    \node at (0:0.82) {$B$};
    \node at (0,-1.48) {(c)};
  \end{scope}
\end{tikzpicture}
\vspace{-10pt}
\caption{ {\bf Representative CMI partitions where $B$ separates $A$ and $C$}.  (a) The disk-annulus tripartition for the Markov length.  (b) The CMI partition for space-uniformity. (c) The Levin--Wen partition for topological entropy. For higher dimensions, one replace disks in the figure by balls.}
\vspace{-12pt}
\label{fig:cmi-partitions}
\end{figure}

In this work, we study the stability of the qLDPC code state defined by the normalized projector onto the code space, $\rho_0={P_{\mathcal C}}/{\Tr P_{\mathcal C}}$~\cite{misc1}. Maximally mixed logical degrees of freedom removes correlations specific to an encoded state while preserving the stabilizer structure; this allows us to isolate correlations generated by noise-induced stabilizer excitations.
We show that, under local Pauli noise, the CMI across a separating buffer decays exponentially with the buffer width, both for sufficiently weak and for sufficiently strong noise, under different assumptions. 
In the weak-noise regime, this holds for every full-system tripartition $A\sqcup B\sqcup C=\Lambda$ with $B$ separating $A$ from $C$, for which the reference CMI vanishes automatically. For tripartitions of a subsystem $A\sqcup B\sqcup C\subsetneq\Lambda$, we provide a counterexample showing that the LDPC conditions alone are insufficient, and establish stability under an additional space-uniformity assumption.
In the strong-noise regime, provided an additional local-normalizer completeness condition holds, we again get exponential decay of the CMI. In both regimes, the decay rate is independent of system size and controlled by the noise strength and bounded local parameters.



This provides an analytic explanation for previous numerical observations~\cite{Sang2024, yang2025topologicalmixedstatesphases} and establishes the stability of mixed-state fixed points under Pauli noise~\cite{yang2025topologicalmixedstatesphases}, paralleling corresponding stability results for pure-state phases~\cite{Hasting_quasiadiabaticcontinuation_2005,Deldpccodeasphaseofmatter2025,perturbativestabilityldpccodeli2025}. As an application, we establish rigorous regimes where the CMI decays exponentially for the $\mathbb{Z}_{2}$ Toric Code under $X$ dephasing, see Fig.~\ref{fig:Toric_Code}.
It further yields a quasi-local recovery scheme in which the recovery maps can be organized into a constant number of parallel layers of quasi-local Petz maps~\cite{Sang2024}. 
Related works have established exponential CMI decay for noisy finite-temperature code states in the disk--annulus geometry and, at high temperature, for arbitrary tripartitions using pinning and cluster-expansion methods~\cite{ZhangGopalakrishnan2025,zhang2025stabilitymixedstatephasesweak,Kuwaharagibbsstate2020}.


\emph{Setup.---}
Consider qudits of local dimension $d$ placed on a finite set of sites $\Lambda$.  
On each site $i \in \Lambda$, we define generalized Pauli operators $X_i,Z_i$ satisfying $Z_iX_i=\omega X_iZ_i$, where $\omega=e^{2\pi i/d}$. 
Let $\mathcal I$ be a finite index set and $\{S_a\}_{a\in\mathcal I}$ be a family of stabilizer generators, and let $\mathcal S=\langle S_a:a\in\mathcal I\rangle$ be the Abelian stabilizer group that they generate.
Our setting is restricted to qLDPC codes with each generator $S_a$ acting nontrivially on at most $\sweight$ qudits and each qudit contained in the support of at most $\frakd$ generators.
Whenever we refer to a ``check'', we always mean one of these local qLDPC stabilizer generators.
We define the qLDPC interaction graph by taking the qudits as vertices and connecting two qudits whenever they appear in a common stabilizer check. The distance between two qudits is then given by the shortest-path distance on this graph.
For qudit regions $R,R', B\subseteq\Lambda$, $\dist(R,R')_B$ is the minimum graph distance between a qudit in $R$ and a qudit in $R'$ through some buffer region $B$.

Throughout, the regions $A,B,C$ are subsets of qudits. When $B$ separates $A$ from $C$, we write $\partial B_A$ for the $A$-side boundary adjacent to $B$, namely $\partial B_A = \{ i \in A| \dist(i, B) = 1\}$.
Similarly, $\partial B_C$ denotes the corresponding $C$-side boundary. 
For a region $R$, the relevant stabilizer subgroup is
\begin{equation}
    \mathcal S(R):=\{s\in\mathcal S:\supp(s)\subseteq R\},
\end{equation}
which may be strictly larger than the subgroup generated by checks that are individually supported in $R$. 
This is because products of checks that extend outside $R$ may cancel their support on $\bar R$, leaving a stabilizer supported entirely within $R$.
For any state $\sigma$ and region $R\subseteq\Lambda$, we write $\sigma_R:=\operatorname{Tr}_{\bar R}\sigma$.
We take the reference state $\rho_0$ to be the normalized code-projector state:
\begin{equation} \label{eq:rho_def}
  P_{\mathcal {C}} = \frac{1}{|\mathcal{S}|}\sum_{g \in \mathcal S} g, \quad 
  \rho_0 = \frac{P_{\mathcal{C}}}{\text{Tr}(P_{\mathcal{C}})} = \frac{1}{d^{|\Lambda|}}\sum_{g \in \mathcal {S}} g.
\end{equation}
We consider noisy states $\rho=\noise(\rho_0)$ generated by independent onsite Pauli channels,
\begin{equation}
 \noise=\bigotimes_{i\in\Lambda}\noise_i,\quad
 \noise_i(\cdot)=\sum_{u\in G}q_i(u)P_i(u)(\cdot) P_i(u)^\dagger,
\end{equation}
where $G:=\mathbb Z_d^2$ and $P_i(a,b)\,{:=}\,X_i^aZ_i^b$. Uniform depolarizing noise corresponds to $q_i(0)=1-p$ and $q_i(u)=p/(d^2-1)$ for $u\neq0$.

For an error configuration $e=(e_i)_{i\in\Lambda}\in G^\Lambda$, let
$P(e):=\prod_iP_i(e_i)$ be the corresponding Pauli operator. We denote the image of the stabilizer group $\mathcal S$ under this Pauli-label representation by $\mathsf S\subseteq G^\Lambda$. For a label $x$ and region $Q$, $x|_Q$ denotes the restriction of $x$ to $Q$. 
We define the commutation character of $P(e)$
$\chi_e:\mathcal S\to\mathbb Z_d$ by
\begin{equation}
\label{eq:character}
 gP(e)=\omega^{\chi_e(g)}P(e)g,
 \qquad g\in\mathcal S.
\end{equation}
Let $E$ denote the random Pauli error configuration drawn from the noise distribution. For a region $R\subseteq\Lambda$, we define the corresponding stabilizer-syndrome random variable $\widehat\Sigma_R:=\chi_E|_{\mathcal S(R)}$, which records the commutation phases of the random error with all stabilizers supported in $R$.

\begin{figure}[!t]
    \centering
    \includegraphics[width=0.98\columnwidth]{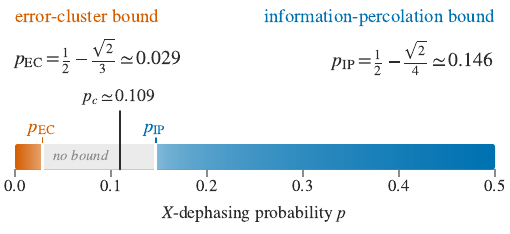}
    \caption{\textbf{Rigorous exponential CMI decay of the toric code.}  Under $X$-dephasing with probability $p$, the error-cluster bound (orange) applies for $p<p_{\mathrm{EC}}$, where $p_{\mathrm{EC}}$ solves $6\sqrt{p(1-p)}=1$. The information-percolation bound (blue) applies for $p>p_{\mathrm{IP}}$, where $p_{\mathrm{IP}}$ solves $(1-2p)^2=1/2$, i.e., where the contraction coefficient $(1-2p)^2$ of a noisy link reaches the square-lattice bond-percolation threshold. The two bounds leave an intermediate grey region uncovered, which includes the Nishimori critical point $p_c\simeq0.109$.}
    \label{fig:Toric_Code}
    \vspace{-12pt}
\end{figure}


\emph{Main results.---}
The normalized code-projector state $\rho_0$ is exactly Markov. For a full tripartition $A\sqcup B\sqcup C=\Lambda$ in which $B$ separates $A$ from $C$, as in Fig.~\ref{fig:cmi-partitions}(a), no stabilizer generator can intersect both $A$ and $C$. Every stabilizer can therefore be generated from stabilizers supported on $AB$ and $BC$, implying $\cmi_{\rho_0}=0$. 

Local Pauli noise generally generates a nonzero CMI, and the central question is whether these induced correlations remain short-ranged. We show that the CMI decays exponentially with $\dist(A,C)_B$ in two complementary noise regimes, governed by distinct mechanisms. At weak noise, the \emph{error-cluster bound} (Theorem~\ref{thm:cmi-stability}) uses the sparsity of errors: correlations between $A$ and $C$ require an error cluster to cross the buffer, which is exponentially unlikely. At strong noise, the \emph{information-percolation bound} (Theorem~\ref{thm:information_percolation}) relies on the complementary effect that the buffer retains very little information. Correlations between $A$ and $C$ can then persist only through a connected chain of buffer qudits that remain informative, whose probability is again exponentially suppressed.

These two regimes lie on opposite sides of the decoding transition, near which the Markov length is expected to diverge. For the $\mathbb Z_2$ toric code under $X$-dephasing, our bounds apply for $p<0.029$ and $p>0.146$, respectively, bracketing the Nishimori critical point $p_c\simeq0.109$.


\begin{theorem}[Error-Cluster Bound]
\label{thm:cmi-stability}
Let $\rhozero$ be the normalized code-projector state of a $\mathbb{Z}_d$ qLDPC code, as defined in Eq.~\eqref{eq:rho_def}, and let $\rho=\noise(\rhozero)$ for an onsite Pauli noise channel $\noise$.
For every tripartition $A\sqcup B\sqcup C=\Lambda$ such that $B$ separates $A$ from $C$, define
\begin{equation}
  \alpha:=\mathrm e\frakd\sweight
  \max_{i\in\Lambda}
  \left(
    \sum_{g\in G\setminus\{0\}}\sum_{u\in G}
    \sqrt{q_i(u)q_i(u+g)}
  \right).
  \label{eq:onsite-alpha}
\end{equation}
where $\mathrm e$ is Euler's number. If $0<\alpha<0.2$ (the upper bound is not tight), then
\begin{align}   \label{eq:main-bound}
I(A:C|B)_{\rho}
   &\leq  f( |\partial B_A|, |\partial B_C| )  e^{-c_0\dist(A,C)_B}, \nonumber \\
   f(x,y) &:= 3e^{c_0}\frakd^2  (x+y)^2 \log d
\end{align}
where $c_0=\frac12\log(1/\alpha)$ is system-size independent and
$\dist(A,C)_B-1$ is the minimum number of internal $B$ vertices in a crossing component. 
For uniform depolarizing noise,
$q(0)=1-p$ and $q(u)=p/(d^2-1)$ for $u\neq0$, and Eq.~\eqref{eq:onsite-alpha}
reduces to
\begin{equation}
  \alpha=\mathrm e\frakd\sweight
  \left(2\sqrt{p(1-p)(d^2-1)}+(d^2-2)p\right).
\end{equation}
\end{theorem}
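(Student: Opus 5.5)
We reduce the quantum CMI to a classical syndrome quantity and then run a cluster (polymer) expansion on the noise. The first step is to observe that $\rho=\noise(\rho_0)$ is diagonal in the stabilizer--syndrome basis. For any region $R$, $\rho_R$ is proportional to $\sum_{g\in\mathcal S(R)}\mathbb E[\omega^{\chi_E(g)}]\,g$, where $\mathcal S(R)$ is the relevant stabilizer subgroup. Hence $S(R)_\rho=|R|\log d-\log|\mathcal S(R)|+H(\widehat\Sigma_R)$: the entropy of $\rho_R$ is a constant set by the stabilizer structure plus the Shannon entropy of the restricted syndrome $\widehat\Sigma_R$.

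Combining this for $AB$, $BC$, $B$ and $ABC=\Lambda$, the constant parts reproduce the CMI of $\rho_0$. That CMI vanishes because every element of $\mathcal S$ factors into pieces supported in $AB$ and $BC$. We are left with
\[
I(A:C|B)_\rho=H(\widehat\Sigma_{AB})+H(\widehat\Sigma_{BC})-H(\widehat\Sigma_B)-H(\widehat\Sigma_\Lambda),
\]
which is a classical conditional mutual information $I(\Sigma_A:\Sigma_C|\Sigma_B)$ between the syndrome components. This uses that $\widehat\Sigma_\Lambda$ is determined by $(\widehat\Sigma_{AB},\widehat\Sigma_{BC})$ with consistent overlap $\widehat\Sigma_B$.

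Next we bound this classical CMI by a distance between the true joint law and a Markov surrogate. The surrogate is obtained by cutting the noise: draw independent error configurations on two overlapping halves of the buffer and glue them. A standard continuity argument then applies. Either use $I(X:Z|Y)\le$ a relative entropy to any Markov distribution, or bound it by $\log$ of the alphabet size times a total-variation/fidelity error. Here the alphabet factors $|\partial B_A|,|\partial B_C|$ and $\log d$ enter, since only checks touching the boundary layers carry the relevant syndrome coordinates. This is where we expect the prefactor $f(x,y)=3e^{c_0}\frakd^2(x+y)^2\log d$ to arise.

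The core estimate is that the dependence between $\Sigma_A$ and $\Sigma_C$ given $\Sigma_B$ is controlled by error clusters crossing $B$. Write the syndrome distribution via Fourier/character sums. The overlap between the distributions conditioned on different boundary data is a product over sites of terms $\sum_u\sqrt{q_i(u)q_i(u+g)}$, with $g\ne 0$ the difference of Pauli labels. This is a Bhattacharyya-coefficient computation: shifting a configuration by a nonzero label $g$ at site $i$ costs exactly that factor. Expand the deviation from the Markov surrogate as a sum over connected sets (polymers) of qudits carrying nonzero relative labels that connect $\partial B_A$ to $\partial B_C$ through $B$. Each such set must contain at least $\dist(A,C)_B-1$ internal buffer qudits. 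On a graph of degree at most $\frakd\sweight$, the number of connected sets of size $k$ containing a given vertex is at most $(\mathrm e\frakd\sweight)^k$. The weighted sum is therefore bounded by $\sum_{k\ge \dist}\alpha^k$, summed over starting points in $\partial B_A\cup\partial B_C$.

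Square roots of this bound are taken when passing from fidelity to trace distance. This gives decay $\alpha^{\dist/2}=e^{-c_0\dist}$, and the condition $\alpha<0.2$ makes the geometric series and its square-root conversions converge with an explicit constant such as $3$. The uniform depolarizing formula is then a direct evaluation. For $g\neq0$, the pairs $(u,u+g)$ with one of them equal to $0$ contribute $2\sqrt{(1-p)p/(d^2-1)}$. The remaining $d^2-2$ pairs contribute $p/(d^2-1)$. Summing over the $d^2-1$ values of $g$ gives $2\sqrt{p(1-p)(d^2-1)}+(d^2-2)p$.

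The main obstacle is making the cluster expansion rigorous for the conditional quantity. The relevant object is the dependence of $\Sigma_C$ on $\Sigma_A$ after conditioning on $\Sigma_B$, and that conditioning is nonlocal. Our first attempt is to avoid conditioning altogether by a coupling: bound the CMI by the probability, under a maximal coupling of two noise copies differing only near $A$, that the discrepancy region connects to $C$. If that works, the Bhattacharyya weights arise naturally from the coupling, and the remaining work is the counting of connected sets.
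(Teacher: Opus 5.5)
\paragraph{The core estimate is missing.}
You leave open the step that controls the conditional dependence, and the route you sketch for it would not work. Once the noise is confined to $B$, the errors are independent across sites. The two interface syndrome registers are therefore already independent before any conditioning. All of the $A$--$C$ dependence is produced by conditioning on the bulk syndrome, which is a global linear hard constraint. A ``Markov surrogate'' built by gluing independent noise on two halves of the buffer just reproduces the unconditioned error law, so it says nothing about this dependence. Your fallback, a disagreement-percolation coupling of two copies differing near $A$, is never constructed. The standard per-site disagreement estimate is also unavailable, because under parity constraints single-site conditional laws can be deterministic.

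The missing idea is a two-replica switching coupling. Fix the syndrome $\tau$ of the checks supported in $B$. Draw $e^{(1)},e^{(2)}$ independently from the conditional law $q_B^\tau$ and let $\Delta=e^{(2)}-e^{(1)}$, so that $\sigma_B(\Delta)=0$. A check's support is a clique in the interaction graph, so no check supported in $B$ meets two connected components of $\supp\Delta$. Hence every component is separately syndrome-neutral. Swapping the two replicas' labels on the components adjacent to $\partial B_C$ but not to $\partial B_A$ is then a weight-preserving involution of $(q_B^\tau)^{\otimes2}$. This couples the conditional joint law of the two interface registers with the product of its marginals, and the two can differ only if some component of $\supp\Delta$ touches both interfaces.

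The Bhattacharyya weight then comes from a ratio bound, not from a Fourier or cluster expansion. Fix a connected $\Gamma$ and a nonzero $\gamma$ with $\sigma_B(\gamma)=0$, and write $W_\Gamma(x):=\prod_{i\in\Gamma}q_i(x_i)$. Shifting $e^{(2)}_\Gamma$ by $\gamma$ preserves the conditioning. So, given $e^{(1)}$ with $x=e^{(1)}_\Gamma$, $\Pr(e^{(2)}_\Gamma=x+\gamma)\le W_\Gamma(x+\gamma)/[W_\Gamma(x+\gamma)+W_\Gamma(x)]\le\sqrt{W_\Gamma(x+\gamma)/W_\Gamma(x)}$. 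Averaging over $e^{(1)}$ gives $\prod_{i\in\Gamma}\sum_u\sqrt{q_i(u)q_i(u+\gamma_i)}$, and only at this point does your lattice-animal count apply.

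\paragraph{Three further steps are missing or misattributed.}
First, you never apply data processing, $I(A:C|B)_{\noise(\rho_0)}\le I(A:C|B)_{\noise_B(\rho_0)}$. With noise on all of $\Lambda$, checks deep inside $A$ and $C$ carry random syndromes, and any continuity bound picks up factors of $|A|$ and $|C|$. Your claim that only boundary-layer checks carry the relevant coordinates (at most $\frakd|\partial B_A|$ and $\frakd|\partial B_C|$ of them) holds only after this reduction.

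Second, even after it, $\widehat\Sigma_B$ contains syndromes of nonlocal $B$-supported stabilizers formed from products of $A$-side or $C$-side checks, and the clique argument does not apply to them. You must first use the chain rule to move them into the side registers, obtaining $I(\widehat\Sigma_{AB}:\widehat\Sigma_{BC}|\widehat\Sigma_B)\le I(\Sigma_{\mathrm{int},\partial B_A}:\Sigma_{\mathrm{int},\partial B_C}|\Sigma_{\mathrm{bulk},B})$. The conditioning then involves only local checks in $B$.

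Third, the factor $\tfrac12$ in $c_0$ does not come from converting fidelity to trace distance. The coupling bounds the $\tau$-averaged total-variation distance linearly, by $\min(\Omega_A,\Omega_C)\,\alpha^{\dist(A,C)_B-1}/(1-\alpha)$, where $\Omega_A,\Omega_C$ count the interface-register coordinates. The square root enters through the $h_2(\delta)\le2\log2\,\sqrt{\delta}$ term of the Audenaert-type continuity bound, after applying Jensen over $\tau$.
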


\begin{proof}
Here we provide a sketch of the proof, with details given in \appref{main-proof}.
First, we use the data processing inequality to remove the noise acting on $A$ and $C$:
\begin{align}
    I(A:C|B)_{ {\cal N}(\rho_0)} \leq I(A:C|B)_{{\cal N}_B(\rho_0)},
\end{align}
where $\noise_B$ denotes the product noise channel restricted to $B$. After this, any possible dependence on the sizes of $A$ and $C$ is reduced to the sizes of $\partial B_A$ and $\partial B_C$.

From Eq.~\eqref{eq:character}, each error realization selects a stabilizer sector determined by its syndrome, while the logical block remains maximally mixed.
For a qudit region $R \subseteq \Lambda$, averaging over errors therefore produces an orthogonal mixture of stabilizer sectors labeled by $\widehat\Sigma_R$, all with the same maximally mixed logical block.
It follows that the change in entropy on $R$ under the noise channel is entirely determined by the stabilizer-syndrome distribution induced by the errors:
\begin{align}
S\left((\mathcal N_B(\rho_0))_R\right)-S\left((\rho_0)_R\right)&=H(\widehat\Sigma_R),
\label{eq:regional-syndrome-entropy-increment}
\end{align}
where $H$ denotes the Shannon entropy.

To proceed, we introduce the following syndrome variables. $\Sigma_{\textrm{int},\partial B_A}$ records the syndromes of ($i$) checks crossing the $A$--$B$ interface and ($ii$) $B$-supported stabilizers generated by products of checks on the $A$ side. We define $\Sigma_{\textrm{int},\partial B_C}$ analogously. By contrast, $\Sigma_{\textrm{bulk},B}$ records the syndromes of checks supported entirely within $B$, while $\widehat\Sigma_B$ may additionally contain syndromes of possibly nonlocal $B$-supported stabilizers generated through cancellations of checks across either interface. With these definitions, the noisy-state CMI obeys
\begin{align}
     &\cmi_{\mathcal N_B(\rho_0)} \nonumber \\
 &=
 H(\widehat\Sigma_{AB})+H(\widehat\Sigma_{BC})
 -H(\widehat\Sigma_B)-H(\widehat\Sigma_{ABC}) \nonumber
 \\
 &=
 I(\widehat\Sigma_{AB}:\widehat\Sigma_{BC}
   |\widehat\Sigma_B)  = I (\Sigma_{\textrm{bulk},A} : \Sigma_{\textrm{bulk},C} | \widehat\Sigma_{B})   \nonumber \\
 &\leq I(\Sigma_{\textrm{int}, \partial B_A}:
   \Sigma_{\textrm{int}, \partial B_C}|\Sigma_{\textrm{bulk}, B}). 
\label{eq:regional-syndrome-cmi-difference}
\end{align}
The first equality follows from Eq.~\eqref{eq:regional-syndrome-entropy-increment}.
The second follows because $\widehat\Sigma_{AB}$ and $\widehat\Sigma_{BC}$ share the common restriction $\widehat\Sigma_B$, while their union determines $\widehat\Sigma_{ABC}$.
For the third equality, note that the noise acts only on $B$. Consequently, after conditioning on $\widehat\Sigma_B$, the information in $\widehat\Sigma_{AB}$ that is not already contained in $B$ is precisely $\Sigma_{\textrm{bulk},A}$, and similarly for the $C$-side.
Finally, $\widehat\Sigma_B$ can be decomposed into $\Sigma_{\textrm{bulk},B}$ together with $B$-supported stabilizers generated by checks extending from $A$ or $C$. These additional stabilizers are associated with the two interfaces and can therefore be moved from the conditioning variable to the corresponding arguments of the conditional mutual information. The chain rule for CMI then implies that this operation can only increase the CMI, yielding the final inequality. 

We remark that the condition that $A\sqcup B\sqcup C = \Lambda$ is important. First, it implies $I(A:C|B)_{\rho_0} =0$; second, it guarantees that there is no non-local check across the region $ABC$ from tracing out the complement. A more detailed explanation is given in \appref{main-proof}.

Instead of evaluating the entropy terms in the last line of Eq.~\eqref{eq:regional-syndrome-cmi-difference} directly, we first bound the total variation distance $d_{\rm TV}$ between the conditional distribution $\PD(\Sigma_{\textrm{int},\partial B_A},\Sigma_{\textrm{int},\partial B_C}|\Sigma_{\textrm{bulk},B}=\tau)$ and the product distribution $\PD(\Sigma_{\textrm{int},\partial B_A}|\Sigma_{\textrm{bulk},B}=\tau)\PD(\Sigma_{\textrm{int},\partial B_C}|\Sigma_{\textrm{bulk},B}=\tau)$.
Then, we use the bound on $d_{\rm TV}$ in an entropy continuity bound~\cite{Audenaert_2007, Winter_2016} on the mutual information for each conditioned syndrome $\tau$.

We bound $d_{\rm TV}$ by constructing an explicit coupling between the conditional joint distribution and the product of its marginals. 
Specifically, we consider $\overline{\PD}_\tau(X,Y)$ such that $(i)$ its marginals obey 
\begin{align}
    \overline{\PD}_\tau(X) &= \PD(\Sigma_{\textrm{int},\partial B_A},\Sigma_{\textrm{int},\partial B_C} | \Sigma_{\textrm{bulk},B} = \tau), \nonumber \\
    \overline{\PD}_\tau(Y)&=\PD(\Sigma_{\textrm{int},\partial B_A} |\Sigma_{\textrm{bulk},B} =  \tau) \nonumber \\
    &\quad\cdot \PD(\Sigma_{\textrm{int},\partial B_C} | \Sigma_{\textrm{bulk},B} = \tau),
\end{align}
and $(ii)$ $X\,{=}\,Y$ unless the difference between the underlying qudit error configurations $e$ contains a connected component with support on both $\partial B_A$ and $\partial B_C$.
The syndrome distribution $\overline{\PD}_\tau(X)$ is determined by the underlying error distribution $\PD(e)$ via the check matrices determined by the stabilizers~\cite{qLDPC2021PRXQ}.  
The remaining technical challenge for the proof is constructing $\overline{\PD}_\tau(X,Y)$ to satisfy $(ii)$, which is detailed in \appref{main-proof}.


After constructing $\overline{\PD}_\tau(X,Y)$, we utilize the \emph{coupling lemma}~\cite{couplinglemma}, which states that the total variation distance $d_{\rm TV}$ of the two coupled distributions is bounded as 
\begin{align}
    d_{\rm TV} \leq \sum_{x \neq y} \overline{\PD}_\tau(X=x,Y=y),
\end{align}
where, by construction, the RHS is determined by the number of error configurations with at least one connected component supported on both $\partial B_A$ and $\partial B_C$. 
The LDPC assumptions bound the number of such connected components of length $\ell$ by a factor exponential in $\ell$, while the probability of each component is suppressed by a factor of order $p^\ell$.  
For $p<p_{\mathrm{th}}$ with $p_{\mathrm{th}}$ depending only on the LDPC constants, the probabilistic suppression dominates the combinatorial growth.  
Summing over all penetrating connected components gives an exponentially small bound in $\dist(A,C)_B$.

The technique used to prove Theorem~\ref{thm:cmi-stability} extends naturally to geometrically $k$-local Pauli noise. 
See \appref{proof-k-local-corollary} for details.
\end{proof}

At strong noise, error clusters are no longer dilute. 
We instead bound the information that can propagate across the noisy buffer. 
To do so, we first express the quantum CMI in terms of a classical CMI in which the noise acts through independent onsite Pauli channels.


By the data processing inequality, we may restrict the noise to $B$ and consider $\widetilde\rho=\mathcal N_B(\rho_0)$. 
To represent the CMI of $\mathcal N_B(\rho_0)$ classically, let $E$ denote the Pauli-error label supported in $B$, and introduce an independent random label $\mathsf G$ sampled uniformly from the Pauli normalizer $\mathsf N=\mathsf S^\perp$ independently of $E$. 
We then define the site-resolved classical record $Y=\mathsf G+E$. 
Adding $\mathsf G$ preserves the stabilizer syndromes while making the restricted record $Y_Q:=Y|_Q$ uniform over all Pauli labels consistent with the syndrome in $Q$. 
This gives $S(\widetilde\rho_Q)=H(Y_Q)-|Q|\log d$. These volume terms cancel in the CMI, yielding $I(A:C\mid B)_{\widetilde\rho}=I(Y_A:Y_C\mid Y_B)$.

Under the local-normalizer completeness assumption of Theorem \ref{thm:information_percolation}, $\mathsf G|_B$ can be constructed from independent uniform coefficients $h_\mu$ associated with local normalizer generators $t_\mu$.
The CMI is then bounded by the conditional correlations between the coefficients of generators touching $A$ and those touching $C$. Each buffer qudit $i\in B$ provides a noisy observation of the local coefficients acting on it,
\begin{equation}
    Y_i=F_i(h)+E_i,\quad \,
    F_i(h):= \hspace{-15pt} \sum_{\mu:\,i\in\supp(t_\mu)} \hspace{-15pt} h_\mu(t_\mu)_i.
\end{equation}
Although the coefficients $h_\mu$ are initially independent, conditioning on overlapping observations can correlate them across the buffer. Strong noise suppresses these correlations by obscuring the observed combinations.
Thus, at strong noise, the relevant small parameter is not $p$, but the information retained by each noisy observation.

To quantify this information loss, consider
the additive channel $X\mapsto X+E_i$ with $E_i\sim q_i$.
Its strong-data-processing coefficient $\eta_i$ is the smallest constant satisfying
\begin{equation}
    I(U:X+E_i)\leq \eta_i I(U:X)
    \label{eq:lnhn-eta}
\end{equation}
for every joint distribution over an auxiliary variable $U$ and an input label $X\in G$, with $E_i$ independent of $(U,X)$. Thus, $\eta_i$ measures the largest fraction of mutual information that this channel can preserve and is small in the strong noise regime. 

Finally, the information-percolation inequality of Polyanskiy and
Wu~\cite{PolyanskiyWu2020} then converts this local contraction of mutual information into a bound on the classical CMI through a percolation process.
In this process, each buffer qudit $i$ is independently open with probability $\eta_i$, and an open qudit connects the coefficients of the generators acting on it. The CMI is bounded by $\log d$ times the expected number of generators at one interface connected to the other through open qudits. Since each generator acts on a bounded number of qudits and each qudit participates in a bounded number of generators, the number of connecting paths grows at most exponentially with their length. If the opening probabilities suppress long paths faster than their number grows, the resulting CMI bound decays exponentially with the distance across the buffer.

\begin{theorem}[Information-Percolation Bound]
\label{thm:information_percolation}
Let $\rho=\noise(\rhozero)$ for an onsite Pauli noise channel $\mathcal{N}$, and let $A,B,C\subseteq\Lambda$ be pairwise disjoint, nonempty regions such that
\begin{equation}
    I(A:C\mid B)_{\rho_0}=0.
    \label{eq:lnhn-markov}
\end{equation}
Assume that there exists a subgroup $\mathsf{H}$ with $\mathsf S \subseteq \mathsf H \subseteq \mathsf N$, such that:
\begin{enumerate}
    \item $\mathsf H$ admits a generating set $\{t_\mu\}$, with each generator acting on at most $w_{\mathsf H}$ qudits and each qudit participating in at most $D_{\mathsf H}$ generators.
    \item The generators capture all normalizer degrees of freedom visible in the buffer $B$, i.e. $\mathsf H|_B=\mathsf N|_B$.
\end{enumerate}
Let $\Gamma_{\mathsf H}$ be the interaction graph induced by the generators $\{ t_\mu \}$, and suppose $B$ separates $A$ from $C$ in this graph. We denote by $\dist_{\mathsf H}(A,C)_B$ the length of the shortest $A$-to-$C$ path whose internal vertices lie in $B$, and by $\partial_{\mathsf H} A$ and $\partial_{\mathsf H} C$ the corresponding interfaces adjacent to $B$. 
Setting $\bar\eta:=\max_{i\in B}\eta_i$ and $\alpha_{\mathsf H}:=(w_{\mathsf H}-1)(D_{\mathsf H}-1)\bar\eta$, if $0<\alpha_{\mathsf H}<1$, we find that
\begin{equation}
 \begin{split}
 I(A:C\mid B)_\rho\leq{}&
 D_{\mathsf H}(\log d)\min\{|\partial_{\mathsf H} A|,|\partial_{\mathsf H} C|\}\\
 &\times e^{-c_{\mathsf H}[\dist_{\mathsf H}(A,C)_B-1]}.
 \end{split}
 \label{eq:lnhn-spatial}
\end{equation}
where $c_{\mathsf H}:=\log(1/\alpha_{\mathsf H})>0$ is independent of system size. The proof is given in Appendix~\ref{app:lnhn-proof}.
\end{theorem}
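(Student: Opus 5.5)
The plan follows the route sketched before the statement, in three stages: reduce to a classical CMI, reparametrize the buffer by independent local coefficients, and then apply the Polyanskiy--Wu information-percolation inequality.

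\emph{Step 1: reduction to a classical CMI.} By data processing, the noise on $A$ and $C$ is removed, so it suffices to bound $I(A:C\mid B)_{\widetilde\rho}$ with $\widetilde\rho=\noise_B(\rho_0)$. Introduce $Y=\mathsf G+E$ with $\mathsf G$ uniform on $\mathsf N$. I will verify that for every region $Q$, $\widetilde\rho_Q$ is an orthogonal mixture of syndrome sectors with a maximally mixed complement. This gives $S(\widetilde\rho_Q)=H(Y_Q)-|Q|\log d$, and hence $I(A:C\mid B)_{\widetilde\rho}=I(Y_A:Y_C\mid Y_B)$. The hypothesis $I(A:C\mid B)_{\rho_0}=0$ enters here: at zero noise, $Y_A$ and $Y_C$ are conditionally independent given $Y_B$, because $\mathsf G|_{ABC}$ has this Markov structure.

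\emph{Step 2: local parametrization.} Using $\mathsf H|_B=\mathsf N|_B$ and $\mathsf S\subseteq\mathsf H$, replace $\mathsf G$ by $\sum_\mu h_\mu t_\mu$ with independent uniform $h_\mu$. I need to argue that the law of $(Y_A,Y_B,Y_C)$, or at least the CMI, is unchanged. My approach is to write $\mathsf G=\mathsf G_{\mathsf H}+\mathsf G'$, where $\mathsf G'$ ranges over coset representatives of $\mathsf N/\mathsf H$. Condition 2 lets $\mathsf G'$ be chosen trivial on $B$, and the zero-noise Markov property should then show that $\mathsf G'$ only adds information local to $A$ or $C$, which does not increase the conditional dependence. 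If this fails as stated, I would instead choose the decomposition $\mathsf G'=\mathsf G'_A+\mathsf G'_C$ supported off $B$, justified by the Markov hypothesis. Either way, this yields
\begin{equation}
I(Y_A:Y_C\mid Y_B)\le I(h_{\partial A}:h_{\partial C}\mid Y_B),
\end{equation}
where $h_{\partial A}$ collects the coefficients of generators touching $A$, and $Y_B$ consists of independent noisy observations $Y_i=F_i(h)+E_i$.

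\emph{Step 3: percolation.} This is now a Bayesian network: the $h_\mu$ are roots, and each $Y_i$ is a child of the $\mu$ with $i\in\supp(t_\mu)$, passed through a channel with SDPI constant $\eta_i$. Polyanskiy--Wu then bounds the quantity by $\log d$ times the expected number of $\partial_{\mathsf H}A$ coefficients, taking whichever side is smaller, that are connected to $\partial_{\mathsf H}C$ through qudits each open independently with probability $\eta_i$. A union bound over self-avoiding paths in $\Gamma_{\mathsf H}$ finishes the proof. Starting from a generator at the interface, there are at most $D_{\mathsf H}$ first steps, and each subsequent step offers at most $(w_{\mathsf H}-1)(D_{\mathsf H}-1)$ continuations. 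A path must pass through at least $\dist_{\mathsf H}(A,C)_B-1$ open buffer qudits. Summing the resulting geometric series in $\alpha_{\mathsf H}<1$ gives the stated bound, up to a constant factor $1/(1-\alpha_{\mathsf H})$ that I would absorb by careful counting of the first step.

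I expect Step 2 to be the main obstacle. Showing that the normalizer degrees of freedom outside $\mathsf H$ cannot carry correlations across $B$ requires the Markov hypothesis in an essential way, and the conditional-independence decomposition has to be checked carefully.
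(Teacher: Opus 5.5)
\emph{Gap in Step~2.} Your overall route matches the paper's: data processing down to $\noise_B(\rho_0)$, the record $Y=\mathsf G+E$, a local hidden-variable representation, Polyanskiy--Wu, and path counting. However, Step~2, which you flag yourself, is asserted rather than proved, and it is exactly where the two hypotheses must be combined. Write $\mathsf G=\sum_\mu h_\mu t_\mu+\mathsf G'$ with $\mathsf G'|_B=0$. The conditional data-processing step $I(Y_A:Y_C\mid Y_B)\le I(h_{V_A}:h_{V_C}\mid Y_B)$ then needs $\mathsf G'|_A$ and $\mathsf G'|_C$ to be independent of each other and of $(h,E)$.

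Condition~2 only places $\mathsf G'$ in $\mathsf N(A\cup C\cup D)$, where $D:=\Lambda\setminus(A\cup B\cup C)$. A normalizer element supported on $A\cup C\cup D$ could a priori correlate $Y_A$ and $Y_C$ directly, bypassing $B$. Choosing coset representatives does not help: a transversal of $\mathsf N/\mathsf H$ is not a group, and nothing makes its two restrictions independent. What is needed is the two-sided decomposition $\mathsf N=\mathsf H+\mathsf N(AD)+\mathsf N(CD)$. The pieces must be normalizer elements supported on $AD$ and $CD$, not on $A$ and $C$ alone. With independent uniform coefficients on generators of all three pieces, $\mathsf G$ is uniform on $\mathsf N$ and the data-processing step goes through.

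\emph{How to close it.} Your own Step~1 remark suffices. Completeness and $\mathsf H\subseteq\mathsf N$ give $\mathsf N=\mathsf H+\mathsf N(\Lambda\setminus B)$. At $E=0$ the entropy identity turns the hypothesis into $I(\mathsf G_A:\mathsf G_C\mid\mathsf G_B)=0$. Conditioning on $\mathsf G_B=0$ makes $\mathsf G$ uniform on $\mathsf N(\Lambda\setminus B)$. So the uniform law on the subgroup $K=\{(n|_A,n|_C):n\in\mathsf N(\Lambda\setminus B)\}$ has independent marginals, which forces $K=\pi_A(K)\times\pi_C(K)$. Hence each such $n$ admits $n'\in\mathsf N(\Lambda\setminus B)$ with $n'|_A=n|_A$ and $n'|_C=0$, i.e.\ $n'\in\mathsf N(AD)$ and $n-n'\in\mathsf N(CD)$.

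The paper proves the same lemma by duality instead. It sets $\mathsf T:=\mathsf H+\mathsf N(AD)+\mathsf N(CD)$ and shows $\mathsf T^\perp\subseteq\mathsf S$, using $\mathsf S(ABC)=\mathsf S(AB)+\mathsf S(BC)$ and the dual form $\mathsf H^\perp(B)=\mathsf S(B)$ of completeness. Your route is shorter, but it has to be written out.

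\emph{Constant in Step~3.} The factor $1/(1-\alpha_{\mathsf H})$ cannot in general be absorbed by recounting the first step, since it blows up as $\alpha_{\mathsf H}\to1$, and the stated bound has no such factor. The paper never sums over lengths. Set $r_{\mathsf H}=\dist_{\mathsf H}(A,C)_B-1$; every crossing contains a prefix ending at its $r_{\mathsf H}$-th buffer qudit. From a fixed interface generator there are at most $(w_{\mathsf H}-1)[(D_{\mathsf H}-1)(w_{\mathsf H}-1)]^{r_{\mathsf H}-1}\le[(w_{\mathsf H}-1)(D_{\mathsf H}-1)]^{r_{\mathsf H}}$ such prefixes, each open with probability at most $\bar\eta^{\,r_{\mathsf H}}$. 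The prefactor $D_{\mathsf H}$ comes from counting interface generators per interface qudit, not from first steps out of a generator: a generator offers at most $w_{\mathsf H}-1$ first buffer qudits.
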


\paragraph{Counterexample for arbitrary tripartitions.}
We have established the stability of CMI for full-system tripartitions $A\sqcup B\sqcup C=\Lambda$, with $B$ separating $A$ from $C$.
For tripartitions of a subregion, however, the CMI need not remain stable under noise even when $B$ separates $A$ from $C$ and the code has LDPC structure.
We give the following counterexample in two dimensions illustrated in Fig.~\ref{fig:translation-counterexample}: place a qubit at each vertex of a finite square lattice and take the translation-invariant stabilizers
\begin{equation}
 S_{x,y}:=
 X_{x,y}X_{x+1,y}X_{x,y+1}X_{x+1,y+1}.
 \label{eq:translation-counterexample-check}
\end{equation}
Every qubit belongs to four weight-four checks. Take a subsystem with $H\geq3$ and $L\geq2$ as shown in Fig.~\ref{fig:translation-counterexample} and choose
\begin{equation}
\begin{aligned}
 A_L&=\{(0,y):0\leq y\leq H\}
      \cup\{(x,H):1\leq x\leq L\},\\
 B_L&=\{(x,0):1\leq x\leq L\},\\
 C_L&=\{(L,1),(L+1,0),(L+1,1)\}.
\end{aligned}
\end{equation}
Then $\dist(A_L,C_L)_{B_L}=L+O(1)$ and 
\begin{equation}
 T_L:=\prod_{x=0}^{L-1}\prod_{y=0}^{H-1}S_{x,y}
 =X_{0,0}X_{L,0}X_{0,H}X_{L,H}
 \in\mathcal S(A_LB_L)
\end{equation}
while $\mathcal S_{B_LC_L}:=\{\mathbbm{1}, S_{L,0}\}$ and $\mathcal S(B_L)=\{\mathbbm{1}\}$.

For every $1\leq j\leq L$, the rectangular product
\begin{equation}
 T_j:=\prod_{x=0}^{j-1}\prod_{y=0}^{H-1}S_{x,y}
 =X_{0,0}X_{j,0}X_{0,H}X_{j,H}
 \in\mathcal S(A_LB_L)
\end{equation}
has three vertices in $A_L$ and only $(j,0)$ in $B_L$.
Apply independent $Z$ dephasing with probability $p$ on $B_L$.
The syndrome of $T_j$ is the error bit at $(j,0)$, so the
$A_LB_L$ syndrome determines all $L$ independent buffer-error bits.
The $B_LC_L$ syndrome records only the bit at $q=(L,0)$ through
$S_{L,0}$, while the $B_L$ syndrome is trivial.
The $A_LB_LC_L$ syndrome also determines the entire buffer-error
string and contains no additional randomness. The entropy increments
on $A_LB_L$, $B_LC_L$, $B_L$, and $A_LB_LC_L$ are therefore
$Lh_2(p)$, $h_2(p)$, $0$, and $Lh_2(p)$, respectively.
Thus
\begin{equation}
 I(A_L:C_L|B_L)_{\mathcal N(\rho_0)}
 -I(A_L:C_L|B_L)_{\rho_0}=h_2(p),
 \label{eq:translation-counterexample-cmi}
\end{equation}
independently of $L$.  For noise on the entire region $ABC$, the system-size independent CMI increasement by the noise on $B$-side still have an $L$-independent contribution, see \appref{full-dephasing-counterexample} for exact calculations.

\begin{figure}[t]
\centering
\begin{tikzpicture}[
    x=0.4cm,
    y=0.4cm,
    every node/.style={font=\scriptsize},
    q/.style={circle,draw=black,fill=white,inner sep=1.25pt},
    Aq/.style={circle,draw=red!70!black,fill=red!20,inner sep=1.6pt},
    Bq/.style={circle,draw=black,fill=gray!35,inner sep=1.6pt},
    Cq/.style={circle,draw=blue!70!black,fill=blue!20,inner sep=1.6pt}
]

  \def\panelboxheight{4}

  \foreach \panel/\length/\panelheight/\xoff in {
    a/2/3/0,
    b/8/4/8
  }{

    \pgfmathsetmacro{\yoff}{(\panelboxheight-\panelheight)/2}

    \begin{scope}[shift={(\xoff,\yoff)}]

      \pgfmathtruncatemacro{\rightedge}{\length+1}

      \draw[step=1,gray!45,very thin]
        (0,0) grid (\rightedge,\panelheight);

      \draw[orange!80!black,dashed,line width=1pt]
        (0,0) rectangle (\length,\panelheight);

      \draw[blue!70!black,densely dashed,line width=1pt]
        (\length,0) rectangle (\rightedge,1);

      \foreach \x in {0,...,\rightedge}{
        \foreach \y in {0,...,\panelheight}{
          \node[q] at (\x,\y) {};
        }
      }

      \foreach \y in {0,...,\panelheight}{
        \node[Aq] at (0,\y) {};
      }

      \foreach \x in {1,...,\length}{
        \node[Aq] at (\x,\panelheight) {};
      }

      \foreach \x in {1,...,\length}{
        \node[Bq] at (\x,0) {};
      }

      \node[Cq] at (\length,1) {};
      \node[Cq] at (\rightedge,0) {};
      \node[Cq] at (\rightedge,1) {};

      \foreach \sx/\sy in {0/0,\length/0,0/\panelheight,\length/\panelheight}{
        \draw[orange!80!black,line width=0.9pt] (\sx,\sy) circle[radius=0.14];
      }
      \foreach \sx/\sy in {\length/1,\rightedge/0,\rightedge/1}{
        \draw[blue!70!black,line width=0.8pt]
          ({\sx-0.20},{\sy-0.20}) rectangle ({\sx+0.20},{\sy+0.20});
      }
      \draw[black,thin] (\length,-0.23) -- ({\length+0.45},-0.60)
        node[right,fill=white,inner sep=1pt] {$q$};

      \draw[{Latex}-{Latex}]
        (0,-1.05) --
        node[below] {$L=\length$}
        (\length,-1.05);

      \draw[{Latex}-{Latex}]
        (-0.75,0) --
        node[left] {$H$}
        (-0.75,\panelheight);

      \node[red!70!black]
        at (0,{\panelheight+0.65}) {$A_L$};

      \node
        at ({\length/2},-0.55) {$B_L$};

      \node[blue!70!black]
        at ({\rightedge+0.65},0.55) {$C_L$};

    \end{scope}

    \node[
      anchor=south west,
      font=\bfseries
    ] at ({\xoff-2.7},{\panelboxheight+0.3}) {(\panel)};
  }

  \node at (13,5.0) {$D=\Lambda\setminus A_LB_LC_L$};

\end{tikzpicture}
\vspace{-8pt}
\caption{ {\bf Counterexample.} In general, CMI stability is not guaranteed for an arbitrary tripartition of the form in Fig.~\ref{fig:cmi-partitions}(b). 
(a) A small instance with $L=2$ and $H=3$. (b) A larger instance with $L=8$ and $H=4$. White vertices belong to the unobserved region $D$; the displayed patches are embedded in the square torus. 
Orange rings mark the four-site support of $T_L$. The orange dashed rectangle encloses the plaquettes multiplied to obtain $T_L$.}
\label{fig:translation-counterexample}
\vspace{-12pt}
\end{figure}

For subsystem tripartitions, products of local checks can cancel their support outside $ABC$, leaving stabilizers in $\mathcal S(AB)$ and $\mathcal S(BC)$ that overlap on a small region $Q\subseteq B$. An error on $Q$ can affect both side syndromes while remaining invisible to $\mathcal S(B)$, generating an $O(1)$ conditional correlation independently of the buffer width. The qLDPC conditions alone do not exclude this mechanism, so CMI stability for arbitrary tripartitions requires additional assumptions.

\begin{corollary}[Stability of the space-uniformity CMI]
\label{cor:space-uniformity-stability}
Let $\rho_0$ be a $\mathbb Z_d$ qLDPC stabilizer code state satisfying $I(A:C|B)_{\rho_0}=0$ for every partition having the same topology as Fig.~\ref{fig:cmi-partitions}(b).
Then there exists a system-size-independent weak-noise threshold such that, for sufficiently weak onsite Pauli noise $\rho=\mathcal N(\rho_0)$,
\begin{align} \label{eq:space-uniformity-stability}
I(A:C|B)_\rho \leq K\bigl(|\partial B_A|+|\partial B_C|\bigr)^2 e^{-c\operatorname{dist}(A,C)_B},
\end{align}
for every partition of the same topology, where $K,c>0$ are independent of the system size.
\end{corollary}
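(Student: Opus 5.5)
The plan is to reduce the subsystem tripartition $ABC\subsetneq\Lambda$ (with complement $D$, as in Fig.~\ref{fig:cmi-partitions}(b)) to the same syndrome-coupling argument used for Theorem~\ref{thm:cmi-stability}. The reference CMI is no longer automatically zero, so we must control the extra stabilizers that cancellations through $D$ create. As a first step, use data processing to restrict the noise to $B$. Then repeat the sector decomposition of Eq.~\eqref{eq:regional-syndrome-entropy-increment} for the marginals on $AB$, $BC$, $B$ and $ABC$. This gives
\begin{equation}
I(A:C|B)_{\noise_B(\rho_0)}-I(A:C|B)_{\rho_0}=H(\widehat\Sigma_{AB})+H(\widehat\Sigma_{BC})-H(\widehat\Sigma_B)-H(\widehat\Sigma_{ABC}),
\end{equation}
and the hypothesis $I(A:C|B)_{\rho_0}=0$ removes the first term on the left. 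The syndrome entropies involve the groups $\mathcal S(R)$ for $R\subseteq ABC$, which may contain nonlocal elements such as the operator $T_L$ of the counterexample.

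The key step is to extract from the Markov hypothesis, applied to \emph{every} partition of the same topology, a structural statement. For a stabilizer code state, $I(A:C|B)_{\rho_0}=0$ is equivalent to a rank identity for the restricted groups:
\begin{equation}
\log|\mathcal S(AB)|+\log|\mathcal S(BC)|=\log|\mathcal S(B)|+\log|\mathcal S(ABC)|.
\end{equation}
Combined with $\mathcal S(AB)\mathcal S(BC)\subseteq\mathcal S(ABC)$, this identity should imply $\mathcal S(ABC)=\mathcal S(AB)\mathcal S(BC)$. Uniformity over all partitions, in particular shrinking $B$ and sliding its boundaries, should further force the nonlocal elements of $\mathcal S(AB)$ to act on $B$ only as elements of $\mathcal S(B)$ times something near $\partial B_A$. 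I would try to prove this by applying the hypothesis to the subpartition obtained by moving all of $B$ except a collar of bounded width into $A$ or into $C$. The resulting rank identities say that any restriction $s|_Q$, for $s\in\mathcal S(AB)$ and $Q$ deep in $B$, is matched by an element of $\mathcal S(B)$. With this, the syndrome decomposition of Eq.~\eqref{eq:regional-syndrome-cmi-difference} goes through: the interface variables $\Sigma_{\mathrm{int},\partial B_A}$ and $\Sigma_{\mathrm{int},\partial B_C}$ now consist of syndromes of generators of $\mathcal S(AB)/\mathcal S(B)$ and $\mathcal S(BC)/\mathcal S(B)$ whose $B$-support lies in a bounded neighborhood of the respective interface. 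Here the weight bound is a new constant $\sweight'$ that replaces $\sweight$ and depends only on the LDPC data. This localization is exactly what fails in the counterexample, and it is the main obstacle I expect. The delicate point is to show that the effective generators have bounded weight and bounded overlap degree $\frakd'$, uniformly in system size, rather than merely being finite in number.

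Once this localized generating set is in hand, the rest follows the proof of Theorem~\ref{thm:cmi-stability} verbatim, and proceeds in three steps. First, construct the coupling $\overline{\PD}_\tau$ between the conditional joint distribution of the interface syndromes and the product of its marginals, given $\Sigma_{\mathrm{bulk},B}=\tau$. Second, apply the coupling lemma, which bounds $d_{\rm TV}$ by the probability of an error-difference cluster connecting $\partial B_A$ to $\partial B_C$. This probability is at most $(|\partial B_A|+|\partial B_C|)\sum_{\ell\geq \dist(A,C)_B}\alpha'^{\ell}$, where $\alpha'$ is Eq.~\eqref{eq:onsite-alpha} with $\sweight',\frakd'$ in place of $\sweight,\frakd$. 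Third, apply the Audenaert--Winter continuity bound to each conditional mutual information, averaged over $\tau$. The entropy dimension factor is $O((|\partial B_A|+|\partial B_C|)\log d)$, which yields the prefactor $K(|\partial B_A|+|\partial B_C|)^2$. The threshold $\alpha'<0.2$ then gives $c=\tfrac12\log(1/\alpha')$, and neither $c$ nor $K$ depends on system size.
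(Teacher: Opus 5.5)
Your reduction and your collar step match the paper. You use data processing down to $\noise_B(\rho_0)$, the identity $\mathcal S(ABC)=\mathcal S(AB)\mathcal S(BC)$, and localization of $\mathcal S(AB)$ modulo $\mathcal S(B)$ to a bounded-width collar $M_A\subseteq B$. For instance, the Markov condition for $(A,M_A,(B\setminus M_A)C)$ gives $\mathcal S(ABC)=\mathcal S(AM_A)\mathcal S(BC)$, and intersecting with $\mathcal S(AB)$ yields $\mathcal S(AB)=\mathcal S(AM_A)\mathcal S(B)$.

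\textbf{The gap.} You never control the conditioning group $\mathcal S(B)$ itself, yet you assert that the decomposition in Eq.~\eqref{eq:regional-syndrome-cmi-difference} and the coupling of Theorem~\ref{thm:cmi-stability} carry over verbatim. That decomposition rests on $\mathcal S(B)=\mathcal K_A\mathcal G_0\mathcal K_C$. This holds only because, when $ABC=\Lambda$, every check meeting $B$ either lies in $B$ or meets $A$ or $C$.

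For a subsystem partition with $D=\Lambda\setminus ABC\neq\varnothing$, products of checks meeting $B$ and $D$ can cancel on $D$. They then leave $B$-supported stabilizers that belong to neither side and can touch both collars. Such an element cannot be handled by any of your options:
\begin{enumerate}
\item It cannot simply be dropped: the exact reduction $I(\widehat\Sigma_{AM_A}:\widehat\Sigma_{CM_C}|\widehat\Sigma_B)$ conditions on the full $\widehat\Sigma_B$, not on $\Sigma_{\mathrm{bulk},B}$.
\item It cannot be kept in the conditioning while components are formed in the qudit interaction graph. Disagreement components are then no longer individually syndrome-neutral, so the switching map leaves the conditioned sample space.
\item It cannot be moved into one argument. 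That side's register would then depend on errors near the opposite interface, and the crossing-event bound collapses.
\end{enumerate}

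\textbf{The missing ingredient.} The paper supplies it in Appendix~\ref{sec:space-uniformity-local-generation} as a second use of the hypothesis. Cover $B$ by bounded patches $Q_1,\dots,Q_N$, ordered so that each partial union is simply connected and each step is a Markov partition of the Fig.~\ref{fig:cmi-partitions}(b) type. Iterating $\mathcal S(XYZ)=\langle\mathcal S(XY),\mathcal S(YZ)\rangle$ gives $\mathcal S(B)=\langle\mathcal S(Q_1),\dots,\mathcal S(Q_N)\rangle$. One then conditions on the full $\widehat\Sigma_B$ and declares two sites adjacent when they share a patch. Each disagreement component is then neutral for every generator, so the switching map is valid. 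Lemma~\ref{lem:lattice_animal} applies with a degree fixed by the patch size, and this is what sets the new threshold.

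Conversely, the obstacle you singled out is not actually needed. Generators of $\mathcal S(AM_A)$ may have unbounded weight in $A$, as $T_L$ does. But with noise only on $B$, their syndromes are functions of $e|_{M_A}$ alone, which is all the coupling uses. The continuity bound only needs the number of side coordinates, which is $O(|M_A|)=O(|\partial B_A|)$. So your $\alpha'$, built from $\sweight',\frakd'$ of the interface generators, uses the wrong parameter. The relevant locality constant is that of a generating set of the conditioning group $\mathcal S(B)$.
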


\begin{proof}
As in the proof of Theorem~\ref{thm:cmi-stability}, data processing reduces the problem to $\mathcal N_B(\rho_0)$.
The zero-CMI condition is equivalent to
\begin{align} \label{eq:space-uniformity-markov-factorization}
\mathcal S(ABC)=\left\langle\mathcal S(AB),\mathcal S(BC)\right\rangle.
\end{align}
Applying this identity recursively to an ordered cover of $B$ by overlapping local patches $Q_1,\ldots,Q_N$ gives
\begin{align}
\mathcal S(B)=\left\langle\mathcal S(Q_1),\ldots,\mathcal S(Q_N)\right\rangle,
\end{align}
so $\mathcal S(B)$ admits a locally supported generating set; see Appendix~\ref{sec:space-uniformity-local-generation}. Next, let $M_A:=N_{r_{\rm M}}^B(\partial B_A)$ be a fixed-width neighborhood of the $AB$ interface. Applying the zero-CMI condition to the corresponding Fig.~\ref{fig:cmi-partitions}(b)-type partitions $(A,B\setminus M_A, M_A)$ gives
\begin{align}
    \mathcal S(AB)=\left\langle\mathcal S(AM_A),\mathcal S(B)\right\rangle
\end{align}
and similarly we get $\mathcal S(BC)=\left\langle\mathcal S(CM_C),\mathcal S(B)\right\rangle$. 
Hence
\begin{align}\label{eq:space-uniformity-classical-reduction}
I(A:C|B)_{\mathcal N_B(\rho_0)}
=
I(\widehat\Sigma_{AM_A}:\widehat\Sigma_{CM_C}|\widehat\Sigma_B).
\end{align}
The coupling argument following Eq.~\eqref{eq:regional-syndrome-cmi-difference} now applies with the two interfaces replaced by $M_A$ and $M_C$.
Only connected components crossing from $M_A$ to $M_C$ contribute, and $\dist(M_A,M_C)\geq\dist(A,C)_B-O(1)$.
The crossing-component bound and entropy continuity then give Eq.~\eqref{eq:space-uniformity-stability}, with only a constant change in the boundary prefactor.
\end{proof}

One important consequence of is the robustness of the Levin--Wen topological entropy as a mixed-state phase invariant proposed in the mixed-state bootstrap program~\cite{yang2025topologicalmixedstatesphases}.
For the partition in Fig.~\ref{fig:cmi-partitions}(c), the Levin--Wen topological entropy is $\gamma_{\rm LW}(\rho):=\frac12 I(A:C|B)_\rho$~\cite{LevinWen2006}.
The stability follows from the space-uniformity bound by the spatial domain-wall deformation stated below.

\begin{corollary}[Stability of the Levin--Wen CMI]
\label{cor:levin-wen-stability}
Let $\rho_0$ satisfy $I(A:C| B)_{\rho_0}=0$ for every partition with the topology of Fig.~\ref{fig:cmi-partitions}(b), and let $\rho=\mathcal N(\rho_0)$ for onsite Pauli noise below the threshold of Corollary~\ref{cor:space-uniformity-stability}.
Consider a Levin--Wen partition as in Fig.~\ref{fig:cmi-partitions}(c): $A$ and $C$ are simply connected, $B=B_1\sqcup B_2$ consists of two disjoint simply connected regions that separate $A$ from $C$, and $ABC$ is not simply connected.
Let $r$ be the minimum buffer width appearing in a fixed local deformation of this partition and let $\delta(r)$ be the largest right-hand side of Eq.~\eqref{eq:space-uniformity-stability} among the Fig.~\ref{fig:cmi-partitions}(b)-type partitions encountered in that deformation.
Then
\begin{equation}
 \left|
 \gamma_{\rm LW}(\rho)-\gamma_{\rm LW}(\rho_0)
 \right|
 \leq C_{\rm def}\,\delta(r),
 \label{eq:levin-wen-stability}
\end{equation}
where $C_{\rm def}=O(1)$ is the number of local deformation steps.
In particular, the Levin--Wen CMI is stable up to an exponentially small correction in $r$.
\end{corollary}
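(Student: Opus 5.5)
The plan is to express $\gamma_{\rm LW}$ as a finite signed combination of space-uniformity-type CMIs (Fig.~\ref{fig:cmi-partitions}(b)) plus terms that are identical for $\rho$ and $\rho_0$. Corollary~\ref{cor:space-uniformity-stability} then bounds each noisy correction by $\delta(r)$. Concretely, write $B=B_1\sqcup B_2$ and use the chain rule
\begin{align}
I(A:C|B)=I(A:C|B_1B_2),
\end{align}
and deform the annulus $ABC$ step by step. A single elementary step moves a small piece $b$ of one buffer from $B_1$ into $A$ or $C$, or reattaches it. Each step is a domain-wall move in the sense of the mixed-state bootstrap~\cite{yang2025topologicalmixedstatesphases}. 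For each step, the change of the relevant entropy combination is, by strong subadditivity and the chain rule, exactly a CMI of the form $I(X:Z|Y)$. Here $Y$ is a buffer region that separates $X$ from $Z$, and the triple $XYZ$ has the disk-with-hole topology of Fig.~\ref{fig:cmi-partitions}(b) (the ``$D$'' region being the part of the annulus complement). Since $\rho_0$ is assumed to have vanishing CMI on all such partitions, each such term is zero for $\rho_0$. For $\rho$, Eq.~\eqref{eq:space-uniformity-stability} bounds it by $\delta(r)$, because its buffer width is at least $r$ by the definition of $r$.

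The steps are as follows. (i) Fix a concrete sequence of $C_{\rm def}=O(1)$ local deformations that takes the Levin--Wen configuration to a configuration where the two buffers have merged, or where $\gamma_{\rm LW}$ reduces to entropies of regions whose combination can be compared directly. A natural choice is to shrink $B_2$ until $A$ and $C$ touch on that side, which turns the partition into one of type (a) or (b) up to boundary terms. (ii) At each step, write the difference of the Levin--Wen CMI before and after the step as a linear combination (coefficients $\pm1$) of CMIs of type (b). Do this by expanding both CMIs into the four entropies and regrouping the extra entropies into the form $S(XY)+S(YZ)-S(Y)-S(XYZ)$. (iii) Telescope: $\gamma_{\rm LW}(\rho)-\gamma_{\rm LW}(\rho_0)$ equals the same telescoped sum of type-(b) CMI differences. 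The endpoint term must also agree between $\rho$ and $\rho_0$. Ensure this by ending at a configuration where the endpoint CMI is itself of type (b), and hence also controlled by $\delta(r)$, or where it vanishes identically by the full-tripartition argument of Theorem~\ref{thm:cmi-stability}. (iv) Use the triangle inequality to obtain the bound $C_{\rm def}\,\delta(r)$, absorbing the factor $\frac12$ and any endpoint term into $C_{\rm def}$.

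The main obstacle is step (ii): checking that every elementary deformation really produces an exact identity in terms of type-(b) CMIs, with no leftover term that is nonzero for $\rho_0$ or of an uncontrolled topology (e.g.\ a Levin--Wen-type term reappearing). My first attempt would be the standard bootstrap identity
\begin{align}
I(A:C|B_1B_2)-I(A:Cb|B_1'B_2)=I(A:b|B_1'B_2)-I(A:b|B_1'B_2C)\cdot 0+\dots,
\end{align}
that is, the chain rule $I(A:Cb|B')=I(A:C|B'b)+I(A:b|B')$ with $B'=B_1'B_2$ and $B_1=B_1'b$. This isolates $I(A:b|B')$, a CMI in which $B'$ locally separates $A$ from the small piece $b$ inside a disk-like neighborhood. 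I would then use strong subadditivity-type monotonicity, as in the data-processing step of Theorem~\ref{thm:cmi-stability}, to localize this term to a disk-with-hole partition of type (b) near $b$. This localization is only an inequality, so the argument must use it in the correct direction for both $\rho$ and $\rho_0$. The cleanest route is to show that both sides are nonnegative, that the $\rho_0$ term is zero, and that the localized type-(b) bound applies to the $\rho$ term.
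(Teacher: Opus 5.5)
\textbf{There is a genuine gap, and it is at the endpoint, not in the local steps.} Your local bookkeeping is essentially right: for a fixed state $\omega$, moving a patch between $A$ (or $C$) and $B$ changes $I(A:C|B)_\omega$ by a single CMI via the chain rule, as in the paper's elementary moves. (Your displayed identity has a sign slip; it should read $I(A:C|B_1B_2)-I(A:Cb|B_1'B_2)=-I(A:b|B_1'B_2)$.)

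Suppose you run the same deformation $\mathcal G_0\to\cdots\to\mathcal G_m$ in both $\rho$ and $\rho_0$ and telescope. You get
\[
\gamma_{\rm LW}(\rho;\mathcal G_0)-\gamma_{\rm LW}(\rho_0;\mathcal G_0)=\gamma_{\rm LW}(\rho;\mathcal G_m)-\gamma_{\rm LW}(\rho_0;\mathcal G_m)+O(m\,\delta(r)).
\]
As long as every move preserves the Levin--Wen topology, this only relocates the unknown. Ending by shrinking $B_2$ until $A$ and $C$ touch, or by merging the buffers, does not fix this, because that step is not a type-(b) CMI of width at least $r$.

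For the toric code, $I(A:C|B)_{\rho_0}=2\log 2$ on a Levin--Wen partition, but it is $0$ once the buffer is a single separating region. So the topology-changing step carries exactly the topological value whose noise stability is the content of the corollary. Once $A$ and $C$ are in contact, you also pick up an area-law contribution that genuinely differs between $\rho$ and $\rho_0$.

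Put differently: every type-(b) CMI of $\rho_0$ vanishes. So writing $\gamma_{\rm LW}$ as type-(b) CMIs plus terms identical for $\rho$ and $\rho_0$ forces those extra terms to carry all of $\gamma_{\rm LW}(\rho_0)$. You give no reason they are state-independent. The fallback to Theorem~\ref{thm:cmi-stability} is also unavailable. A Levin--Wen partition always has $\Lambda\setminus ABC\neq\varnothing$ (the hole), and that theorem gives smallness, not equality.

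\textbf{The missing idea is to interpolate in the state rather than in the topology.} This is what the paper does.
\begin{enumerate}
\item Take a clean region $Q\supset\mathcal G_0$ and the hybrid state $\sigma=\mathcal N_{\overline Q}(\rho_0)$. Its marginal on $\mathcal G_0$ equals that of $\rho_0$.
\item Transport the annulus, inside the fixed state $\sigma$, by topology-preserving moves to a placement $\mathcal G_m\subset\overline Q$. Each move costs at most $\delta(r)$, because the space-uniformity bound also holds for partially applied onsite noise, including moves that straddle the noise boundary.
\item Since $\mathcal N_Q$ acts away from $\mathcal G_m$, the marginal of $\rho=\mathcal N_Q(\sigma)$ on $\mathcal G_m$ equals that of $\sigma$.
\item Reverse the moves inside $\rho$.
\end{enumerate}
The two states are compared only at placements where their marginals coincide exactly. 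Every nontrivial comparison is between two partitions of one fixed state, so no endpoint term and no topology change ever arise. This gives $C_{\rm def}=2m$, provided $\overline Q$ has enough room for the displaced annulus.
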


\begin{proof}
Apply the noise outside a clean region containing the original Levin--Wen partition, and deform the partition into the noisy region. Apply the remaining noise in the clean region; this leaves the reduced state on the displaced partition unchanged. Finally, reverse the deformation in the fully noisy state. The space-uniformity bound controls every local move, including those crossing the spatial noise boundary, because partially applied onsite noise obeys the same threshold. Summing the outward and return errors proves Eq.~\eqref{eq:levin-wen-stability}. For details, see Appendix~\ref{sec:levin-wen-deformation}.
\end{proof}

\emph{Markov length and local recovery.---}
Theorem~\ref{thm:cmi-stability} (and the extension to $k$-local Pauli noise) shows that, below a finite weak-noise threshold, the CMI of the partition in Fig.~\ref{fig:cmi-partitions}(a) of the noisy state is exponentially decaying with the separating width.  With such CMI behavior on a finite-dimensional lattice, one can construct a constant number of layers of quasi-local Petz recovery maps that approximately reverse the noise~\cite{Sang2024}.

\begin{corollary}[Constant-depth quasi-local decoder]
\label{cor:constant-depth-block-petz}
Let $\rho_0$ be the normalized code state of a qLDPC code on a finite-dimensional lattice $\Lambda$, and let $\rho=\noise(\rho_0)$ for an onsite Pauli noise channel $\noise$ below the threshold of Theorem~\ref{thm:cmi-stability}.
Then there exists a constant-depth, range-$R=\Theta(\log(|\Lambda|/\varepsilon))$ channel $\mathcal D_R$ such that
\begin{align}
\norm{\mathcal D_R(\rho)-\rho_0}_1\leq\varepsilon.
\end{align}
Each layer of $\mathcal D_R$ consists of Petz recovery channels acting on non-overlapping blocks of $\Lambda$.
\end{corollary}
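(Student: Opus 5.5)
The plan is to follow the Sang et al.\ strategy~\cite{Sang2024}: undo the noise patch by patch with Petz maps, using the disk--annulus Markov property from Theorem~\ref{thm:cmi-stability}. We interpolate by applying the noise on one block at a time. Cover $\Lambda$ by blocks of diameter $\Theta(R)$ and colour them with a constant number $\kappa$ of colours, where $\kappa$ depends only on the lattice dimension, so that blocks of the same colour are separated by more than $2R$. Write $\noise=\prod_{c=1}^{\kappa}\noise^{(c)}$, where $\noise^{(c)}$ acts on the blocks of colour $c$. Define the intermediate states $\rho^{(k)}:=\noise^{(1)}\cdots\noise^{(k)}(\rho_0)$, so that $\rho^{(\kappa)}=\rho$ and $\rho^{(0)}=\rho_0$. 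Layer $k$ of $\mathcal D_R$, applied in the order $k=\kappa,\dots,1$, will be a tensor product over the colour-$k$ blocks $Y$ of Petz maps. Each such map reverses $\noise_Y$ on the intermediate state, using only an $R$-neighbourhood $Y_R$ of $Y$.

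Concretely, for a colour-$k$ block $Y$, let $B$ be an annulus of width $R$ around $Y$ inside $Y_R$ and let $C=\Lambda\setminus Y_R$. The exact identity $\noise_Y(\rho^{(k-1)})=\rho^{(k)}$ holds. Since $\noise_Y$ acts only on $Y$, the Petz map $\mathcal P_{Y B}$ built from $\rho^{(k-1)}_{YB}$ and $\noise_Y$ recovers $\rho^{(k-1)}$ from $\rho^{(k)}$. By the Fawzi--Renner/Junge et al.\ bound~\cite{Fawzi2014,Junge2015} (pinched or rotated Petz if needed), the recovery error is at most $\sqrt{I(Y':C|B)}$ up to constants, where the CMI is evaluated on the purification/Stinespring-dilated state of $\rho^{(k)}$. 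The more convenient form states that the error is controlled by the CMI difference
\[
I(Y:C|B)_{\rho^{(k-1)}}-I(Y:C|B)_{\rho^{(k)}}.
\]
This difference is bounded by $I(Y:C|B)_{\rho^{(k-1)}}\le\delta$ whenever the complementary-channel version vanishes. The point to stress is that every $\rho^{(k)}$ is itself $\noise'(\rho_0)$ for an onsite Pauli channel $\noise'$ that is the identity on some sites. The quantity $\alpha$ in Eq.~\eqref{eq:onsite-alpha} only decreases when $q_i$ is replaced by a delta function, so the threshold still holds. Theorem~\ref{thm:cmi-stability} then applies to the full tripartition $(Y,B,C)$ of $\Lambda$ and gives $I\le f\,e^{-c_0R}$. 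On a finite-dimensional lattice $|\partial B_Y|,|\partial B_C|=O(R^{D-1})$, so $f=\poly(R)$. Blocks of the same colour have disjoint neighbourhoods, so their Petz maps commute and the layer is a product channel. Its error is at most the sum of the per-block errors by the triangle inequality and contractivity, after replacing the global recovery by a sequential application within the layer.

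Summing gives the total error
\[
\norm{\mathcal D_R(\rho)-\rho_0}_1\le \sum_{k}\sum_{Y}\poly(R)\,e^{-c_0R/2}\le |\Lambda|\,\poly(R)\,e^{-c_0R/2}.
\]
This is at most $\varepsilon$ for $R=\Theta(\log(|\Lambda|/\varepsilon))$, with the range measured in lattice units. The step I expect to be the main obstacle is justifying the recovery inequality. The standard Petz bound needs the Markov condition on the state after the channel, together with a statement that the channel is reversible on $Y$ given $B$; it does not follow from a CMI of a single state alone. I would handle this as in~\cite{Sang2024}. One dilates $\noise_Y$ to an environment $E$ and uses the fact that recoverability is controlled by $I(C:E Y|B)$-type quantities, namely $I(YE:C|B)-I(Y:C|B)$ evaluated at the two stages. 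The first term equals the CMI of $\rho^{(k-1)}$ on $(Y,B,C)$, because the environment together with $Y$ is an isometric image. Both terms are then bounded by Theorem~\ref{thm:cmi-stability} applied to the respective partially noisy states. A secondary check is that the mixed-noise intermediate states have $I(Y:C|B)=0$ before the noise, but that is automatic for full tripartitions.
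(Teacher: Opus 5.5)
Your proposal is correct and essentially matches the paper's proof. The shared ingredients are a constant-colour block decomposition with disjoint $R$-neighbourhoods, and intermediate states that are partially noised copies of $\rho_0$, so Theorem~\ref{thm:cmi-stability} still applies to each full tripartition $(Y,B,C)$. Each block gets a twirled Petz map on $YB$ whose squared error is bounded by $I(Y:C|B)$ of the intermediate state, and the errors are telescoped over $O(|\Lambda|/R^D)$ blocks, ordered sequentially within each layer. The only difference is cosmetic: the paper applies the Junge et al.\ approximate-sufficiency bound directly with reference $\omega=\sigma_{YB}\otimes\sigma_C$, whose relative-entropy drop is exactly $I(Y:C|B)_{\rho^{(k-1)}}-I(Y:C|B)_{\rho^{(k)}}\le I(Y:C|B)_{\rho^{(k-1)}}$ with no extra hypothesis. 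Your Stinespring route via $I(E:C|YB)$ gives the same quantity by the chain rule, so the ``main obstacle'' you flag is not a real one.
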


The result follows from Ref.~\cite{Sang2024}; a proof is given in \appref{proof-constant-depth-block-petz}. On a general interaction graph, the same construction need not remain constant-depth or quasi-local in support size.
Suppose the graph has volume and boundary growth
\begin{align}
|N_R(x)|\leq e^{h_VR},
\quad
|\partial N_R(x)|\leq e^{h_\partial R}.
\end{align}
The one-block argument of \appref{proof-constant-depth-block-petz} then gives recovery error $\epsilon \sim O\left(e^{-(c_0-2h_\partial)R/2}\right)$.
Thus, when $c_0>2h_\partial$, a graph radius $R=O(\log(|\Lambda|/\varepsilon))$ still suffices for recovery to error $\varepsilon$.
However, because an $R$-ball can contain $e^{O(h_VR)}$ qudits, both the support size of individual recovery maps and the number of layers needed to parallelize them can grow polynomially with $|\Lambda|/\varepsilon$.

\emph{Discussion.---}
We have established exponential suppression of noise-generated CMI in weak (Error-cluster bound) and strong (Information-percolation bound) noise regimes for qLDPC code states.
Together, these results provide a mixed-state analogue of the stability of pure-state topological order under weak local Hamiltonian perturbations~\cite{Bravyi2010,Bravyi_2011}.
From the quantum-error-correction perspective, Pauli-noise stability has traditionally been formulated through vanishing logical error under decoding~\cite{gottesman2014faulttolerantquantumcomputationconstant}.
Here the stability criterion is instead the spatial decay of CMI, which applies directly to general qLDPC interaction graphs. Exponential CMI decay therefore provides the more general information-theoretic stability criterion, with local reversibility as an operational consequence.
Several extensions are natural.
The decay constant $\alpha$ depends on both stabilizer weight and qudit degree; the degree dependence may be removable by counting paths through stabilizer labels~\cite{gottesman2014faulttolerantquantumcomputationconstant,Dumer_threshold}.
Other directions include weak non-Pauli noise and finite-depth local channels, as well as group-valued qudit codes, non-Abelian quantum doubles, and string-net models~\cite{Levin2005,Levingeneralizedstringnet,twistedquantumdouble2013,sala2024stability,sala2024D4decoherence,sala2026deocherence}.
The latter require replacing the Abelian syndrome-disagreement description by one incorporating non-Abelian charge labels, fusion multiplicities, and correlated cluster weights.

\section{acknowledgments}
We thank Sarang Gopalakrishnan, Yifan Zhang, Shengqi Sang, and colleagues for useful discussions.  THY is supported by Taiwan-UIUC fellowship program. A.V. and J.Y.L. are supported by the faculty startup grant at the University of Illinois, Urbana-Champaign, and the IBM-Illinois Discovery Accelerator Institute. We acknowledge the use of OpenAI ChatGPT to assist in the proof of the theorems. All mathematical arguments were independently verified, and the authors take full responsibility for the manuscript.

\bibliographystyle{apsrev4-2}
\bibliography{ref}

\clearpage
\onecolumngrid
\appendix
\setcounter{secnumdepth}{2}
\section{Preliminaries}\label{sec:preliminaries}

\begin{lemma}[Polymer Bound via Site Percolation]
\label{lem:lattice_animal}
Let $G$ be a graph with maximum degree $\Delta$. For a fixed vertex $u \in V(G)$, let $a_\ell(u)$ denote the number of connected subsets $S \subseteq V(G)$ of size $|S| = \ell$ that contain $u$. Then,
\begin{equation*}
    a_\ell(u) \le (\mathrm{e}(\Delta + 1))^\ell.
\end{equation*}
\end{lemma}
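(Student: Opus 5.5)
The plan is to prove the bound through a site-percolation identity, as the title of the lemma suggests. Fix $p\in(0,1)$ and declare each vertex of $G$ open independently with probability $p$. For a connected set $S\ni u$ with $|S|=\ell$, let $\partial_{\rm v}S$ denote its outer vertex boundary, i.e. the vertices not in $S$ that are adjacent to $S$. The events
\[
\mathcal E_S=\{\text{all of } S \text{ open, all of } \partial_{\rm v}S \text{ closed}\}
\]
are pairwise disjoint as $S$ ranges over connected sets containing $u$. Indeed, $\mathcal E_S$ says exactly that the open cluster of $u$ equals $S$. Hence
\[
\sum_{S\ni u,\ \text{connected}} p^{|S|}(1-p)^{|\partial_{\rm v}S|}=\Pr(u\text{ open})\le 1 .
\]

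The key combinatorial input is $|\partial_{\rm v}S|\le \Delta|S|$, since each vertex of $S$ has at most $\Delta$ neighbours. Restricting the sum to $|S|=\ell$ then gives
\[
a_\ell(u)\,p^\ell(1-p)^{\Delta\ell}\le 1,
\]
so that
\[
a_\ell(u)\le \bigl[p(1-p)^{\Delta}\bigr]^{-\ell}.
\]

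Next optimise over $p$ by choosing $p=1/(\Delta+1)$. This yields
\[
\frac{1}{p(1-p)^\Delta}=(\Delta+1)\Bigl(1+\frac1\Delta\Bigr)^{\Delta}\le \mathrm e(\Delta+1),
\]
using $(1+1/\Delta)^\Delta\le \mathrm e$. This is exactly the claimed bound $a_\ell(u)\le(\mathrm e(\Delta+1))^\ell$. The degenerate case $\Delta=0$ is trivial, since then only $\ell=1$ gives a nonzero count and $a_1(u)=1$.

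The main point requiring care is the disjointness of the events $\mathcal E_S$, i.e. that the cluster of $u$ is uniquely determined. One must use the outer vertex boundary, not the edge boundary, so that vertex independence gives the exact product weight. Beyond that, the argument only needs the crude bound $|\partial_{\rm v}S|\le\Delta|S|$; a sharper bound such as $(\Delta-1)|S|+2$ is unnecessary for the stated constant.
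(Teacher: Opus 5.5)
Your proposal is correct and follows essentially the same route as the paper's proof. Both declare vertices open with probability $p$ and use the disjointness of the events $\{C(u)=S\}$ together with the bound $|\partial_{\rm v}S|\le\Delta|S|$. Both then optimize at $p=1/(\Delta+1)$ with $(1+1/\Delta)^\Delta\le\mathrm e$; your explicit handling of the degenerate case $\Delta=0$ is a small addition the paper omits.
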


\begin{proof}
For any connected subset $S$, we define its closed neighborhood as $N[S] = S \cup \{v \in V(G) : \dist(v, S) = 1\}$. Since the maximum degree of $G$ is $\Delta$, each vertex $v \in S$ contributes itself and at most $\Delta$ neighbors. Thus, by a union bound, the size of the closed neighborhood is bounded by
\begin{equation}
    |N[S]| \le \sum_{v \in S} (1 + \deg v) \le (\Delta + 1)|S|.
\end{equation}

Fix a vertex $u \in V(G)$,
suppose each vertex in $G$ is independently declared ``open'' with probability $t \in (0,1)$ and ``closed'' with probability $1-t$. 
Let $C(u)$ denote the open cluster (the maximal connected open induced subgraph) containing $u$. 
For the cluster $C(u)$ to be exactly a specific connected subset $S$ of size $\ell$ containing $u$, two conditions must be met:
\begin{enumerate}
    \item All $\ell$ vertices in $S$ must be open, which occurs with probability $t^\ell$.
    \item All vertices in the external boundary of $S$, given by $N[S] \setminus S$, must be closed.
\end{enumerate}

Since the states of the vertices are independent, the probability that $C(u)$ is exactly $S$ is bounded below by
\begin{equation}
    \Pr(C(u) = S) \ge t^\ell (1 - t)^{\Delta \ell}.
\end{equation}

The events $\{C(u) = S\}$ for distinct connected subsets $S$ are mutually disjoint. Summing over all $a_\ell(u)$ valid subsets $S$ of size $\ell$ containing $u$, the total probability cannot exceed 1:
\begin{equation}
    \sum_{\substack{S \ni u \\ |S| = \ell}} \Pr(C(u) = S) \le 1.
\end{equation}

Substituting the lower bound into the sum yields
\begin{equation}
    a_\ell(u) t^\ell (1 - t)^{\Delta \ell} \le 1 \implies a_\ell(u) \le \frac{1}{t^\ell (1 - t)^{\Delta \ell}}.
\end{equation}

This inequality holds for any valid probability $t \in (0,1)$. 
To obtain the tightest bound, we need to maximize the denominator by taking the derivative and setting it to zero. 
The optimal probability is $t = 1/(\Delta + 1)$, and by substituting this optimal value, we find
\begin{equation}
    a_\ell(u) \le \left[ (\Delta + 1) \left( 1 + \frac{1}{\Delta} \right)^\Delta \right]^\ell.
\end{equation}

Using the standard limit inequality $\left(1 + 1/x\right)^x < \mathrm{e}$ for all $x > 0$, we conclude that
\begin{equation}
    a_\ell(u) \le (\mathrm{e}(\Delta + 1))^\ell.
\end{equation}
\end{proof}

\begin{lemma}[Coupling Lemma]
\label{lem:coupling}
Let $\mu$ and $\nu$ be two probability distributions on a countable state space $\Omega$.
A coupling of $\mu$ and $\nu$ is a pair of random variables $(X,Y)$ on the same probability space such that
\begin{align*}
    X \sim \mu, \qquad Y \sim \nu.
\end{align*}
Then, for every coupling $(X,Y)$ of $\mu$ and $\nu$,
\begin{align*}
    d_{\mathrm{TV}}(\mu,\nu) \leq \Pr(X \neq Y),
\end{align*}
where
\begin{align*}
    d_{\mathrm{TV}}(\mu,\nu) = \frac{1}{2}\sum_{x\in \Omega} |\mu(x)-\nu(x)| = 1-\sum_{x\in\Omega}\min\{\mu(x),\nu(x)\}.
\end{align*}
\end{lemma}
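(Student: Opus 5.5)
The plan is to prove both inequalities separately. The first inequality says that for every coupling $(X,Y)$ of $\mu$ and $\nu$, $d_{\mathrm{TV}}(\mu,\nu)\le\Pr(X\neq Y)$. The second is the identity $\frac12\sum_x|\mu(x)-\nu(x)| = 1-\sum_x\min\{\mu(x),\nu(x)\}$ used as the definition. I will prove the identity first, since it gives the most convenient form for the coupling inequality.

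\textbf{Step 1 (the identity).} For each $x$ we have $|\mu(x)-\nu(x)| = \mu(x)+\nu(x)-2\min\{\mu(x),\nu(x)\}$. Summing over the countable set $\Omega$ is legitimate because all series are absolutely convergent with nonnegative terms bounded by $\mu(x)+\nu(x)$. Using $\sum_x\mu(x)=\sum_x\nu(x)=1$, this gives $\sum_x|\mu(x)-\nu(x)| = 2-2\sum_x\min\{\mu(x),\nu(x)\}$. Dividing by two yields the stated equality.

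\textbf{Step 2 (the coupling bound).} Fix a coupling $(X,Y)$ on a common probability space. For each $x\in\Omega$, the event $\{X=Y=x\}$ is contained in both $\{X=x\}$ and $\{Y=x\}$. Hence $\Pr(X=Y=x)\le\min\{\mu(x),\nu(x)\}$. The events $\{X=Y=x\}$ for distinct $x$ are disjoint, and their union is $\{X=Y\}$. Countable additivity then gives
\begin{equation*}
\Pr(X=Y)=\sum_{x\in\Omega}\Pr(X=Y=x)\le\sum_{x\in\Omega}\min\{\mu(x),\nu(x)\}.
\end{equation*}
Taking complements and applying Step 1 gives $\Pr(X\neq Y)\ge 1-\sum_x\min\{\mu(x),\nu(x)\}=d_{\mathrm{TV}}(\mu,\nu)$, which is the claim.

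\textbf{Main obstacle.} There is no genuine obstacle. The only point requiring care is measurability: $\{X=Y\}$ must be an event. This holds because $\Omega$ is countable (with the discrete $\sigma$-algebra), so $\{X=Y\}$ is a countable union of measurable events $\{X=x\}\cap\{Y=x\}$. This is exactly where the countability hypothesis is used.
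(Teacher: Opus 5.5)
Your proof is correct, but it follows a different route from the paper's.

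\textbf{The paper's route.} The paper uses the event form of total variation. For any $A\subseteq\Omega$, the indicators $\mathbf{1}_{\{X\in A\}}$ and $\mathbf{1}_{\{Y\in A\}}$ agree on $\{X=Y\}$, so $\mu(A)-\nu(A)\le\Pr(X\neq Y)$. It then takes the supremum over $A$. This implicitly uses $d_{\mathrm{TV}}(\mu,\nu)=\sup_{A}\bigl(\mu(A)-\nu(A)\bigr)$. That identity is not one of the two formulas in the statement, and the paper does not prove it; it follows by choosing $A=\{x:\mu(x)>\nu(x)\}$. The paper also never verifies that the two displayed formulas agree.

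\textbf{Your route.} You use the overlap form. You first prove $\frac12\sum_x|\mu(x)-\nu(x)|=1-\sum_x\min\{\mu(x),\nu(x)\}$ from $|a-b|=a+b-2\min\{a,b\}$. You then bound the diagonal mass pointwise, $\Pr(X=Y=x)\le\min\{\mu(x),\nu(x)\}$, and sum using countable additivity.

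\textbf{Comparison.} Your argument is self-contained relative to the definitions as stated, and it closes the small gap just noted. It also shows when equality holds: a maximal coupling puts mass $\min\{\mu(x),\nu(x)\}$ on each diagonal point. The paper's set-based argument, on the other hand, carries over directly to general measurable spaces where total variation is defined as a supremum over events, provided $\{X\neq Y\}$ is measurable. For the countable setting actually used in the paper, either proof suffices.
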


\begin{proof}
First, let $(X,Y)$ be any coupling of $\mu$ and $\nu$.
For any subset $A\subseteq \Omega$, we have
\begin{align}
    \mu(A)-\nu(A) = \Pr(X\in A)-\Pr(Y\in A).
\end{align}
On the event $\{X=Y\}$, the indicators
$\mathbf{1}_{\{X\in A\}}$ and $\mathbf{1}_{\{Y\in A\}}$ are equal.
Therefore,
\begin{align}
    \mu(A)-\nu(A) \leq \Pr(X\neq Y).
\end{align}
Taking the supremum over all $A\subseteq \Omega$ gives
\begin{align}
    d_{\mathrm{TV}}(\mu,\nu) \leq \Pr(X\neq Y).
\end{align}
\end{proof}

\section{Proof of Theorem 1} 
\label{sec:main-proof}
\subsection{Setup and notation}
Let $\Lambda$ be a finite lattice with qudits of local dimension $d$ placed on each site. 
Define
\begin{align}
  G:=\mathbb Z_d^2,
  \qquad
  \omega:=e^{2\pi \mathrm{i} / d},
\end{align}
and let $X_i,Z_i$ be single-site generalized Pauli operators for each $i \in \Lambda$ such that
\begin{align}
    Z_i X_i = \omega X_i Z_i.
\end{align}
From now on, when we write ``Pauli operator'' or just ``Pauli'', we always mean ``generalized Pauli operator''.
Each operator in $\mathcal{P}_d^{(i)}$, the set of single-site Pauli operators on site $i$, is defined via a unique $(a,b) \in G$:
\begin{align}
    P_i(a,b) = X_i^{a} Z_i^{b} \in \cP_d^{(i)}.
\end{align}
Any Pauli operator (with arbitrary support on the lattice) in $\mathcal{P}_d^{\otimes |\Lambda|}$ is just a tensor product of single-site Pauli operators.

Let $\mathcal{S}=\langle S_a\rangle$ be an Abelian stabilizer group generated by a set of Pauli operators $\{S_a\}\subset \mathcal{P}_d^{\otimes |\Lambda|}$. 
As stated in the main text, we consider qLDPC codes for which each generator acts nontrivially on at most $\sweight$ qudits and each qudit is contained in the support of at most $\frakd$ generators.
Distances are measured in the interaction graph induced by these generators; two qudits have distance one if they appear in a common stabilizer generator, and two stabilizer generators have distance one if their supports overlap on at least one qudit.
For qudit regions $R,R', B\subseteq\Lambda$, $\dist(R,R')_B$ is the minimum graph distance between a qudit in $R$ and a qudit in $R'$ through some buffer region $B$.

Let $A \sqcup B \sqcup C = \Lambda$ be a (disjoint) tripartition of qudits such that $B$ separates $A$ from $C$, i.e. $\dist(A,C)_B > 1$.
We will denote by $\partial B_A$ the $A$-side boundary of $B$, i.e.
\begin{align}
    \partial B_A := \{ i \in A| \dist(i, B) = 1\},
\end{align}
and denote by $\partial B_C$ the corresponding $C$-side boundary of $B$.  

A label $e_i=(a_i,b_i)\in G$ represents the Pauli error at site $i \in \Lambda$ with corresponding Pauli operator $X_i^{a_i} Z_i^{b_i}$.
An error configuration on $B$ is then fully specified by
\begin{align}
  e=(e_i)_{i\in B}\in G^{|B|},
  \qquad e_i=(a_i,b_i),
\end{align}
and we write the corresponding Pauli error as
\begin{align}
  P(e):=\prod_{i\in B}X_i^{a_i}Z_i^{b_i},
  \qquad
  \supp(e):=\{i\in B:e_i\neq 0\}, \qquad |e| = |\supp(e)|.
\end{align}

Throughout the proof, an \textit{error distribution} means a distribution of Pauli error configurations on qudits, while a \textit{syndrome distribution} means the induced distribution of stabilizer-check values.

For an error configuration $e \in G^{|B|}$, a subset (or ``component'') $Q\subseteq \supp(e)$ is called \textit{connected} if for any
$i,j\in Q$, there exists a sequence
\begin{align}
    i=i_1,i_2,\ldots,i_m=j
\end{align}
with $i_k\in Q$ such that each consecutive pair $i_k,i_{k+1}$ is adjacent in the qudit interaction graph. For general onsite Pauli noise, let $q_i$ be the one-site distribution introduced in the main text.  Depolarizing noise is the special case
$q_i(0)=1-p_i$ and $q_i(u)=p_i/(d^2-1)$ for $u\neq0$.
The product error distribution on $G^{|B|}$ is
\begin{equation}
  q(e)=\prod_{i\in B}q_i(e_i).
\end{equation}

\subsection{Reduction to classical probability}
Let $\rho_0$ be the normalized code-projector state defined in Eq.~\eqref{eq:rho_def} of the main text.
We consider a product noise channel $\noise$, which is the composition of single-site Pauli noise channels $\noise_i$ for each $i \in \Lambda$.
We begin by using the data processing inequality to remove the noise acting on $A$ and $C$:
\begin{align}
    I(A:C|B)_{ {\cal N}(\rho_0)} \leq I(A:C|B)_{{\cal N}_B(\rho_0)}, \label{eq:data_processing_ineq}
\end{align}
where $\noise_B$ denotes the product noise channel with its support restricted to $B$. 
Any possible dependence of the RHS of Eq.~\eqref{eq:data_processing_ineq} on the sizes of $A$ and $C$ is localized to the two boundaries $\partial B_A$ and $\partial B_C$. 

Because $A\sqcup B\sqcup C=\Lambda$ and $B$ separates $A$ from $C$, no
chosen generator intersects both $A$ and $C$.  Define
\begin{equation}
\begin{aligned}
 \mathcal G_0&:=\left\langle S_a:
  a\in\mathcal I,\ \supp(S_a)\subseteq B\right\rangle,\\
 \mathcal G_A&:=\left\langle S_a:
  a\in\mathcal I,\ \supp(S_a)\cap A\neq\varnothing\right\rangle,\\
 \mathcal G_C&:=\left\langle S_a:
  a\in\mathcal I,\ \supp(S_a)\cap C\neq\varnothing\right\rangle,
\end{aligned}
\end{equation}
and let
\begin{equation}
 \mathcal K_A:=\mathcal G_A\cap\mathcal S(B),
 \qquad
 \mathcal K_C:=\mathcal G_C\cap\mathcal S(B).
\end{equation}
The groups $\mathcal K_A$ and $\mathcal K_C$ contain the effective
$B$-supported stabilizers obtained when products of $A$- or $C$-side
generators cancel their support outside $B$.  The four regional
groups then factor as
\begin{equation}
\begin{aligned}
 \mathcal S(B)&=\mathcal K_A\mathcal G_0\mathcal K_C,
 &\mathcal S(AB)&=\mathcal G_A\mathcal G_0\mathcal K_C,\\
 \mathcal S(BC)&=\mathcal K_A\mathcal G_0\mathcal G_C,
 &\mathcal S(ABC)&=\mathcal G_A\mathcal G_0\mathcal G_C.
\end{aligned}
\label{eq:intrinsic-stabilizer-factorizations}
\end{equation}
Every regional stabilizer can be separated into an $A$-side product, a product of generators supported in $B$, and a $C$-side product;
membership in the relevant region requires the unwanted exterior support
to cancel.

Using the commutation character defined in the setup, for a subgroup
$\mathcal H\leq\mathcal S$ write
$\Sigma_{\mathcal H}:=\chi_E|_{\mathcal H}$ for the induced random syndrome
character, where $E$ is the Pauli error drawn from $q$.  Thus
$\widehat\Sigma_R=\Sigma_{\mathcal S(R)}$.

For these subgroups, use the abbreviations
\begin{equation}
\begin{aligned}
 \Sigma_{\textrm{bulk},A}&:=\Sigma_{\mathcal G_A},
 &\Sigma_{\textrm{bulk},B}&:=\Sigma_{\mathcal G_0},
 &\Sigma_{\textrm{bulk},C}&:=\Sigma_{\mathcal G_C}.
\end{aligned}
\end{equation}
and define the complete side registers by
\begin{equation}
\begin{aligned}
 \Sigma_{\textrm{int},\partial B_A}
 &:=(\Sigma_{\textrm{bulk},A},\Sigma_{\mathcal K_A}),\\
 \Sigma_{\textrm{int},\partial B_C}
 &:=(\Sigma_{\textrm{bulk},C},\Sigma_{\mathcal K_C}).
\end{aligned}
\end{equation}
The entries in these tuples are not treated as independent: they are
restrictions of the same character $\chi_E$ and hence obey the consistency
relations on subgroup intersections.
Choose independent $\mathbb Z_d$-valued coordinates for the two side
registers, and let $\Omega_A$ and $\Omega_C$ denote their respective
numbers of coordinates.  Since
$\mathcal K_A\leq\mathcal G_A$ and $\mathcal K_C\leq\mathcal G_C$, the
effective-stabilizer syndromes introduce no additional independent
coordinates beyond those of $\Sigma_{\textrm{bulk},A}$ and $\Sigma_{\textrm{bulk},C}$.  Moreover, because
the error is supported in $B$, only local generators crossing the
corresponding interface can have a nonzero side syndrome.  Therefore,
\begin{equation}
 \Omega_A\leq\frakd|\partial B_A|,
 \qquad
 \Omega_C\leq\frakd|\partial B_C|.
 \label{eq:boundary-stabilizer-to-qudit-size}
\end{equation}

Equation~\eqref{eq:intrinsic-stabilizer-factorizations} implies
\begin{equation}
\begin{aligned}
 \widehat\Sigma_B
 &\simeq
 (\Sigma_{\mathcal K_A},\Sigma_{\textrm{bulk},B},\Sigma_{\mathcal K_C}),\\
 \widehat\Sigma_{AB}
 &\simeq
 (\Sigma_{\textrm{bulk},A},\Sigma_{\textrm{bulk},B},\Sigma_{\mathcal K_C}),\\
 \widehat\Sigma_{BC}
 &\simeq
 (\Sigma_{\mathcal K_A},\Sigma_{\textrm{bulk},B},\Sigma_{\textrm{bulk},C}),\\
 \widehat\Sigma_{ABC}
 &\simeq
 (\Sigma_{\textrm{bulk},A},\Sigma_{\textrm{bulk},B},\Sigma_{\textrm{bulk},C}),
\end{aligned}
\label{eq:regional-syndrome-factorizations}
\end{equation}
where $\simeq$ means that the complete character and the displayed tuple
of jointly consistent restrictions determine each other.  Moreover,
$\widehat\Sigma_B$ is a deterministic restriction of both
$\widehat\Sigma_{AB}$ and $\widehat\Sigma_{BC}$.  For the full
tripartition $A\sqcup B\sqcup C=\Lambda$,
\begin{equation}
 \mathcal S(ABC)
 =\langle\mathcal S(AB),\mathcal S(BC)\rangle,
\end{equation}
so $\widehat\Sigma_{ABC}$ is determined by
$(\widehat\Sigma_{AB},\widehat\Sigma_{BC})$.
For any region $R$, Pauli noise therefore maps the reduced code-projector state into
mutually orthogonal, equal-rank sectors labelled by
$\widehat\Sigma_R$:
\begin{equation}
 (\mathcal N_B(\rho_0))_R
 =\bigoplus_{\tau}
 \Pr(\widehat\Sigma_R=\tau)\,\rho_R^{(\tau)},
\end{equation}
where every $\rho_R^{(\tau)}$ has the same entropy as $(\rho_0)_R$.
Consequently,
\begin{equation}
 S\!\left((\mathcal N_B(\rho_0))_R\right)
 -S\!\left((\rho_0)_R\right)
 =H(\widehat\Sigma_R).
\end{equation}
Combining the four entropy increments and using
Eq.~\eqref{eq:regional-syndrome-factorizations} gives
\begin{equation}
\begin{aligned}
 \Delta I
 &:={I(A:C| B)}_{\mathcal N_B(\rho_0)}
   -{I(A:C| B)}_{\rho_0}\\
 &=I(\widehat\Sigma_{AB}:\widehat\Sigma_{BC}
     |\widehat\Sigma_B)\\
 &=I(\Sigma_{\textrm{bulk},A}:\Sigma_{\textrm{bulk},C}|
       \Sigma_{\textrm{bulk},B},\Sigma_{\mathcal K_A},\Sigma_{\mathcal K_C})\\
 &\leq I(\Sigma_{\textrm{int},\partial B_A}:\Sigma_{\textrm{int},\partial B_C}
          |\Sigma_{\textrm{bulk},B}).
\end{aligned}
\label{eq:extra-cmi-syndrome}
\end{equation}

Using these coordinate representations, write
$\sigma_B(e)$, $\sigma(e)_{\partial B_A}$, and
$\sigma(e)_{\partial B_C}$ for the coordinate vectors of the bulk and side
registers.  The bulk-syndrome distribution and conditional error law are
\begin{align}
 \mathsf P_B(\tau)
 &:=\Pr_q(\Sigma_{\textrm{bulk},B}=\tau)
   =\sum_{e:\sigma_B(e)=\tau}q(e),\\
 q_B^\tau(e)
 &:=\Pr_q(E=e|\Sigma_{\textrm{bulk},B}=\tau)
   =\frac{q(e)\mathbf 1\{\sigma_B(e)=\tau\}}
          {\mathsf P_B(\tau)}.
 \label{eq:conditional-error-law}
\end{align}
The last term in Eq.~\eqref{eq:extra-cmi-syndrome} decomposes as
\begin{equation}
 I(\Sigma_{\textrm{int},\partial B_A}:\Sigma_{\textrm{int},\partial B_C}|\Sigma_{\textrm{bulk},B})
 =\mathbb E_{\tau\sim\mathsf P_B}
 I(\Sigma_{\textrm{int},\partial B_A}:\Sigma_{\textrm{int},\partial B_C}
   |\Sigma_{\textrm{bulk},B}=\tau).
\end{equation}
To obtain an upper bound on $I(\Sigma_{\textrm{int},\partial B_A}:\Sigma_{\textrm{int},\partial B_C}| \Sigma_{\textrm{bulk},B}=\tau)$, we will first bound the total variation distance between the conditional distribution $\mathsf{P}_{\tau}(\Sigma_{\textrm{int},\partial B_A}, \Sigma_{\textrm{int},\partial B_C})$ and the product of its marginals $\mathsf{P}_{\tau}(\Sigma_{\textrm{int},\partial B_A})\mathsf{P}_{\tau}(\Sigma_{\textrm{int},\partial B_C})$ by constructing a coupling between them.

\subsection{Coupling argument}
Fix a bulk syndrome $\tau\in\operatorname{im}(\sigma_B)$.
We consider two error configurations independently sampled from the conditional distribution
\begin{align}
  e^{(1)},e^{(2)}\sim q_{B}^\tau.
\end{align}
We want to construct a coupling $(X,Y) \sim \overline{\PD}_\tau(X,Y)$ between two random variables $X$ and $Y$, both of which store the syndromes of checks on the $A$--$B$ and $B$--$C$ boundaries.
The only hard constraint for a valid construction is that the coupled distribution $\overline{\PD}_\tau(X,Y)$ has marginals
\begin{align}
    \overline{\PD}_\tau(X) &= \PD(\Sigma_{\textrm{int},\partial B_A},\Sigma_{\textrm{int},\partial B_C} | \Sigma_{\textrm{bulk},B} = \tau), \nonumber \\
    \overline{\PD}_\tau(Y) &=\PD(\Sigma_{\textrm{int},\partial B_A} |\Sigma_{\textrm{bulk},B} =  \tau) \cdot \PD(\Sigma_{\textrm{int},\partial B_C} | \Sigma_{\textrm{bulk},B} = \tau).
\end{align}
Naively, we could construct such a coupling by letting
\begin{align}
    X = \left( (e^{(1)})_{\partial B_A}, (e^{(1)})_{\partial B_C}\right), \quad Y = \left( (e^{(1)})_{\partial B_A}, (e^{(2)})_{\partial B_C} \right),
\end{align}
which gives the correct marginals.
The problem with this choice is that it neglects correlations between syndromes on $\partial B_A$ and $\partial B_C$ which are introduced by conditioning on the bulk syndrome $\tau$.
The total probability weight $\Pr[X \not= Y]$ is then uncontrolled, and thus Lemma \ref{lem:coupling} gives only a loose upper bound on the total variation distance.

To refine this strategy, we consider the \textit{Pauli label disagreement} of the errors $e^{(1)}$ and $e^{(2)}$, defined as
\begin{equation}
  \Delta := e^{(2)}-e^{(1)}\; \left(\text{mod } d\right)\in G^{B}.
\end{equation}
Since $e^{(1)}$ and $e^{(2)}$ have the same bulk syndrome $\tau$ and the syndrome map $\sigma(\cdot)$ is a homomorphism, we have $\sigma_{B}(\Delta)=0$.
Let $R$ be the union of all connected components of $\supp(\Delta)$ which are adjacent to ${\partial B_C}$ but not to ${\partial B_A}$, and write $\Delta_R$ as the restriction of $\Delta$ to $R$, padded with zeros for all sites not in $R$.
We then define a \textit{switching map}:
\begin{equation}
  \mathcal T(e^{(1)},e^{(2)})
  =
  (\widetilde e^{(1)},\widetilde e^{(2)})
  :=
  (e^{(1)}+\Delta_R,\; e^{(2)}-\Delta_R).
\end{equation}
On sites in $R$, the map $\mathcal{T}$ simply swaps the two Pauli labels:
\begin{align}
  \widetilde e^{(1)}_i=e^{(2)}_i,
  \qquad
  \widetilde e^{(2)}_i=e^{(1)}_i,
  \qquad i\in R.
\end{align}
On sites not in $R$, the two configurations are unchanged. 
The switching map $\mathcal{T}$ is bijective as its inverse is just itself, and it preserves the total probability weight of the two sampled errors:
\begin{equation}
  q_i(e^{(1)}_i)q_i(e^{(2)}_i) = q_i(\widetilde e^{(1)}_i) q_i(\widetilde e^{(2)}_i).
\end{equation}
Denote by $\mathsf{P}_\tau(\Sigma_{\textrm{int},\partial B_A}, \Sigma_{\textrm{int},\partial B_C})$ the joint syndrome distribution at the boundaries conditioned on $\Sigma_{\textrm{bulk},B} = \tau$.
Recall that $\sigma(e)_R$ is the (deterministic) syndrome on a qudit region $R$ for a fixed error configuration $e$. 
In what follows, we often write $\PD_\tau(\sigma(e)_R)$ instead of the more explicit expression for the probability $\PD_\tau(\Sigma_R = \sigma(e)_R)$. 
We have the following factorization for the probability $\mathsf{P}_\tau(\sigma(\widetilde e^{(1)})_{\partial B_A}, \sigma(\widetilde e^{(2)})_{\partial B_C})$:
\begin{equation}
\begin{split}
  &\mathsf{P}_\tau(\sigma(\widetilde e^{(1)})_{\partial B_A},
  \sigma(\widetilde e^{(2)})_{\partial B_C})
  \\
  &=
  \sum_{\{e^{(1)}, e^{(2)}
  |
  (\sigma(e^{(1)}+\Delta_R)_{\partial B_A},
  \sigma(e^{(2)}-\Delta_R)_{\partial B_C})
  =
  (\sigma(\widetilde e^{(1)})_{\partial B_A},
  \sigma(\widetilde e^{(2)})_{\partial B_C})\}}
  q_B^\tau(e^{(1)})q_B^\tau(e^{(2)})
   \\
   &=
  \sum_{\{\hat e^{(1)}, \hat e^{(2)}
  |
  (\sigma(\hat e^{(1)})_{\partial B_A},
  \sigma(\hat e^{(2)})_{\partial B_C})
  =
  (\sigma(\widetilde e^{(1)})_{\partial B_A},
  \sigma(\widetilde e^{(2)})_{\partial B_C})\}}
  q_B^\tau(\hat e^{(1)})q_B^\tau(\hat e^{(2)})
  \\
  &=
  \mathsf{P}_\tau(\sigma(\widetilde e^{(1)})_{\partial B_A})
  \mathsf{P}_\tau(\sigma(\widetilde e^{(2)})_{\partial B_C}).
\end{split}
\end{equation}
Let $\mathcal C_{\tau}$ be the set of error configuration pairs (with bulk syndrome $\tau$) for which $\supp(\Delta)$ contains a connected component adjacent to both $\partial B_A$ and $\partial B_C$.
More explicitly,
\begin{align}
  \mathcal C_{\tau}
  :=
  \left\{
  (e^{(1)},e^{(2)}) : \exists \; \Gamma \subseteq \supp(e^{(2)} - e^{(1)}) \text{ s.t. } \Gamma \text{ is connected}, \; \dist(\Gamma,\partial B_A) = 1, \; \dist(\Gamma,\partial B_C) = 1 
  \right\}.
\end{align}
Now, consider the Pauli label disagreement $\widetilde \Delta$ of error configurations $(\widetilde e^{(1)}, \widetilde e^{(2)}) = \mathcal{T}(e^{(1)}, e^{(2)})$ for $(e^{(1)}, e^{(2)}) \not\in \mathcal C_{\tau}$. 
Every connected component of $\widetilde \Delta$ that is adjacent to $\partial B_C$ is switched, and no switched component is adjacent to $\partial B_A$.  
Then, for errors $(e^{(1)}, e^{(2)}) \not\in \mathcal{C}_\tau$, we have
\begin{equation}
  \sigma(\widetilde e^{(1)})_{\partial B_A}
  =
  \sigma(e^{(1)})_{\partial B_A},
  \qquad
  \sigma(\widetilde e^{(2)})_{\partial B_C}
  =
  \sigma(e^{(1)})_{\partial B_C}.
\end{equation}
Thus, the syndromes sampled from
$\mathsf{P}_\tau(\sigma(e^{(1)})_{\partial B_A},
\sigma(e^{(1)})_{\partial B_C})$ and
$\mathsf{P}_\tau(\sigma(\widetilde e^{(1)})_{\partial B_A})
\mathsf{P}_\tau(\sigma(\widetilde e^{(2)})_{\partial B_C})$ can differ \textit{only} when the pair of sampled error configurations is in $\mathcal{C}_\tau$.
By the coupling lemma (Lemma \ref{lem:coupling}), we then have that
\begin{equation}
  \begin{aligned}
  \delta_\tau
  &:=
  \left\|
  \mathsf{P}_\tau(\Sigma_{\textrm{int},\partial B_A}, \Sigma_{\textrm{int},\partial B_C})
  -
  \mathsf{P}_\tau(\Sigma_{\textrm{int},\partial B_A})
  \mathsf{P}_\tau(\Sigma_{\textrm{int},\partial B_C})
  \right\|_{\rm TV}
  \\
  &\leq
  \Pr_{(e^{(1)},e^{(2)})\sim (q_B^\tau)^{\otimes 2}}
  \left[
  \sigma(e^{(1)})_{\partial B_A \cup \partial B_C}
  \neq
  (\sigma(\widetilde e^{(1)})_{\partial B_A},
  \sigma(\widetilde e^{(2)})_{\partial B_C})
  \right]
  \\
  &\leq
  \Pr_{(e^{(1)},e^{(2)})\sim (q_B^\tau)^{\otimes 2}}
  \left[(e^{(1)},e^{(2)})\in \mathcal{C}_\tau\right].
  \label{eq:tv-crossing-bound}
  \end{aligned}
\end{equation}
\subsection{Weight of the disagreement component}
It remains to bound the probability of sampling a pair of error configurations $(e^{(1)}, e^{(2)}) \in  \mathcal{C}_\tau$.
Whenever $\mathcal{C}_\tau$ occurs, $\supp(\Delta)$ contains a connected
component $\Gamma\subseteq B$ that is adjacent to both $\partial B_A$ and
$\partial B_C$.  With the edge-distance convention used here, such a
component satisfies
\begin{equation}
  |\Gamma|\geq \dist(A,C)_B-1.
\end{equation}

Fix a connected set $\Gamma\subseteq B$ and a nonzero disagreement pattern
$\gamma\in(G\setminus\{0\})^\Gamma$.  We identify $\gamma$ with its
extension by zero to all of $B$, and define
\begin{equation}
  W_\Gamma(x):=\prod_{i\in\Gamma}q_i(x_i),
  \qquad x\in G^\Gamma.
\end{equation}
If $\Gamma$ is a connected component of $\supp(\Delta)$ and
$\gamma=\Delta_\Gamma$, then $\sigma_B(\gamma)=0$.  Indeed, a stabilizer
generator supported in $B$ cannot intersect two distinct components of
$\supp(\Delta)$, since any two qudits in its support are adjacent.  Thus
each component contributes separately to the zero syndrome
$\sigma_B(\Delta)=0$.

For a fixed pair $(\Gamma,\gamma)$ satisfying $\sigma_B(\gamma)=0$, sampling
the common syndrome first is equivalent to sampling $e^{(1)}$ from $q_B$
and then sampling $e^{(2)}$ conditioned on the syndrome of $e^{(1)}$:
\begin{equation}
\begin{split}
  \mathbb E_{\tau\sim\mathsf P_B}
  \Pr_{(e^{(1)},e^{(2)})\sim(q_B^\tau)^{\otimes2}}
  \left(\Delta_\Gamma=\gamma\right)
  =
  \mathbb E_{e^{(1)}\sim q_B}
  \Pr_{e^{(2)}\sim q_B^{\sigma_B(e^{(1)})}}
  \left(e^{(2)}_\Gamma=e^{(1)}_\Gamma+\gamma\right).
\end{split}
\label{eq:component-probability-average}
\end{equation}

Since $\sigma_B(\gamma)=0$, the shift
$e^{(2)}_\Gamma\mapsto e^{(2)}_\Gamma-\gamma$ does not change the syndrome
on $B$.  Consequently, after fixing the outside configuration
$e^{(2)}_{B\setminus\Gamma}$, the choice
$e^{(2)}_\Gamma=e^{(1)}_\Gamma+\gamma$ satisfies the conditioning constraint
$\sigma_B(e^{(2)})=\sigma_B(e^{(1)})$ if and only if the choice
$e^{(2)}_\Gamma=e^{(1)}_\Gamma$ satisfies it.  The two configurations have the same
outside weight and local weights $W_\Gamma(e^{(1)}_\Gamma+\gamma)$ and
$W_\Gamma(e^{(1)}_\Gamma)$, respectively.  Hence
\begin{equation}
\begin{aligned}
    \Pr_{e^{(2)} \sim q_B^{\sigma_B(e^{(1)})}}
\!\left(e^{(2)}_\Gamma=e^{(1)}_\Gamma+\gamma
    \right)
    &\leq
    \frac{
    W_\Gamma(e^{(1)}_\Gamma+\gamma)
    }{
    W_\Gamma(e^{(1)}_\Gamma+\gamma)
    +
    W_\Gamma(e^{(1)}_\Gamma)
    }
    \\
    &\leq
    \sqrt{
    \frac{
    W_\Gamma(e^{(1)}_\Gamma+\gamma)
    }{
    W_\Gamma(e^{(1)}_\Gamma)
    }
    }.
\end{aligned}
\label{eq:local-ratio-bound}
\end{equation}
If $W_\Gamma(e^{(1)}_\Gamma)=0$, this value of $e^{(1)}$ has zero
$q_B$-probability and contributes nothing below.  Thus the ratio is used
only on the support of $q_B$, including for channels with zero-probability
Pauli labels.

Combining Eqs.~\eqref{eq:component-probability-average} and
\eqref{eq:local-ratio-bound} and averaging over $e^{(1)}$ gives
\begin{equation}
\begin{split}
  \mathbb E_{\tau\sim\mathsf P_B}
  \Pr_{(e^{(1)},e^{(2)})\sim(q_B^\tau)^{\otimes2}}
  \left(\Delta_\Gamma=\gamma\right)
  &\leq
  \sum_{x\in G^\Gamma}
  W_\Gamma(x)
  \sqrt{
  \frac{W_\Gamma(x+\gamma)}
       {W_\Gamma(x)}
  }
  \\
  &=
  \sum_{x\in G^\Gamma}
  \sqrt{W_\Gamma(x)W_\Gamma(x+\gamma)}
  \\
  &=
  \prod_{i\in\Gamma}
  \left(
    \sum_{u\in G}
    \sqrt{q_i(u)q_i(u+\gamma_i)}
  \right).
\end{split}
\label{eq:hellinger-component}
\end{equation}
For uniform qudit depolarizing noise, $q_i=q$, and for every nonzero $g\in G$,
\begin{equation}
\begin{split}
  \sum_{u\in G}\sqrt{q(u)q(u+g)} =
  2\sqrt{\frac{p(1-p)}{d^2-1}}
  +
  (d^2-2)\frac{p}{d^2-1}.
\end{split}
\end{equation}
Summing over the $d^2-1$ nonzero values of $g$ gives the depolarizing
expression for $\alpha$ stated in Theorem~\ref{thm:cmi-stability}.
\subsection{Everything together}
Let $E_{\mathrm{cc}}$ be the set of connected, crossing disagreement
patterns on $B$, defined by
\begin{equation}
\begin{split}
  E_{\mathrm{cc}}:=\bigl\{(\Gamma,\gamma):\;&
  \Gamma\subseteq B\text{ is connected},\quad
  \dist(\Gamma,\partial B_A)=\dist(\Gamma,\partial B_C)=1,\\
  &\gamma\in(G\setminus\{0\})^\Gamma,\quad
  \sigma_B(\gamma)=0\bigr\}.
\end{split}
\end{equation}
Every $(\Gamma,\gamma)\in E_{\mathrm{cc}}$ satisfies
$|\Gamma|\geq\dist(A,C)_B-1$.  By applying
Lemma~\ref{lem:lattice_animal}, when $ \alpha=e\frakd\sweight \max_{i\in\Lambda}
\left( \sum_{g\in G\setminus\{0\}}\sum_{u\in G}\sqrt{q_i(u)q_i(u+g)}\right)$, we have:
\begin{equation}
    \begin{split}
        \mathbb{E}_{\tau \sim \mathsf P_B}\delta_{\tau} &\leq \mathbb{E}_{\tau \sim \mathsf P_B}\Pr_{(e^{(1)},e^{(2)})\sim (q_B^\tau)^{\otimes 2}}
        \left[(e^{(1)},e^{(2)})\in \mathcal{C}_\tau\right] \\
        &\leq \sum_{(\Gamma,\gamma)\in E_{\mathrm{cc}}}
        \mathbb E_{\tau\sim\mathsf P_B}
        \Pr_{(e^{(1)},e^{(2)})\sim(q_B^\tau)^{\otimes2}}
        \left(\Delta_\Gamma=\gamma\right) \\
        & \leq \min(\Omega_A,\Omega_C)
        \sum_{m\geq \dist(A, C)_B - 1}\alpha^m
        \\
        &\leq \frac{\min(\Omega_A,\Omega_C)}{1-\alpha}
        \alpha^{\dist(A, C)_B - 1}.
    \end{split}
\end{equation}
Finally, Eq.~\eqref{eq:tv-crossing-bound} converts the crossing estimate into a mutual-information estimate.
A standard continuity bound~\cite{Audenaert_2007} for mutual information gives, for each $\tau$,
\begin{equation}
  I(\Sigma_{\textrm{int},\partial B_A}:\Sigma_{\textrm{int},\partial B_C}|\Sigma_{\textrm{bulk},B} = \tau)
  \leq h_2(\delta_\tau)
  +(\Omega_A+\Omega_C)\delta_\tau\log d\,
\end{equation}
We then average over $\tau$ and use the concavity of $h_2(x)$ alongside Jensen's inequality to see that
\begin{equation}
\begin{split}
    I(\Sigma_{\textrm{int},\partial B_A} : \Sigma_{\textrm{int},\partial B_C} | \Sigma_{\textrm{bulk},B})
    &= \mathbb{E}_{\tau \sim \mathsf P_B}
    \left[ I(\Sigma_{\textrm{int},\partial B_A} : \Sigma_{\textrm{int},\partial B_C}
    | \Sigma_{\textrm{bulk},B} = \tau ) \right] \\
    &\leq \mathbb{E}_{\tau \sim \mathsf P_B}
    \left[ h_2 (\delta_\tau) \right]
    +(\Omega_A+\Omega_C)\log d\,
    \mathbb{E}_{\tau \sim \mathsf P_B} \left[ \delta_\tau \right] \\
    &\leq h_2 \left( \mathbb{E}_{\tau \sim \mathsf P_B} [\delta_\tau] \right)
    +\frakd(|\partial B_A| + |\partial B_C|)\log d\,
    \mathbb{E}_{\tau \sim \mathsf P_B} \left[ \delta_\tau \right].
\end{split}
\end{equation}

For brevity, set
\begin{equation}
  s:=|\partial B_A|+|\partial B_C|,
  \qquad t:=\dist(A,C)_B-1,
  \qquad \overline\delta:=\mathbb E_{\tau\sim\mathsf P_B}[\delta_\tau].
\end{equation}
First assume that
\begin{equation}
  \varepsilon:=\frac{\frakd s}{1-\alpha}\alpha^t\leq\frac12.
\end{equation}
Then $\overline\delta\leq\varepsilon$, and the standard bound
$h_2(x)\leq2(\log2)\sqrt{x(1-x)}$ gives
\begin{equation}
\begin{split}
  h_2(\overline\delta)
  &\leq 2(\log2)\sqrt{\overline\delta}
  \leq 2(\log2)\sqrt{\frac{\frakd s}{1-\alpha}}\,\alpha^{t/2}
  \leq \frakd^2(\log d)s^2\alpha^{t/2}, \\
  \frakd s\log d\,\overline\delta
  &\leq \frac{\frakd^2 s^2\log d}{1-\alpha}\alpha^t
  \leq 2\frakd^2(\log d)s^2\alpha^t.
\end{split}
\end{equation}
Here we used $s\geq2$, $d\geq2$, and $\alpha<0.2$.  Since
$\alpha^t\leq\alpha^{t/2}$, it follows that, with
$c_0=\frac12\log(1/\alpha)$,
\begin{equation}
  I(\Sigma_{\textrm{int},\partial B_A}:\Sigma_{\textrm{int},\partial B_C}|\Sigma_{\textrm{bulk},B})
  \leq3\frakd^2(\log d)s^2e^{-c_0t}.
  \label{eq:final_inequality_CMI}
\end{equation}

It remains to consider $\varepsilon>1/2$.  In this case,
$\alpha^{t/2}>\sqrt{(1-\alpha)/(2\frakd s)}$, and hence, using
$\frakd\geq1$, $s\geq2$, and
$\alpha<0.2$,
\begin{equation}
  3\frakd^2s^2e^{-c_0t}
  =3\frakd^2s^2\alpha^{t/2}
  >3\frakd^2s^2\sqrt{\frac{1-\alpha}{2\frakd s}}
  >2\frakd s.
\end{equation}
The general upper bound
$I(\Sigma_{\textrm{int},\partial B_A}:\Sigma_{\textrm{int},\partial B_C}|\Sigma_{\textrm{bulk},B})
\leq2(\Omega_A+\Omega_C)\log d
\leq2\frakd s\log d$
therefore proves Eq.~\eqref{eq:final_inequality_CMI}
in this case as well.  Combining this estimate with the preceding exact
CMI decomposition proves Eq.~\eqref{eq:main-bound}.


\section{Proof of Theorem 2}
\label{app:lnhn-proof}
In this appendix we prove Theorem~\ref{thm:information_percolation}. At weak noise, Theorem~\ref{thm:cmi-stability} controls the CMI by showing that error clusters are unlikely to cross the buffer. At strong noise, Theorem~\ref{thm:information_percolation} relies on the complementary mechanism: although error clusters may percolate, the noise on $B$ erases the information carried across the buffer, driving $A$ and $C$ toward conditional independence.

\emph{Strategy.}
The proof has three steps.
\begin{enumerate}
    \item\emph{Local classical representation} (Secs.~\ref{sec:lnhn-cleaning}, \ref{sec:lnhn-classical}, and \ref{sec:lnhn-percolation}).
    After removing noise on $A$ and $C$ by data processing, it suffices to consider $\widetilde\rho=\mathcal N_B(\rho_0)$. We introduce a classical record $Y=\mathsf G+E$, where $\mathsf G$ is uniformly distributed over the normalizer $\mathsf N=\mathsf S^\perp$ and is independent of the Pauli error $E$. Adding $\mathsf G$ preserves every stabilizer syndrome while making the record uniform over all Pauli labels consistent with that syndrome. This gives the exact identities
    \begin{align*}
        S(\widetilde\rho_Q)&=H(Y_Q)-|Q|\log d,\\
        I(A:C\mid B)_{\widetilde\rho}&=I(Y_A:Y_C\mid Y_B).
    \end{align*}
    Thus, the quantum CMI is reduced to the classical CMI of a single qudit-indexed random record.

    To make this representation local, we use Lemma~\ref{lem:lnhn-decomposition}. Under the assumption of Theorem~\ref{thm:information_percolation}, the normalizer is generated by the local subgroup $\mathsf H$ together with elements supported on $AD$ or $CD$, where $D:=\Lambda\setminus(A\cup B\cup C)$. 
    Assigning independent uniform coefficients to these generators produces the uniform label $\mathsf G$. 
    The coefficients associated with normalizer generators supported only on the \(A\) or \(C\) side can be removed by data processing, while each buffer observation depends only on the local generator coefficients touching that qudit and its onsite error. This is the local hidden-variable structure used in the information-percolation bound.

    \item\emph{Information percolation} (Sec.~\ref{sec:lnhn-percolation}).        
    Each buffer observation depends on nearby independent generator coefficients through a channel with contraction coefficient at most $\eta_i$. The information-percolation inequality of Polyanskiy and Wu~\cite{PolyanskiyWu2020} bounds the CMI by $\log d$ times the expected number of interface generators connected to the opposite side in an auxiliary percolation process, where each buffer qudit $i$ is independently open with probability $\eta_i$.

    \item\emph{Path counting} (Sec.~\ref{sec:lnhn-path-counting}).
    Any connection between the two interfaces must pass through at least $r_{\mathsf H}=\dist_{\mathsf H}(A,C)_B-1$ distinct buffer qudits. Each generator acts on at most $w_{\mathsf H}$ qudits, and each qudit participates in at most $D_{\mathsf H}$ generators. These constraints limit the number of possible connecting paths, while the probability that any given path remains open is at most $\bar\eta^{\,r_{\mathsf H}}$. A union bound gives a crossing probability of at most $\alpha_{\mathsf H}^{\,r_{\mathsf H}}$, where $\alpha_{\mathsf H}=(w_{\mathsf H}-1)(D_{\mathsf H}-1)\bar\eta$. Summing over the generators at the smaller interface, whose number is bounded by $D_{\mathsf H}\min\{|\partial_{\mathsf H} A|,|\partial_{\mathsf H} C|\}$, yields Eq.~\eqref{eq:lnhn-spatial}.
\end{enumerate}

\subsection{Two-sided normalizer decomposition}
\label{sec:lnhn-cleaning}
 
\emph{Conventions.}
As in the main text, Pauli operators are identified with their labels in $G^\Lambda$, $G=\mathbb Z_d^2$, and two labels commute when the corresponding Paulis do. For a label subgroup $\mathsf F$, its commutant $\mathsf F^\perp$ is the set of labels commuting with every element of $\mathsf F$. Thus, if $\mathsf S$ is the stabilizer label subgroup, then $\mathsf N=\mathsf S^\perp$ is the Pauli normalizer.
For a region $Q\subseteq\Lambda$, we write $x|_Q$ for the restriction of a label $x$ to $Q$ and denote by $\mathsf F(Q)$ the elements of $\mathsf F$ supported entirely in $Q$. In particular, $\mathsf S(Q)$ is the label representation of the stabilizer subgroup $\mathcal S(Q)$.
For a subgroup of labels supported on $Q$, we write $\perp_Q$ for the commutant taken within $G^Q$. Finally, let $D:=\Lambda\setminus(A\cup B\cup C)$, which may be empty.

We will repeatedly use two standard duality properties of Pauli labels. First, the commutation pairing on $G^Q$ is nondegenerate, so for any subgroup $\mathsf F\subseteq G^Q$,
\begin{equation}
    (\mathsf F^{\perp_Q})^{\perp_Q} = \mathsf F,
    \qquad
    |\mathsf F|\,|\mathsf F^{\perp_Q}| = d^{2|Q|}.
    \label{eq:lnhn-duality}
\end{equation}
Second, a label supported on $Q$ commutes with every restriction $f|_Q$, $f\in\mathsf F$, if and only if its zero extension to $\Lambda$ commutes with every element of $\mathsf F$. Hence
\begin{equation}
    (\mathsf F|_Q)^{\perp_Q} = \mathsf F^\perp(Q).
\end{equation}
Applying this identity to $\mathsf S$ and $\mathsf N=\mathsf S^\perp$, and then using Eq.~\eqref{eq:lnhn-duality}, gives
\begin{equation}
    \mathsf S|_Q = \mathsf N(Q)^{\perp_Q},
    \qquad
    \mathsf N|_Q = \mathsf S(Q)^{\perp_Q}.
    \label{eq:lnhn-restriction-duality}
\end{equation}
Thus, the restrictions of stabilizers to $Q$ are precisely the labels that commute with every normalizer element supported in $Q$, while the restrictions of normalizer elements to $Q$ are precisely the labels that commute with every stabilizer supported in $Q$.

\emph{Assumptions.}
We now express the two hypotheses in the algebraic form used in the normalizer decomposition. First, the zero-CMI assumption~\eqref{eq:lnhn-markov} implies
\begin{equation}
    \mathsf S(ABC)=\mathsf S(AB)+\mathsf S(BC),
    \label{eq:lnhn-S-decomposition}
\end{equation}
where the sum denotes componentwise addition of Pauli labels.

Second, the completeness assumption in theorem~\ref{thm:information_percolation} requires $\mathsf H|_B=\mathsf N|_B$. To express this condition in terms of labels supported in $B$, we use Eq.~\eqref{eq:lnhn-restriction-duality}:
\begin{equation}
    \mathsf H|_B = \bigl(\mathsf H^\perp(B)\bigr)^{\perp_B},
    \qquad
    \mathsf N|_B = \mathsf S(B)^{\perp_B}.
\end{equation}
Taking commutants within $B$ and applying Eq.~\eqref{eq:lnhn-duality}, this equality of restrictions is equivalent to
\begin{equation}
    \mathsf H^\perp(B) = \mathsf S(B).
    \label{eq:lnhn-completeness-equiv}
\end{equation}

Together, these two relations yield the following decomposition: every normalizer label is a sum of an element of $\mathsf H$ and labels supported in $AD$ and $CD$. Thus the local generators of $\mathsf H$ suffice to represent all normalizer degrees of freedom acting on the buffer.

\begin{lemma}[Two-sided normalizer decomposition]
\label{lem:lnhn-decomposition}
Under the assumption of theorem~\ref{thm:information_percolation},
\begin{equation}
    \mathsf N = \mathsf H+\mathsf N(AD)+\mathsf N(CD).
    \label{eq:lnhn-N-decomposition}
\end{equation}
\end{lemma}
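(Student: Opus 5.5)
The plan is to prove Eq.~\eqref{eq:lnhn-N-decomposition} by dualizing. The right-hand side $\mathsf M:=\mathsf H+\mathsf N(AD)+\mathsf N(CD)$ is contained in $\mathsf N$, because each summand is a subgroup of $\mathsf N$ (recall $\mathsf H\subseteq\mathsf N$). It therefore suffices to show $\mathsf M^\perp\subseteq\mathsf N^\perp=\mathsf S$. Nondegeneracy, Eq.~\eqref{eq:lnhn-duality} on $\Lambda$, then gives $\mathsf M=\mathsf M^{\perp\perp}\supseteq\mathsf S^\perp=\mathsf N$. We have
\begin{equation*}
\mathsf M^\perp=\mathsf H^\perp\cap\mathsf N(AD)^\perp\cap\mathsf N(CD)^\perp .
\end{equation*}
The proof therefore reduces to showing that any label $x$ commuting with $\mathsf H$, with $\mathsf N(AD)$, and with $\mathsf N(CD)$ is a stabilizer.

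For the main step, take such an $x$. By Eq.~\eqref{eq:lnhn-restriction-duality} applied to $Q=AD$, commuting with $\mathsf N(AD)$ means $x|_{AD}\in\mathsf N(AD)^{\perp_{AD}}=\mathsf S|_{AD}$. So there is $s_1\in\mathsf S$ with $s_1|_{AD}=x|_{AD}$. Replacing $x$ by $x-s_1$ keeps all three commutation properties, since $\mathsf S$ commutes with the whole normalizer. The new $x$ is supported on $BC$. Because it commutes with $\mathsf N(CD)$ and is supported on $BC$, its restriction $x|_C$ lies in $\mathsf S|_{C}$. More precisely, $x|_{CD}=x|_C$ (zero on $D$) lies in $\mathsf S|_{CD}$, so there is $s_2\in\mathsf S$ with $s_2|_{CD}=x|_{CD}$. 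Then $x-s_2$ vanishes on $CD$, but $s_2$ may have support on $A$.

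To avoid that, I would choose $s_2$ more carefully, and this is where the zero-CMI hypothesis Eq.~\eqref{eq:lnhn-S-decomposition} must enter. The natural target is an $s_2$ supported on $BCD$ whose restriction to $C$ and $D$ agrees with $x$ (zero on $D$). The cleanest way to organize this is to apply the duality repeatedly with the single commutant condition: $x$ is a stabilizer iff $x\in\mathsf N^\perp$. One then checks commutation of $x$ (supported on $BC$) with arbitrary $n\in\mathsf N$. I would try to show that $\mathsf N$ restricted to $BC$ is spanned by restrictions of $\mathsf H$, $\mathsf N(CD)$, and elements of $\mathsf N$ vanishing on $BC$, using $\mathsf S(ABC)=\mathsf S(AB)+\mathsf S(BC)$ in dual form (a statement about $\mathsf N|_{ABC}$ as an intersection of $\mathsf S(AB)^{\perp}$ and $\mathsf S(BC)^{\perp}$ restricted appropriately).

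Once $x$ is reduced to a label supported in $B$ (after subtracting stabilizers on the $C$ side, supported in $BCD$), the completeness condition finishes the proof. In its equivalent form Eq.~\eqref{eq:lnhn-completeness-equiv}, it says that a $B$-supported label commuting with $\mathsf H$ lies in $\mathsf S(B)$. Thus the final step is immediate, and $x$ is a sum of stabilizers.

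The main obstacle is the middle reduction. One must subtract a stabilizer matching $x$ on $C\cup D$ without reintroducing support on $A$, and do it in a way that keeps $x$ commuting with $\mathsf H$. The tool for this is the Markov factorization Eq.~\eqref{eq:lnhn-S-decomposition}, possibly combined with the separation of $A$ from $C$ by $B$ in $\Gamma_{\mathsf H}$. Concretely, I would first try to write $s_1$ (matching $x$ on $AD$) as a sum $a+c$ with $a\in\mathsf S(AB)$-like pieces carrying the $A$-part and $c$ carrying the $D$-part. If $D\neq\varnothing$ creates difficulty, I would instead handle $A$ and $C$ symmetrically in one step, by matching $x$ on $A\cup C\cup D$ with a single stabilizer. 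That stabilizer is built from the $AD$- and $CD$-matching stabilizers whose discrepancy on $D$ is absorbed using Eq.~\eqref{eq:lnhn-S-decomposition}.
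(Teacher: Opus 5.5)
Your outer framework is right and matches the paper: dualize, show that every $x\in\mathsf M^\perp$ is a stabilizer, subtract a stabilizer $s_1$ agreeing with $x$ on $AD$, and finish with completeness in the form $\mathsf H^\perp(B)=\mathsf S(B)$. The gap is the step you flag as the main obstacle, getting from a $BC$-supported label to a $B$-supported one, which is never carried out. You list several strategies without completing any: a dual description of $\mathsf N|_{BC}$, splitting $s_1$ into pieces, and using separation in $\Gamma_{\mathsf H}$. One of them cannot work as stated. Since $s_1$ agrees with $x$ on $D$, it generally has support on $D$. Then $s_1\notin\mathsf S(ABC)$, so Eq.~\eqref{eq:lnhn-S-decomposition} does not apply to it.

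The missing observation is that the stabilizer $s_2$ you already built is the right object to decompose. After passing to $x':=x-s_1$, which is supported on $BC$, you have $s_2|_{CD}=x'|_{CD}$. In particular $s_2|_D=x'|_D=0$, so $s_2\in\mathsf S(ABC)$. The Markov factorization then gives $s_2=a+c$ with $a\in\mathsf S(AB)$ and $c\in\mathsf S(BC)$. Because $a$ vanishes on $C$, $c|_C=s_2|_C=x'|_C$, and $c$ vanishes on $AD$. Hence $z:=x'-c$ is supported on $B$. It commutes with $\mathsf H$ because $x'$ and $c$ both do, so $z\in\mathsf S(B)$ and $x=s_1+c+z\in\mathsf S$. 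No separation property in $\Gamma_{\mathsf H}$ is needed.

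The paper does the same thing symmetrically. It takes $s_1,s_2$ matching the original $x$ on $AD$ and $CD$, writing $x=s_1+v=s_2+u$ with $v\in\mathsf N(BC)$ and $u\in\mathsf N(AB)$. It notes that $s_1-s_2=u-v\in\mathsf S(ABC)$ and splits this as $s_{AB}+s_{BC}$. This yields the $B$-supported witness $z=u-s_{AB}=v+s_{BC}$.

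Your last fallback sentence points toward this construction but misplaces the mechanism. The two matching stabilizers have no discrepancy on $D$ at all. That vanishing is exactly what puts $s_1-s_2$ in $\mathsf S(ABC)$. The Markov factorization is then used to split this discrepancy between the $A$ side and the $C$ side, not to absorb anything on $D$.
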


\begin{proof}
Set $\mathsf T:=\mathsf H+\mathsf N(AD)+\mathsf N(CD)\subseteq\mathsf N$. By Eq.~\eqref{eq:lnhn-duality}, it suffices to show $\mathsf T^\perp\subseteq\mathsf S$. Take $x\in\mathsf T^\perp$. Since $\mathsf S\subseteq\mathsf H$, we have $x\in\mathsf N$. Commutation with $\mathsf N(AD)$ and $\mathsf N(CD)$, together with Eq.~\eqref{eq:lnhn-restriction-duality}, gives stabilizers $s_1,s_2\in\mathsf S$ such that
\begin{equation}
    x=s_1+v=s_2+u,
    \qquad
    v\in\mathsf N(BC),\quad u\in\mathsf N(AB).
    \label{eq:lnhn-two-representatives}
\end{equation}
Since $u-v=s_1-s_2\in\mathsf S(ABC)$, Eq.~\eqref{eq:lnhn-S-decomposition} yields
\begin{equation}
    u-v=s_{AB}+s_{BC},
    \qquad
    s_{AB}\in\mathsf S(AB),\quad s_{BC}\in\mathsf S(BC).
\end{equation}
Consequently,
\begin{equation}
    z:=u-s_{AB}=v+s_{BC}\in\mathsf N(B),
    \qquad x=s_2+s_{AB}+z,
    \label{eq:lnhn-buffer-witness}
\end{equation}
because the two expressions for $z$ are supported in $AB$ and $BC$, respectively. Both $x$ and $s_2+s_{AB}$ commute with $\mathsf H\subseteq\mathsf N$, so $z\in\mathsf H^\perp(B)=\mathsf S(B)$ by Eq.~\eqref{eq:lnhn-completeness-equiv}. Hence $x\in\mathsf S$.
\end{proof}

The two assumptions enter at distinct steps. Eq.~\eqref{eq:lnhn-markov} allows the side-supported representatives to be combined into a buffer-supported label, while the local completeness assumption forces that label to be a stabilizer. Without completeness, a logical degree of freedom in the buffer can sustain a nonzero CMI at strong noise. For example, in the toric code, let $A$ be an annulus winding around the torus and let $B$ consist of its two neighboring annuli. Then $I(A:C\mid B)_{\mathcal N_B(\rho_0)}\to\log 2$ as $p\to\tfrac12$, so the buffer-noise CMI on the right-hand side of Eq.~\eqref{eq:lnhn-outer-DPI} does not vanish in this limit.

\subsection{From the noisy state to a classical record}
\label{sec:lnhn-classical}

Noise acting on $D$ disappears upon tracing out $D$, while noise on $A$ and $C$ can be removed by data processing, exactly as in Eq.~\eqref{eq:data_processing_ineq}. Hence
\begin{equation}
    I(A:C\mid B)_\rho
    \leq
    I(A:C\mid B)_{\widetilde\rho},
    \qquad
    \widetilde\rho:=\mathcal N_B(\rho_0).
    \label{eq:lnhn-outer-DPI}
\end{equation}
It therefore suffices to study noise acting only on the buffer $B$.

\emph{From syndrome entropy to the qudit record.}
Let $E$ be the Pauli error label of $\widetilde\rho$, with $E_i\sim q_i$ independently for $i\in B$ and $E_i=0$ off $B$, and let $\widehat\Sigma_Q:=\chi_E|_{\mathcal S(Q)}$ be its syndrome on $Q$. Each quantum syndrome sector has dimension $d^{|Q|}/|\mathsf S(Q)|$, and $\widetilde\rho_Q$ is maximally mixed within each occupied sector. Thus
\begin{equation}
 S(\widetilde\rho_Q)
 =H(\widehat\Sigma_Q)+|Q|\log d-\log|\mathsf S(Q)|.
 \label{eq:lnhn-quantum-syndrome-entropy}
\end{equation}
Now set $Y=\mathsf G+E$, where $\mathsf G$ is independent of $E$ and uniform over $\mathsf N$. By Eq.~\eqref{eq:lnhn-restriction-duality}, $\mathsf N|_Q=\mathsf S(Q)^{\perp_Q}$ is the kernel of the regional syndrome map. Adding $\mathsf G_Q$ therefore preserves the syndrome and makes $Y_Q$, conditioned on that syndrome, uniform over its $d^{2|Q|}/|\mathsf S(Q)|$ Pauli-label representatives. Since $\widehat\Sigma_Q$ is determined by $Y_Q$, the Shannon entropy chain rule gives
\begin{align}
 H(Y_Q)
 &=H(\widehat\Sigma_Q)+H(Y_Q\mid\widehat\Sigma_Q)\nonumber\\
 &=H(\widehat\Sigma_Q)+2|Q|\log d-\log|\mathsf S(Q)|.
\end{align}
Subtracting the quantum entropy expression yields
\begin{equation}
 S(\widetilde\rho_Q)=H(Y_Q)-|Q|\log d.
 \label{eq:lnhn-exact-entropy}
\end{equation}
Since the records on $AB$, $BC$, $B$, and $ABC$ are restrictions of the same $Y$, the $|Q|\log d$ corrections cancel, giving
\begin{equation}
    I(A:C\mid B)_{\widetilde\rho}=I(Y_A:Y_C\mid Y_B).
    \label{eq:lnhn-exact-CMI}
\end{equation}

\subsection{Local hidden variables and information percolation}
\label{sec:lnhn-percolation}

\emph{Local hidden-variable representation.}
To apply the information-percolation inequality, we realize the uniform normalizer label $\mathsf G$ from Sec.~\ref{sec:lnhn-classical} using independent generator coefficients. By Eq.~\eqref{eq:lnhn-N-decomposition}, the normalizer $\mathsf N$ can be generated by the local labels $\{t_\mu\}_{\mu\in\mathcal J}$ together with labels
\begin{align}
    \ell_j^A\in\mathsf N(AD),
    \qquad
    \ell_k^C\in\mathsf N(CD).
\end{align}
Choose coefficients $h_\mu,u_j,v_k\in\mathbb Z_d$ independently and uniformly, also independently of $E$, and define
\begin{equation}
    \mathsf G :=
    \sum_{\mu\in\mathcal J}h_\mu t_\mu
    +\sum_j u_j\ell_j^A
    +\sum_k v_k\ell_k^C,
    \qquad
    Y:=\mathsf G+E.
    \label{eq:lnhn-hidden-sum}
\end{equation}
Every label in $\mathsf N$ is produced by the same number of coefficient choices, so $\mathsf G$ is uniform over $\mathsf N$. 
$\mathsf G$ is independent of $E$ because the coefficients are chosen independently of the error.
We write $h=(h_\mu)_{\mu\in\mathcal J}$, $u=(u_j)_j$, and $v=(v_k)_k$ for the corresponding coefficient collections.

Let $V_Q:=\{\mu:\supp(t_\mu)\cap Q\neq\varnothing\}$ index the generators touching $Q$, and let $N_{\mathsf H}(i):=\{\mu:i\in\supp(t_\mu)\}$, so $|N_{\mathsf H}(i)|\leq D_{\mathsf H}$. The generator expansion gives
\begin{equation}
\begin{gathered}
    Y_i=F_i(h)+E_i,\qquad
    F_i(h):=\sum_{\mu\in N_{\mathsf H}(i)}h_\mu(t_\mu)_i,
    \quad i\in B,
\end{gathered}
\label{eq:lnhn-outer-functions}
\end{equation}
The records $Y_A$ and $Y_C$ are deterministic functions of $(h_{V_A},u)$ and $(h_{V_C},v)$, respectively. Conditioning on $Y_B$, which depends on $h$ but not on $u,v$, can correlate the $h$ coefficients, while $u,v$ remain mutually independent and independent of $h$. Conditional data processing therefore yields
\begin{equation}
\begin{split}
 I(Y_A:Y_C\mid Y_B)
 &\leq I\bigl((h_{V_A},u):(h_{V_C},v)\mid Y_B\bigr)\\
 &=I(h_{V_A}:h_{V_C}\mid Y_B).
\end{split}
\label{eq:lnhn-hidden-DPI}
\end{equation}
Together with Eqs.~\eqref{eq:lnhn-outer-DPI} and \eqref{eq:lnhn-exact-CMI}, this gives
\begin{equation}
 I(A:C\mid B)_\rho\leq I(h_{V_A}:h_{V_C}\mid Y_B).
 \label{eq:lnhn-ready-for-PW}
\end{equation}
The coefficients $h_\mu$ are independent, and the observations $Y_i$ are conditionally independent given $h$, each depending only on $h_{N_{\mathsf H}(i)}$. This is the structure required by the following information-percolation inequality.

\emph{Information percolation.}
We use the following finite-system form of Theorem~2 of Polyanskiy and Wu~\cite{PolyanskiyWu2020}.
\begin{theorem}[Information-percolation inequality, Polyanskiy--Wu] Let $V,W$ be finite index sets, let $(X_v)_{v\in V}$ be independent discrete random variables with finite entropies, and let $(Z_w)_{w\in W}$ be conditionally independent given $X_V$. For each $w\in W$, let $N(w)\subseteq V$ and suppose the conditional probability distribution of $Z_w$ depends only on $X_{N(w)}$. Assume this channel has input-unrestricted contraction coefficient at most $\theta_w\in[0,1]$: for every joint input probability distribution satisfying $U\to X_{N(w)}\to Z_w$,
\begin{align}
    I(U:Z_w)\leq\theta_w I(U:X_{N(w)}).
\end{align}
Here $U$ is an arbitrary auxiliary classical variable correlated with the channel input.

Associate to each observation vertex $w$ an independent Bernoulli variable that is open with probability $\theta_w$. We say that $v\in V$ is connected to a set $S\subseteq V$ if either $v\in S$, or there exists a sequence of open observations $w_1,\ldots,w_m$ such that
\begin{equation}
    v\in N(w_1),
    \qquad
    N(w_j)\cap N(w_{j+1})\neq\varnothing,
    \qquad
    N(w_m)\cap S\neq\varnothing.
\end{equation}
Then, for any $S_1,S_2\subseteq V$,
\begin{equation}
\begin{split}
    I(X_{S_1}:X_{S_2}\mid Z_W)
    \leq
    \Bigl(\max_{v\in V}H(X_v)\Bigr)
    \sum_{v\in S_1}
    \Pr_{\boldsymbol\theta}(v\leftrightarrow S_2).
\end{split}
\label{eq:lnhn-PW}
\end{equation}
Here $\Pr_{\boldsymbol\theta}(v\leftrightarrow S_2)$ is the probability that $v$ is connected to $S_2$ when each observation vertex $w$ is independently open with probability $\theta_w$.
\end{theorem}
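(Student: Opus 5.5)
The plan is to prove a stronger, recursive inequality by peeling off one observation at a time, and to make the order of peeling adaptive, so that the recursion becomes an exploration of the percolation cluster grown from $S_2$. For a subset $S\subseteq V$ and a subset $W'\subseteq W$ of observations still present, the quantity to track is
\begin{equation*}
  J(S,W'):=I(X_{S_1}:X_{S}\mid Z_{W'}).
\end{equation*}
The target is $J(S_2,W)$. Only observations $w$ with $N(w)\cap S\neq\varnothing$ will ever be peeled.

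\emph{One-step inequality.} Fix $w\in W'$ with $N(w)\cap S\neq\varnothing$ and set $W''=W'\setminus\{w\}$. The chain rule gives $I(X_{S_1}:X_S,Z_w\mid Z_{W''})=I(X_{S_1}:Z_w\mid Z_{W''})+J(S,W')$, so
\begin{equation*}
  J(S,W')\le I(X_{S_1}:X_S,Z_w\mid Z_{W''})=J(S,W'')+I(X_{S_1}:Z_w\mid X_S,Z_{W''}).
\end{equation*}
Conditionally on $(X_S,Z_{W''})$, we still have the Markov chain $X_{S_1}\to X_{N(w)}\to Z_w$. This holds because $Z_w$ depends only on $X_{N(w)}$ and is conditionally independent of the other observations given $X_V$. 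The channel kernel is unchanged by the conditioning, and the contraction coefficient is input-unrestricted. Applying the contraction pointwise in the conditioning value and averaging gives
\begin{equation*}
  I(X_{S_1}:Z_w\mid X_S,Z_{W''})\le\theta_w\,I(X_{S_1}:X_{N(w)}\mid X_S,Z_{W''})=\theta_w\bigl[J(S\cup N(w),W'')-J(S,W'')\bigr].
\end{equation*}
Combining the two displays,
\begin{equation*}
  J(S,W')\le(1-\theta_w)\,J(S,W'')+\theta_w\,J(S\cup N(w),W'').
\end{equation*}
This is exactly an expectation over an independent coin for $w$. If $w$ is closed, it is discarded. If $w$ is open, its neighbourhood is absorbed into the target set.

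\emph{Exploration and termination.} Iterate the one-step inequality, at each stage choosing any remaining observation whose neighbourhood meets the current target set. The choice may depend on previously revealed coins. Since each coin is fresh and independent, the nested convex combinations equal an expectation over independent Bernoulli($\theta_w$) variables. The process stops when no remaining observation touches the current set $S^\ast$. By construction, $S^\ast$ is $S_2$ together with the neighbourhoods of all open observations reachable from $S_2$ through chains of open observations with overlapping neighbourhoods. So $v\in S^\ast$ exactly when $v$ is connected to $S_2$ in the sense of the statement.

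\emph{Terminal bound.} At termination, every remaining $Z_{W'}$ depends only on $X_{V\setminus S^\ast}$. By independence of the $X_v$, the block $X_{S^\ast}$ is then independent of $(X_{S_1\setminus S^\ast},Z_{W'})$. Hence
\begin{equation*}
  J(S^\ast,W')=I(X_{S_1\cap S^\ast}:X_{S^\ast}\mid X_{S_1\setminus S^\ast},Z_{W'})\le H(X_{S_1\cap S^\ast})\le\Bigl(\max_vH(X_v)\Bigr)\,|S_1\cap S^\ast|.
\end{equation*}
Taking the expectation over the coins and using $\mathbb E|S_1\cap S^\ast|=\sum_{v\in S_1}\Pr_{\boldsymbol\theta}(v\leftrightarrow S_2)$ yields Eq.~\eqref{eq:lnhn-PW}.

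\emph{Main obstacle.} I expect the delicate step to be justifying that the adaptively ordered recursion really reproduces the product Bernoulli law, and that the cluster it explores is precisely the connectivity event in the statement. The remaining steps are the chain rule and independence.
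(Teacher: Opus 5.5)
Your proof is correct. The difference from the paper is that the paper does not prove this statement: it imports it as a ``finite-system form'' of Theorem~2 of Polyanskiy and Wu and applies it immediately.

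\textbf{The peeling step.} Your inequality $J(S,W')\le(1-\theta_w)J(S,W'')+\theta_wJ(S\cup N(w),W'')$ holds. The factorization $P(x_V)\prod_{w'}P(z_{w'}\mid x_{N(w')})$ makes $Z_w$ conditionally independent of $(X_{S_1},X_S,Z_{W''})$ given $X_{N(w)}$, with the kernel unchanged. This is exactly where the input-unrestricted contraction coefficient is needed. Conditioning on $(X_S,Z_{W''})$ destroys the product law of $X_{N(w)}$ and may freeze the coordinates in $N(w)\cap S$. The subtractions are legitimate because every $J$ is bounded by $H(X_{S_1})<\infty$.

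\textbf{The adaptive exploration.} This step is rigorous if you phrase it as an induction on $|W'|$, with hypothesis $J(S,W')\le\mathbb E_\xi[\text{terminal value of the exploration from }(S,W')\text{ driven by }\xi]$ and $\xi\sim\bigotimes_{w\in W'}\mathrm{Bern}(\theta_w)$. The adaptivity is what turns the bound into a cluster probability. Peeling every observation in a fixed order would absorb the neighbourhoods of all open observations, not only those reachable from $S_2$.

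\textbf{Comparison with the citation.} Your route makes the result self-contained. It also checks that the version the paper actually uses holds: neighbourhoods $N(w)$ of arbitrary size, the stated connectivity notion, and the prefactor $\max_vH(X_v)$. In fact you get the slightly sharper bound $\sum_{v\in S_1}H(X_v)\Pr_{\boldsymbol\theta}(v\leftrightarrow S_2)$. Your argument also shows that independence of the $X_v$ enters only in the terminal step. The citation buys brevity and the more general framework of the original reference.
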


Thus the residual conditional information is controlled by the probability that the two sets of hidden variables remain connected in an auxiliary percolation problem, where observation $w$ is retained with probability $\theta_w$.

We apply Eq.~\eqref{eq:lnhn-PW} with
\begin{equation}
    X_\mu=h_\mu,
    \qquad
    Z_i=Y_i,
    \qquad
    N(i)=N_{\mathsf H}(i),
    \qquad
    \theta_i=\eta_i .
\end{equation}
Since each $h_\mu$ is uniform on $\mathbb Z_d$,
$H(h_\mu)=\log d$. Taking $S_1=V_A$ and $S_2=V_C$ gives
\begin{equation}
    I(h_{V_A}:h_{V_C}\mid Y_B)
    \leq
    (\log d)
    \sum_{\mu\in V_A}
    \Pr_{\boldsymbol\eta}(\mu\leftrightarrow V_C).
    \label{eq:lnhn-PW-applied}
\end{equation}

Let
\begin{equation}
    \Sigma_A:=V_A\cap V_B,
    \qquad
    \Sigma_C:=V_C\cap V_B
\end{equation}
denote the generators crossing the $A$--$B$ and $B$--$C$ interfaces.
Because the only observation vertices lie in $B$, any generator in
$V_A\setminus\Sigma_A$ is disconnected from the buffer and therefore
cannot contribute to a path reaching $V_C$. Likewise, any path from the
$A$ side that reaches $V_C$ must first enter through $\Sigma_A$ and exit
through $\Sigma_C$.

Using Eq.~\eqref{eq:lnhn-ready-for-PW},
Eq.~\eqref{eq:lnhn-PW-applied}, and the symmetry under
$A\leftrightarrow C$, we obtain
\begin{equation}
\begin{split}
    I(A:C\mid B)_\rho
    \leq
    (\log d)\min\Bigl\{&
    \sum_{\mu\in\Sigma_A}
    \Pr_{\boldsymbol\eta}(\mu\leftrightarrow\Sigma_C),
    \sum_{\mu\in\Sigma_C}
    \Pr_{\boldsymbol\eta}(\mu\leftrightarrow\Sigma_A)
    \Bigr\}.
\end{split}
\label{eq:lnhn-percolation-bound}
\end{equation}

The probabilities in Eq.~\eqref{eq:lnhn-percolation-bound} refer to the
auxiliary Bernoulli percolation process in which each buffer qudit $i$
is independently open with probability $\eta_i$. In
particular, if $\eta_i=0$ for every $i\in B$, as for completely
randomizing noise on the buffer, then the CMI vanishes exactly.

\emph{Depolarizing noise.}
For $q_i(0)=1-p_i$ and $q_i(u)=p_i/(d^2-1)$ for $u\neq0$, set $a_i:=1-p_i$ and $b_i:=p_i/(d^2-1)$. By Theorem~1 of Ref.~\cite{OrdentlichPolyanskiy2020Binary}, the input-unrestricted contraction coefficient is the maximum over subchannels restricted to two input labels. The output alphabet remains $G$, and all input pairs are equivalent by symmetry. Fix inputs $x\neq x'$. The outputs $x,x'$ have probabilities $(a_i,b_i)$ or $(b_i,a_i)$, respectively, while every other output has probability $b_i$ under either input.

Let $J$ indicate the event $Y\in\{x,x'\}$. Since $\Pr(J=1\mid X)=a_i+b_i$ for either input, $J$ is independent of $(U,X)$ for every $U\to X\to Y$. Conditional on $J=0$, the output distribution is identical for both inputs, so this branch carries no information. Because $J$ is determined by $Y$, the chain rule gives
\begin{align*}
 I(U:Y)&=I(U:J)+I(U:Y\mid J)\\
       &=(a_i+b_i)I(U:Y\mid J=1).
\end{align*}
Thus conditioning on $J=1$ retains the probability of this branch rather than discarding the other outputs. On this branch, the channel is binary symmetric with crossover probability $\delta_i=b_i/(a_i+b_i)$, and the joint input distribution of $(U,X)$ is unchanged. Equation~(16) of Ref.~\cite{PolyanskiyWu2017SDPI} gives its contraction coefficient as $(1-2\delta_i)^2$. Taking the supremum of the information ratio therefore yields
\begin{equation}
    \eta_i=(a_i+b_i)\left(1-\frac{2b_i}{a_i+b_i}\right)^2
    =\frac{\bigl(1-p_i-p_i/(d^2-1)\bigr)^2}
    {1-p_i+p_i/(d^2-1)}.
    \label{eq:lnhn-depolarizing-eta}
\end{equation}
 
\subsection{Path counting}
\label{sec:lnhn-path-counting}

The connection probabilities in Eq.~\eqref{eq:lnhn-percolation-bound}
are monotone in the opening probabilities. We may therefore replace all
$\eta_i$ by their maximum $\bar\eta$. Define
\begin{equation}
\begin{gathered}
 \bar\eta:=\max_{i\in B}\eta_i,
 \qquad
 \alpha_{\mathsf H}:=(w_{\mathsf H}-1)(D_{\mathsf H}-1)\,\bar\eta, \qquad
 r_{\mathsf H}:=\dist_{\mathsf H}(A,C)_B-1\geq1.
\end{gathered}
 \label{eq:lnhn-alpha}
\end{equation}

Fix $\mu\in\Sigma_A$ and suppose that it is connected to $\Sigma_C$. Then there exists a simple alternating path
\begin{equation}
    \mu=\mu_0,i_1,\mu_1,\ldots,i_m,\mu_m,
\end{equation}
where the $i_k$ are distinct open qudits in $B$,
\begin{equation}
    i_k\in
    \supp(t_{\mu_{k-1}})
    \cap
    \supp(t_{\mu_k}),
\end{equation}
and $\mu_m\in\Sigma_C$. Choose
$a\in A\cap\supp(t_{\mu_0})$ and
$c\in C\cap\supp(t_{\mu_m})$. Then
\begin{equation}
    a,i_1,\ldots,i_m,c
\end{equation}
forms a path from $A$ to $C$ in $\Gamma_{\mathsf H}$ whose internal vertices lie in $B$. Hence
\begin{equation}
    m\geq\dist_{\mathsf H}(A,C)_B-1=r_{\mathsf H}.
    \label{eq:lnhn-incidence-distance}
\end{equation}
Thus every crossing contains at least $r_{\mathsf H}$ open buffer qudits. It is therefore enough to count the prefix ending at the $r_{\mathsf H}$-th such qudit.

Recall that every generator $t_\mu$ has weight at most $w_{\mathsf H}$, while each qudit belongs to the support of at most $D_{\mathsf H}$ generators. Since $\mu_0$ already intersects $A$, there are at most $w_{\mathsf H}-1$ choices for the first buffer qudit. At each subsequent buffer qudit, there are at most $D_{\mathsf H}-1$ choices for the next generator, excluding the generator just used, and at each new generator there are at most $w_{\mathsf H}-1$ choices for the next qudit. The number of possible prefixes is therefore bounded by
\begin{equation}
    (w_{\mathsf H}-1)
    \bigl[(D_{\mathsf H}-1)(w_{\mathsf H}-1)\bigr]^{r_{\mathsf H}-1}
    \leq
    \bigl[(w_{\mathsf H}-1)(D_{\mathsf H}-1)\bigr]^{r_{\mathsf H}}.
    \label{eq:lnhn-prefix-count}
\end{equation}

Each such prefix requires $r_{\mathsf H}$ distinct buffer qudits to be open, which under the enlarged opening probability occurs with probability $\bar\eta^{\,r_{\mathsf H}}$. A union bound therefore gives
\begin{equation}
\begin{split}
    \Pr_{\boldsymbol\eta}
    (\mu\leftrightarrow\Sigma_C)
    &\leq
    \Pr_{\bar\eta}
    (\mu\leftrightarrow\Sigma_C)
    \\
    &\leq
    \bigl[(w_{\mathsf H}-1)(D_{\mathsf H}-1)\bar\eta\bigr]^{r_{\mathsf H}}
    =
    \alpha_{\mathsf H}^{\,r_{\mathsf H}},
\end{split}
\label{eq:lnhn-path-probability}
\end{equation}
where $\alpha_{\mathsf H}=(w_{\mathsf H}-1)(D_{\mathsf H}-1)\bar\eta$. The same bound holds with $A$ and $C$ exchanged.

It remains to count the generators at the two interfaces. Every $\mu\in\Sigma_A$ contains at least one qudit in $A$ and one in $B$. The $A$-side qudit therefore belongs to $\partial_{\mathsf H} A$. Since each qudit lies in at most $D_{\mathsf H}$ generator supports,
\begin{equation}
    |\Sigma_A|
    \leq
    D_{\mathsf H}|\partial_{\mathsf H} A|,
    \qquad
    |\Sigma_C|
    \leq
    D_{\mathsf H}|\partial_{\mathsf H} C|.
    \label{eq:lnhn-boundary-count}
\end{equation}
Substituting Eqs.~\eqref{eq:lnhn-path-probability} and
\eqref{eq:lnhn-boundary-count} into Eq.~\eqref{eq:lnhn-percolation-bound}, and using
\begin{equation}
    \alpha_{\mathsf H}^{\,r_{\mathsf H}}
    =
    e^{-c_{\mathsf H}[\dist_{\mathsf H}(A,C)_B-1]},
    \qquad
    c_{\mathsf H}:=\log(1/\alpha_{\mathsf H}),
\end{equation}
proves Eq.~\eqref{eq:lnhn-spatial}.
\hfill$\square$

\section{The Toric Code under $X$-dephasing}
\label{sec:lnhn-toric}
 
For toric code under $X$-dephasing, the classical record splits
into independent $X$ and $Z$ sectors. The reference Markov condition
makes the noiseless $Z$-sector CMI vanish, so only the noisy $X$ sector
contributes. Qubits lie on the edges of an $L\times L$ periodic square
lattice, with stars $A_v=\prod_{e\ni v}X_e$ and plaquettes
$B_p=\prod_{e\in\partial p}Z_e$. The local $X$-sector generators are
the stars: each has weight four and each qubit belongs to two stars,
so $w_X=4$ and $D_X=2$.
The onsite noise is
\begin{equation}
 q(0)=1-p,\qquad q(x)=p,\qquad q(y)=q(z)=0,
 \label{eq:toric-noise}
\end{equation}
where $x=(1,0)\in G$. Only plaquettes incident to an odd number of
error qubits carry syndrome. Throughout, $A\sqcup B\sqcup C=\Lambda$
with $B$ separating $A$ from $C$, as in
Fig.~\ref{fig:cmi-partitions}(a).

The $X$-sector observation channel is binary symmetric with crossover
probability $p$, whose contraction coefficient is~\cite[Eq.~(16)]{PolyanskiyWu2017SDPI}
\begin{equation}
    \eta^{(X)}=(1-2p)^2.
    \label{eq:toric-sector-eta}
\end{equation}

\begin{appendixcorollary}[Toric-code noise thresholds]
\label{cor:toric-thresholds}
Let $\rho=\noise(\rhozero)$ for the toric code under independent $X$-dephasing with probability $0\leq p\leq\tfrac12$, and let $A\sqcup B\sqcup C=\Lambda$ with $B$ separating $A$ from $C$.
\begin{enumerate}
\item[(i)] The error-cluster argument gives exponential CMI decay whenever
\begin{equation}
 6\sqrt{p(1-p)}<1,
 \qquad\text{equivalently}\qquad
 p<p_{\mathrm{EC}}:=\frac12-\frac{\sqrt2}{3}\simeq0.029.
 \label{eq:toric-EC}
\end{equation}
\item[(ii)] If $B$ contains no non-contractible loop of the lattice or its dual, the information-percolation argument gives exponential CMI decay whenever
\begin{equation}
 (1-2p)^2<\frac12,
 \qquad\text{equivalently}\qquad
 p>p_{\mathrm{IP}}:=\frac12-\frac{\sqrt2}{4}\simeq0.146.
 \label{eq:toric-IP}
\end{equation}
\end{enumerate}
More precisely, for each fixed $p$ in either regime there are constants $K(p),c(p)>0$, independent of system size, such that
\begin{equation}
 \begin{split}
 I(A:C\mid B)_\rho\leq{}&K(p)\bigl(|\partial B_A|+|\partial B_C|\bigr)^2 e^{-c(p)[\dist(A,C)_B-1]}.
 \end{split}
\end{equation}
\end{appendixcorollary}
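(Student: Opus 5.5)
The plan is to treat the two regimes separately. In each, I specialize the general machinery to the $X$ sector and replace the crude counting in Theorems~\ref{thm:cmi-stability} and \ref{thm:information_percolation} by sharper, lattice-specific counting. First I split the classical record (equivalently the syndrome) into $X$ and $Z$ sectors, which are independent because the noise is pure $X$. The $Z$-sector (star-syndrome) contribution to the CMI equals its noiseless value, which vanishes by the Markov property of $\rho_0$ for full tripartitions. So only plaquette syndromes of $X$ errors matter, and data processing reduces to $\mathcal N_B(\rho_0)$ as in Eq.~\eqref{eq:data_processing_ineq}.

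\emph{Regime (i).} I will rerun the coupling argument of Appendix~\ref{sec:main-proof} verbatim up to Eq.~\eqref{eq:tv-crossing-bound}. The only change is the final counting, which I sharpen. The disagreement $\Delta=e^{(2)}-e^{(1)}$ is $\mathbb Z_2$-valued with zero bulk plaquette syndrome in $B$. Hence $\supp\Delta$, viewed on the dual lattice, is a $1$-chain whose boundary lies only on plaquettes crossing the interfaces. If a component touches both $\partial B_A$ and $\partial B_C$, it therefore contains a simple dual path $\pi$ of edges from an $A$-interface plaquette to a $C$-interface plaquette, of length $m\geq\dist(A,C)_B-1$. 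Crucially, the indicator $\mathbf 1_\pi$ itself has zero bulk syndrome, since its dual boundary sits on interface plaquettes. The shift argument of Eq.~\eqref{eq:local-ratio-bound}, applied with $\gamma=\mathbf 1_\pi$ and the event $\Delta_\pi=\mathbf 1_\pi$, then gives
\[
\mathbb E_\tau\Pr(\Delta_\pi=\mathbf 1_\pi)\leq\bigl(2\sqrt{p(1-p)}\bigr)^m .
\]
Here I use that $\sum_u\sqrt{q(u)q(u+x)}=2\sqrt{p(1-p)}$ and that only $g=x$ is a nonzero label. A union bound over non-backtracking dual paths follows: each step has $3$ choices, the path starts at one of $O(|\partial B_A|)$ interface plaquettes, and it has length at least $\dist(A,C)_B-1$. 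This gives a crossing probability $\lesssim|\partial B_A|\,(6\sqrt{p(1-p)})^{\dist(A,C)_B-1}$, which is summable exactly when $6\sqrt{p(1-p)}<1$. The entropy-continuity step at the end of Appendix~\ref{sec:main-proof} then yields the stated $K(p)(|\partial B_A|+|\partial B_C|)^2e^{-c(p)[\dist-1]}$ with $c(p)=\tfrac12\log\bigl(1/(6\sqrt{p(1-p)})\bigr)$.

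\emph{Regime (ii).} I apply Theorem~\ref{thm:information_percolation} with $\mathsf H$ generated by stars and plaquettes, i.e.\ $\mathsf H=\mathsf S$. The hypothesis that $B$ contains no non-contractible loop of the lattice or its dual is exactly what makes $\mathsf S(B)^{\perp_B}=\mathsf S|_B$. I would verify this by showing that any $B$-supported label commuting with all $B$-supported stabilizers, but not itself a restriction of a stabilizer, would be a logical string inside $B$. This gives completeness, $\mathsf H|_B=\mathsf N|_B$.

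Only the $X$-sector hidden variables (star coefficients) are observed noisily. Each qubit touches two stars and its channel is a binary symmetric channel with contraction $(1-2p)^2$, by Eq.~\eqref{eq:toric-sector-eta}. The Polyanskiy--Wu percolation process of Eq.~\eqref{eq:lnhn-percolation-bound} is therefore \emph{bond} percolation on the primal square lattice, with edge density $(1-2p)^2$. The $Z$-sector coefficients are observed noiselessly, but their CMI contribution vanishes by the Markov property. Instead of the union bound of Sec.~\ref{sec:lnhn-path-counting}, which would only give $3(1-2p)^2<1$, I invoke sharpness of the phase transition for subcritical bond percolation (Menshikov, Aizenman--Barsky) together with $p_c=1/2$ on $\mathbb Z^2$. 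This gives $\Pr(\mu\leftrightarrow\Sigma_C)\leq e^{-c(p)[\dist-1]}$ whenever $(1-2p)^2<1/2$, uniformly in system size; on the torus I would use that the buffer is contractible, so the relevant connections are realizable in the planar cover.

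I expect the main obstacle to be in (i): justifying that a single dual path extracted from $\supp\Delta$ can legitimately serve as the shifted pattern $\gamma$. This needs zero bulk syndrome of $\mathbf 1_\pi$, and care about paths ending on interface plaquettes versus passing through them. A secondary point in (ii) is stating the percolation-sharpness input with constants uniform over finite tori.
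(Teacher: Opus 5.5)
Your route matches the paper's in both regimes. In (i), the paper likewise extracts a simple dual path $\pi$ from a crossing disagreement component. It uses that $\mathbf 1_\pi$ has zero bulk syndrome to get weight $(2\sqrt{p(1-p)})^{|\pi|}$ from the Hellinger shift bound, and counts $4\cdot 3^{m-1}$ paths from at most $2|\partial B_A|$ starting plaquettes. In (ii), it likewise discards the noiseless $Z$ sector via the reference Markov property. It then takes $\mathsf H=\mathsf S$ with completeness from $\mathsf N(B)=\mathsf S(B)$, and reduces to bond percolation at density $(1-2p)^2<1/2$ on the lifted planar lattice.

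One step fails as written: the claim that the coupling of Appendix~\ref{sec:main-proof} can be rerun verbatim with only the final count changed. In that argument, components of $\supp\Delta$ and adjacency to $\partial B_A,\partial B_C$ are defined in the full qudit interaction graph, where two edges sharing a star are adjacent. For that crossing event, your path extraction is false.

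For instance, take the product of star patterns at $v, v+2\hat x, v+4\hat x,\ldots$. It has zero plaquette syndrome and is connected in the full interaction graph through the stars at $v+\hat x, v+3\hat x,\ldots$. Yet it is a union of disjoint closed dual loops, none of which ends on an interface plaquette. Similarly, a plaquette-connected piece whose only contact with the $A$ side is through a star would yield a path starting on a bulk plaquette, so $\mathbf 1_\pi$ would carry bulk syndrome. For the full-graph crossing event only the generic lattice-animal count is available, which the paper notes gives merely $p<5\times 10^{-6}$.

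The fix, and what the paper does, is to run the switching with plaquette-only (dual-edge) connectivity. ``Adjacent to the $A$ side'' should then mean ``contains an edge of a plaquette touching $A$''. This is legitimate because $X$ errors have trivial star syndrome. Each bulk check is a plaquette whose edges are pairwise dual-adjacent, so every dual component is separately syndrome-neutral. Swapping dual components that meet no $A$-touching plaquette leaves the $A$-side register unchanged. With this change, your zero-syndrome check for $\mathbf 1_\pi$ and your counting go through.

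In (ii), you want decay in the buffer path length $\dist(A,C)_B$, not in the Euclidean separation of the lifted endpoints. For that it is cleanest to use the subcritical exponential tail of the cluster \emph{size}, since a crossing open cluster contains at least $\dist(A,C)_B-1$ edges.
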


The weak-noise threshold comes from convergence of a weighted path sum; the strong-noise threshold comes from the exact critical density $1/2$ of square-lattice bond percolation. These are sufficient thresholds for the two arguments, not claims about the exact transition. The Nishimori point $p_c\simeq0.109$ of the random-bond Ising model~\cite{Dennis_2002} lies between them. Using the square-lattice geometry improves both bounds: the general estimate in Theorem~\ref{thm:cmi-stability} would only give $p<5\times10^{-6}$, while the generic path count in Sec.~\ref{sec:lnhn-path-counting} would require $3(1-2p)^2<1$, or $p>0.211$.

\emph{Proof of (i).}
Only plaquettes carry syndrome under $X$-dephasing, so the coupling argument of Appendix~\ref{sec:main-proof} uses connected dual-edge components. Every crossing component contains a simple dual path $\pi$ of length at least $\dist(A,C)_B-1$. The $X$ pattern on this path has zero bulk syndrome, since its interior vertices have even degree and its endpoints lie on interface plaquettes. The averaged Hellinger bound~\eqref{eq:hellinger-component} therefore assigns it weight at most $(2\sqrt{p(1-p)})^{|\pi|}$. There are at most $2|\partial B_A|$ starting plaquettes and $4\cdot3^{m-1}$ paths of length $m$ from each, giving
\begin{equation}
 \mathbb E_\tau\delta_\tau
 \leq\frac83|\partial B_A|
 \sum_{m\geq\dist(A,C)_B-1}\bigl(6\sqrt{p(1-p)}\bigr)^m.
 \label{eq:toric-crossing}
\end{equation}
This geometric tail decays exponentially when $6\sqrt{p(1-p)}<1$, or $p<p_{\mathrm{EC}}$. The entropy-continuity argument of Appendix~\ref{sec:main-proof}, with $\Omega_A+\Omega_C\leq2(|\partial B_A|+|\partial B_C|)$, then gives the claimed CMI bound. \hfill$\square$

\emph{Proof of (ii).}
The absence of non-contractible loops in $B$ implies $\mathsf N(B)=\mathsf S(B)$, so completeness holds with $\mathsf H=\mathsf S$. For the $X$ sector identified above, removing the one-sided logical coefficients as in Eq.~\eqref{eq:lnhn-hidden-DPI} leaves the buffer observations
\begin{equation}
 Y_e=h_v+h_{v'}+E_e\pmod2,
 \qquad e=(v,v')\in B,
 \label{eq:toric-record}
\end{equation}
where the star coefficients $h_v$ are independent uniform bits. By Eq.~\eqref{eq:toric-sector-eta}, Eq.~\eqref{eq:lnhn-percolation-bound} reduces the CMI bound to square-lattice bond percolation with open-edge probability $(1-2p)^2$. The no-loop assumption allows each buffer component to lift to the planar square lattice, where connection probabilities decay exponentially below the critical density $1/2$~\cite{Kesten1980,Menshikov1986,AizenmanBarsky1987,DuminilCopinTassion2016}. Thus Eq.~\eqref{eq:toric-IP} gives exponential decay. Crossings require at least $\dist(A,C)_B-1$ edges, and there are at most $2(|\partial B_A|+|\partial B_C|)$ interface vertices, yielding the claimed bound. \hfill$\square$

\section{Geometrically $k$-local Pauli noise}
\label{sec:proof-k-local-corollary}

\begin{corollary}[Geometrically $k$-local Pauli noise]
\label{coro:k-local-pauli-noise}
Let $\rho_0$ be the normalized code-projector state of a $\mathbb Z_d$
qLDPC code satisfying the assumptions of
Theorem~\ref{thm:cmi-stability}.  Let
$\cL=\{\mu_1,\ldots,\mu_M\}$ be a collection of qudit sets, where each
$\mu\in\cL$ supports an elementary local Pauli noise channel
$\mathcal N_\mu$.  The channel is characterized by an error distribution
$q_\mu$ over Pauli operators supported in $\mu$.  Write a Pauli label as a
function
\begin{align}
    u^\mu: \mu \longrightarrow \mathbb Z_d^2,
    \qquad
    u^\mu(i)=:u_i^\mu=(u_{i,x}^\mu,u_{i,z}^\mu),
\end{align}
where $u_{i,x}^\mu,u_{i,z}^\mu\in\mathbb Z_d$, and define
\begin{align}
    P(u^\mu)
    :=
    \prod_{i\in\mu}X_i^{u_{i,x}^\mu}Z_i^{u_{i,z}^\mu}.
\end{align}
The elementary and total noise channels are
\begin{align}
    \mathcal N_\mu(\rho)
    &=
    \sum_{u^\mu}q_\mu(u^\mu)
    P(u^\mu)\rho P^\dagger(u^\mu),
    \label{eq:k-local-elementary-channel}\\
    \mathcal N
    &=
    \mathcal N_{\mu_M}\circ\mathcal N_{\mu_{M-1}}
    \circ\cdots\circ\mathcal N_{\mu_1}.
\end{align}
Assume the noise is geometrically $k$-local: every elementary channel acts
on at most $k$ qudits contained within graph diameter at most $k$, and every
qudit belongs to the support of at most $\noiseweight$ elementary channels,
so that
\begin{equation}
    |\mu|,\operatorname{diam}(\mu)\leq k,
    \qquad
    \max_{i\in\Lambda}
    \left|\{\mu\in\cL:i\in\mu\}\right|
    \leq\noiseweight.
    \label{eq:geometric-k-local-assumption}
\end{equation}
Define
\begin{equation}
  \alpha'
  :=\mathrm e k\noiseweight\frakd\sweight
  \max_{\mu\in\cL}
    \sum_{g^\mu\neq0}\sum_{u^\mu}
    \sqrt{q_\mu(u^\mu)q_\mu(u^\mu+g^\mu)},
  \label{eq:k-local-alpha}
\end{equation}
where $g^\mu$ ranges over nonzero functions and $u^\mu$ over all functions
from $\mu$ to $\mathbb Z_d^2$.  If $\alpha'<0.2$, then for every
tripartition $A\sqcup B\sqcup C=\Lambda$ such that $B$ separates $A$ from
$C$, there exist system-size-independent constants $c_0,K>0$ such that
\begin{equation}
    I(A:C|B)_{\mathcal N(\rho_0)}
    \leq
    K e^{c_0}
    \bigl(|\partial B_A|+|\partial B_C|\bigr)^2
    e^{-c_0\dist(A,C)_B},
    \label{eq:k-local-cmi-bound}
\end{equation}
where one may take
$c_0=\frac{1}{2(k+1)}\log(1/\alpha')$.
\end{corollary}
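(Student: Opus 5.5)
We follow the proof of Theorem~\ref{thm:cmi-stability} line by line and change only the probabilistic layer. The unit of randomness is now the elementary noise event rather than the single qudit.

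\textbf{Reduction to noise near $B$.} Every elementary channel $\mathcal N_\mu$ is a Pauli channel, so all of them commute. We split $\cL$ into the channels with $\mu\cap B=\varnothing$ and the rest, and apply the first group last. Data processing then removes the channels supported entirely in $A$ or $C$; no $\mu$ meets both sides, because $\operatorname{diam}(\mu)\le k$ and we may assume $\dist(A,C)_B>k$. The remaining channels still act on $A$- or $C$-side qudits near the interfaces, and we keep them. The total error $E=\sum_{\mu\cap B\neq\varnothing}U^\mu$ then has independent components $U^\mu\sim q_\mu$ and is supported in an enlarged buffer $B'$ of width $O(k)$ around $B$. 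The syndrome reduction~\eqref{eq:extra-cmi-syndrome} goes through unchanged, since it only used that the error is a Pauli label with the orthogonal-sector structure. The side registers now contain the checks meeting $B'\setminus B$, which multiplies $\Omega_A,\Omega_C$ by at most a $k$-dependent constant. This constant is absorbed into $K$.

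\textbf{Coupling on the noise-event graph.} We sample two independent configurations $(U^{\mu,(1)})_\mu$ and $(U^{\mu,(2)})_\mu$ from the product law $\prod_\mu q_\mu$, each conditioned on the bulk syndrome $\tau$. We set $\Delta^\mu=U^{\mu,(2)}-U^{\mu,(1)}$ and call $\mu$ active if $\Delta^\mu\neq0$. The graph $\mathcal H$ has the elementary channels as vertices, with $\mu\sim\nu$ if some qudit of $\mu$ and some qudit of $\nu$ lie in a common check or coincide.

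The switching map $\mathcal T$ swaps the full labels $U^{\mu,(1)}\leftrightarrow U^{\mu,(2)}$ on every connected component of active events that touches the $C$-side interface checks but not the $A$-side ones. This map is an involution and preserves $\prod_\mu q_\mu(U^{\mu,(1)})q_\mu(U^{\mu,(2)})$. Distinct components of active events produce qudit disagreements no check sees jointly, so each component has zero bulk syndrome on its own. The factorization argument then goes through verbatim. Consequently $\delta_\tau$ is bounded by the probability that some connected active cluster $\Gamma\subseteq\cL$ touches both interfaces.

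The single-cluster bound~\eqref{eq:local-ratio-bound}--\eqref{eq:hellinger-component} is repeated with $W_\Gamma(x)=\prod_{\mu\in\Gamma}q_\mu(x^\mu)$. Summing over nonzero patterns gives a weight of at most $\bigl(\max_\mu\sum_{g^\mu\neq0}\sum_{u^\mu}\sqrt{q_\mu(u^\mu)q_\mu(u^\mu+g^\mu)}\bigr)^{|\Gamma|}$.

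\textbf{Counting and distance.} The degree of $\mathcal H$ is at most about $k\noiseweight\frakd\sweight$: from each of the $\le k$ qudits of $\mu$ one passes through $\le\frakd$ checks to $\le\sweight$ qudits, each lying in $\le\noiseweight$ events. Lemma~\ref{lem:lattice_animal} therefore bounds the number of connected clusters of size $m$ containing a fixed event by $(\mathrm e\,k\noiseweight\frakd\sweight)^m$. Combined with the weight bound, this gives a per-size factor $\alpha'^m$, possibly up to the $+1$ in $\Delta+1$, which we will absorb by adjusting constants.

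A cluster of $m$ events spans qudit distance at most $(k+1)m$, since each event has diameter $\le k$ and adjacent events are one step apart. A crossing cluster therefore needs $m\gtrsim\dist(A,C)_B/(k+1)$. The sum over $m$ gives $\overline\delta\lesssim\Omega\,\alpha'^{\dist(A,C)_B/(k+1)}$. The entropy continuity and $h_2(x)\le2(\log 2)\sqrt{x}$ step, taken unchanged from the end of the proof of Theorem~\ref{thm:cmi-stability}, halves the exponent. This yields $c_0=\frac{1}{2(k+1)}\log(1/\alpha')$ and the prefactor $Ke^{c_0}(|\partial B_A|+|\partial B_C|)^2$.

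\textbf{Main obstacle.} The delicate step is the switching argument. It must hold when one elementary event straddles the interface, contributing both to interface checks and to bulk checks, and when overlapping events share qudits. Both difficulties are resolved by switching whole event labels rather than qudit labels. Two points still need care: the conditioning on $\tau$ must be compatible with this event-level switching, and the side registers must be defined to include every check touched by an event adjacent to the interface.
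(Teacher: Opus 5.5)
Your proposal is correct and follows the paper's proof essentially step for step: commuting the Pauli channels and discarding one-sided events by data processing, independent event labels, an event graph whose disagreement components are separately syndrome-neutral on the $B$-supported checks (your adjacency through arbitrary checks rather than only $B$-supported ones is harmless), whole-label switching, the per-event Hellinger bound, lattice-animal counting, the $m(k+1)$ distance conversion, and the same entropy-continuity step. One small correction: a genuine ``$+1$'' could not be absorbed into constants, since it would sit in the base of the geometric series, but no ``$+1$'' arises, because your union count already bounds the closed neighborhood of $\mu$, including $\mu$ itself, giving $\Delta_{\mathrm{ev}}+1\le k\noiseweight\frakd\sweight$ and hence exactly the stated $\alpha'$ and $c_0$.
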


\begin{proof}
The setup and the reduction to classical syndrome variables are the same as in the main text and Appendix~\ref{sec:main-proof}.  We first construct the corresponding classical error model for the elementary noise channels.  If $P$ and $Q$ are Pauli operators, their conjugation channels commute because $PQ$ and $QP$ differ only by a phase and the phase is cancelled for conjugation. Thus the elementary channels may be reordered. Channels supported entirely in $A$ or entirely in $C$ can then be removed by data processing. Channels crossing the $A$--$B$ or $B$--$C$ interface are retained.  Their supports may extend a distance at most $k$ from the corresponding interface, and this finite range is accounted for explicitly below when the length of an event path is converted into qudit-graph distance.

As in the onsite-noise proof, it suffices to retain the syndromes of the chosen local stabilizer generators.  The complete regional syndrome may also contain nonlocal stabilizers generated by products of local checks, but Eq.~\eqref{eq:regional-syndrome-cmi-difference} upper-bounds the physical CMI by the CMI of the local boundary and bulk check syndromes.  Using these local checks is also what ensures that the elementary-event graph has bounded degree.

For each retained elementary channel $\mu\in\cL$, let
\begin{equation}
    U_\mu\sim q_\mu, \qquad U := (U_\mu)_{\mu\in\mathcal{L}}
\end{equation}
denote its sampled Pauli label.  The variables $U_\mu$ are independent, with
joint distribution
\begin{equation}
    Q(U)=\prod_{\mu\in\cL}q_\mu(U_\mu).
    \label{eq:event-product-distribution}
\end{equation}
Their aggregate qudit error is the sitewise sum
\begin{equation}
    E(U)_i
    :=
    \sum_{\mu\ni i}U_\mu(i)
    \quad\text{in }\mathbb Z_d^2.
    \label{eq:aggregate-event-error}
\end{equation}
Up to an irrelevant phase,
$P(E(U))=\prod_{\mu\in\cL}P(U_\mu)$, so averaging over $Q$ reproduces the
composed noise channel.  In particular, every stabilizer syndrome is an
additive function of the elementary labels.  Let $\sigma_B(U)$ denote the
syndrome of $E(U)$ for the local stabilizer generators supported entirely in
$B$, and define
\begin{equation}
    Q_B^\tau(U)
    :=
    \Pr_Q(U| \sigma_B(U)=\tau).
\end{equation}

We next construct the elementary-event graph.  Its vertices are the retained
labels $\mu\in\cL$, and distinct vertices $\mu$ and $\nu$ are adjacent if
some local stabilizer generator supported in $B$ intersects both supports.
Fix a bulk syndrome $\tau$ and draw independently
$U^{(1)},U^{(2)}\sim Q_B^\tau$.  Their eventwise disagreement is
\begin{equation}
    G_\mu:=U_\mu^{(2)}-U_\mu^{(1)}.
\end{equation}
Since the two samples have the same bulk syndrome, additivity gives
$\sigma_B(G)=0$.

Crucially, every connected component $\Gamma$ of the nonzero disagreement
events is separately syndrome-neutral in $B$.  Indeed, fix a local generator
$S_a$ supported in $B$.  All nonzero events whose supports intersect
$\supp(S_a)$ form a clique in the event graph and hence belong to the same
connected component.  Therefore, if $S_a$ intersects $\Gamma$, the syndrome
of $G|_\Gamma$ on $S_a$ equals the full syndrome of $G$ on $S_a$, which is
zero.  If it does not intersect $\Gamma$, the contribution is trivially
zero.  Thus
\begin{equation}
    \sigma_B(G|_\Gamma)=0
    \qquad\text{for every disagreement component }\Gamma.
    \label{eq:event-component-neutrality}
\end{equation}

Let $\mathcal V_A$ and $\mathcal V_C$ be the elementary events that can
affect the $A$- and $C$-side boundary syndrome registers, respectively.
Let $\mathcal R$ be the union of all disagreement components that intersect
$\mathcal V_C$ but not $\mathcal V_A$, and define the switching map
\begin{equation}
\begin{aligned}
    \widetilde U_\mu^{(1)}&=U_\mu^{(2)},&
    \widetilde U_\mu^{(2)}&=U_\mu^{(1)},
    &&\mu\in\mathcal R,\\
    \widetilde U_\mu^{(1)}&=U_\mu^{(1)},&
    \widetilde U_\mu^{(2)}&=U_\mu^{(2)},
    &&\mu\notin\mathcal R.
\end{aligned}
\label{eq:elementary-event-switching}
\end{equation}
Equation~\eqref{eq:event-component-neutrality} shows that transferring the
disagreement on any component between the two samples leaves both bulk
syndromes equal to $\tau$.  Hence the switched configurations remain in the
conditioned sample space.  The map is an involution.  Moreover, for each
$\mu\in\mathcal R$ it only exchanges the pair
$(U_\mu^{(1)},U_\mu^{(2)})$.  The product form of $Q$ therefore gives
\begin{equation}
\begin{split}
    Q_B^\tau(U^{(1)})Q_B^\tau(U^{(2)})
    &=
    Q_B^\tau(\widetilde U^{(1)})
    Q_B^\tau(\widetilde U^{(2)}).
\end{split}
\label{eq:event-switching-weight}
\end{equation}

For completeness, the coupled variables can now be written explicitly as
\begin{equation}
\begin{aligned}
    X&:=\bigl(
      \Sigma_{\textrm{int},\partial B_A}(\widetilde U^{(1)}),
      \Sigma_{\textrm{int},\partial B_C}(\widetilde U^{(1)})
    \bigr),\\
    Y&:=\bigl(
      \Sigma_{\textrm{int},\partial B_A}(U^{(1)}),
      \Sigma_{\textrm{int},\partial B_C}(U^{(2)})
    \bigr).
\end{aligned}
\label{eq:k-local-coupled-variables}
\end{equation}
The invariance in Eq.~\eqref{eq:event-switching-weight} implies that $X$ has
the conditional joint boundary-syndrome distribution.  Independence of the original samples implies that $Y$ has the product of its conditional
marginals.  If no connected disagreement component intersects both $\mathcal V_A$ and
$\mathcal V_C$, the switching leaves the $A$-side syndrome of the first
sample unchanged and replaces its $C$-side syndrome by that of the second
sample.  Hence $X=Y$ outside the crossing event.  The coupling lemma then gives
\begin{equation}
\begin{split}
    \delta_\tau
    &\leq
    \Pr_{(Q_B^\tau)^{\otimes2}}
    \left[
        \begin{array}{c}
        \text{some connected elementary-event disagreement component}\\
        \text{intersects both }\mathcal V_A\text{ and }\mathcal V_C
        \end{array}
    \right].
\end{split}
\label{eq:k-local-tv-crossing}
\end{equation}

It remains to bound the crossing probability.  For a connected event set
$\Gamma$ and a nonzero disagreement pattern
$g=(g^\mu)_{\mu\in\Gamma}$ satisfying $\sigma_B(g)=0$, the same
two-configuration comparison used in Eq.~\eqref{eq:hellinger-component}
gives
\begin{equation}
\begin{split}
    \mathbb E_\tau
    \Pr_{(Q_B^\tau)^{\otimes2}}
    \left[
        G_\mu=g^\mu
        \text{ for every }\mu\in\Gamma
    \right]\leq
    \prod_{\mu\in\Gamma}
    \left(
        \sum_{u^\mu}
        \sqrt{
            q_\mu(u^\mu)
            q_\mu(u^\mu+g^\mu)
        }
    \right).
\end{split}
\label{eq:k-local-hellinger-component}
\end{equation}
Thus the onsite factorization over qudits is replaced by a factorization over
independent elementary events.

We now justify the degree and polymer-counting bounds.  Fix an event $\mu$.
Its at most $k$ qudits meet at most $k\frakd$ local stabilizer generators.
The union of these generator supports contains at most
$k\frakd\sweight$ qudits, and each such qudit belongs to at most
$\noiseweight$ elementary supports.  Consequently, the closed neighborhood
of $\mu$ in the event graph, including $\mu$ itself, has size at most
$k\noiseweight\frakd\sweight$, and hence
\begin{equation}
    \Delta_{\mathrm{ev}}+1
    \leq
    k\noiseweight\frakd\sweight.
    \label{eq:event-graph-degree}
\end{equation}
Set
\begin{equation}
    \eta
    :=
    \max_{\mu\in\cL}
    \sum_{g^\mu\neq0}\sum_{u^\mu}
    \sqrt{q_\mu(u^\mu)q_\mu(u^\mu+g^\mu)},
\end{equation}
so that
$\alpha'=\mathrm e k\noiseweight\frakd\sweight\eta$.
For a fixed event $\mu_0$, Lemma~\ref{lem:lattice_animal},
Eqs.~\eqref{eq:k-local-hellinger-component} and
\eqref{eq:event-graph-degree}, and summation over all nonzero labels on a
connected set give
\begin{equation}
\begin{split}
 \sum_{\substack{\Gamma\ni\mu_0\\|\Gamma|=m}}
   \sum_{\substack{g:\,\supp(g)=\Gamma\\\sigma_B(g)=0}}
   \mathbb E_\tau
   \Pr_{(Q_B^\tau)^{\otimes2}}(G|_\Gamma=g)\leq
 \bigl[\mathrm e(\Delta_{\mathrm{ev}}+1)\eta\bigr]^m
 \leq(\alpha')^m.
\end{split}
\label{eq:k-local-polymer-sum}
\end{equation}

We finally relate event-component size to the spatial separation.  For two
adjacent events, choose one qudit from each that belongs to their common
stabilizer generator; these two qudits have interaction-graph distance one.
Traversing the $m$ event supports costs at most $mk$, the $m-1$ local checks
linking consecutive events contribute $m-1$ edges, and the two checks at the
$A$--$B$ and $B$--$C$ interfaces contribute two further edges.  Thus an
$m$-event crossing produces an $A$-to-$C$ qudit path of length at most
$mk+(m-1)+2=m(k+1)+1$.  Consequently, a
component joining $\mathcal V_A$ to $\mathcal V_C$ must contain at least
\begin{equation}
    m_0
    :=
    \max\left\{1,
    \left\lceil
        \frac{\dist(A,C)_B-1}{k+1}
    \right\rceil
    \right\}
\end{equation}
elementary events.
Bounded incidence and bounded range also give constants $c_A,c_C$ independent of the system size, such that $|\mathcal V_A|\leq c_A|\partial B_A|$ and $|\mathcal V_C|\leq c_C|\partial B_C|$.  Summing
Eq.~\eqref{eq:k-local-polymer-sum} over a starting event on the smaller side and then over $m\geq m_0$ yields
\begin{equation}
\begin{split}
    \mathbb E_\tau\delta_\tau
    &\leq
    C_k\min\{|\partial B_A|,|\partial B_C|\}
    \sum_{m\geq m_0}(\alpha')^m\\
    &\leq
    \frac{C_k}{1-\alpha'}
    \min\{|\partial B_A|,|\partial B_C|\}
    (\alpha')^{m_0},
\end{split}
\label{eq:k-local-averaged-tv}
\end{equation}
where $C_k$ depends only on the bounded local parameters.  Since
\begin{equation}
    c_0
    =
    \frac{1}{2(k+1)}
    \log\frac{1}{\alpha'},
\end{equation}
the last factor in Eq.~\eqref{eq:k-local-averaged-tv} satisfies
\begin{equation}
    (\alpha')^{m_0}
    \leq
    \exp\!\left[-\frac{\log(1/\alpha')}{k+1}
    \bigl(\dist(A,C)_B-1\bigr)\right]
    =e^{2c_0}e^{-2c_0\dist(A,C)_B}.
\end{equation}
Applying the same entropy-continuity bound and the same small-/large-error
case split used at the end of Appendix~\ref{sec:main-proof} takes the square
root of this decay rate and contributes at most one additional boundary-size
factor.  The square root of the prefactor $e^{2c_0}$ gives the explicit
$e^{c_0}$ in Eq.~\eqref{eq:k-local-cmi-bound}; the remaining bounded local
constants are absorbed into $K$.
\end{proof}

\section{Full-region dephasing in the counterexample}
\label{sec:full-dephasing-counterexample}

Apply independent $Z$ dephasing with probability $p$ to every qubit in
$A_LB_LC_L$.  For each qubit let $\xi\in\{0,1\}$ be its Bernoulli error
variable, with $\Pr(\xi=1)=p$, and define its parity bias
\begin{equation}
    \lambda:=\mathbb E[(-1)^\xi]=1-2p.
\end{equation}
Write $A=A_L$, $B=B_L$, $C=C_L$, and $\rho=\mathcal N_{ABC}(\rho_0)$.
Consider the two four-qubit $X$ stabilizers supported on
$\{(0,0),(L,0),(0,H),(L,H)\}\subset AB$ and
$\{(L,0),(L+1,0),(L,1),(L+1,1)\}\subset BC$.
Their syndrome bits are
\begin{align}
 U&=\xi_{L,0}\oplus\xi_{0,0}\oplus\xi_{0,H}\oplus\xi_{L,H},\\
 V&=\xi_{L,0}\oplus\xi_{L+1,0}\oplus\xi_{L,1}\oplus\xi_{L+1,1}.
\end{align}
Let $\widehat\Sigma_Q$ denote the complete syndrome on $Q$.
Since the $B$ syndrome is trivial, the stabilizer entropy formula gives
\begin{align}
 I(A:C\mid B)_\rho
 ={}&I(A:C\mid B)_{\rho_0}
 +I(\widehat\Sigma_{AB}:\widehat\Sigma_{BC})\nonumber\\
 &-H(\widehat\Sigma_{ABC}\mid\widehat\Sigma_{AB},\widehat\Sigma_{BC})
 \geq I(U:V).
 \label{eq:full-dephasing-syndrome-lower-bound}
\end{align}
Here the conditional entropy is at most
$\log[|\mathcal S(ABC)|/|\mathcal S(AB)\mathcal S(BC)|]
=I(A:C\mid B)_{\rho_0}$, and retaining only $U,V$ cannot increase
mutual information. The two parities share only $\xi_{L,0}$, so
\begin{align}
    \mathbb E[(-1)^{U+V}]&=\lambda^6,\\
    \mathbb E[(-1)^U]\,\mathbb E[(-1)^V]&=\lambda^8.
\end{align}
The joint distribution thus differs from the product of its marginals by
\begin{equation}
    \Delta(p)
    :=\lambda^6-\lambda^8
    =4p(1-p)(1-2p)^6>0,
    \qquad 0<p<\frac12.
    \label{eq:full-dephasing-character-gap}
\end{equation}
To relate $\Delta$ to the total variation, let
$P:=P_{UV}$ and $Q:=P_U P_V$.  Since
$f(u,v):=(-1)^{u+v}$ satisfies $|f(u,v)|=1$,
\begin{align}
    \Delta(p)
    &=
    \left|
        \sum_{u,v}f(u,v)\bigl[P(u,v)-Q(u,v)\bigr]
    \right| \nonumber\\
    &\leq
    \sum_{u,v}
    \left|P_{UV}(u,v)-P_U(u)P_V(v)\right|
    \nonumber\\
    &=2d_{\rm TV}\!\left(P_{UV},P_UP_V\right).
    \label{eq:full-dephasing-tv-bound}
\end{align}
Thus $d_{\rm TV}(P_{UV},P_UP_V)\geq\Delta(p)/2$.  Pinsker's
inequality in nats, $I(U:V)\geq2d_{\rm TV}(P_{UV},P_UP_V)^2$, gives
\begin{equation}
    I(A:C\mid B)_\rho\geq I(U:V)
    \geq\frac{\Delta(p)^2}{2}
    =8p^2(1-p)^2(1-2p)^{12}>0.
    \label{eq:full-dephasing-mi-lower-bound}
\end{equation}
This lower bound is independent of $\dist(A,C)_B$. 

\section{Local stabilizer generation from space-uniformity}
\label{sec:space-uniformity-local-generation}
We now detail the recursive construction used in the proof of Corollary~\ref{cor:space-uniformity-stability}. Decomposing a region into local patches that are conditionally independent is a standard approach in settings with vanishing or small conditional mutual information (CMI)~\cite{yang2024classicalneuralnetworksrepresent,lee2025classicalsimulationnoisyrandom,kumar2026unlearnablephasesmatter}.
\begin{equation}
    B=Q_1\cup\cdots\cup Q_N
\end{equation}
by local, simply connected patches such that every partial union is simply
connected.  Set $R_1:=Q_1$ and, for $j=2,\ldots,N$, define recursively
\begin{equation}
    R_j:=R_{j-1}\cup Q_j,
    \qquad
    M_j:=R_{j-1}\cap Q_j,
\end{equation}
together with
\begin{equation}
    X_j:=R_{j-1}\setminus M_j,
    \qquad
    Z_j:=Q_j\setminus M_j.
\end{equation}
The patches may be chosen and ordered so that $M_j$ separates $X_j$ from
$Z_j$ and the CMI of tripartition $(X_j,M_j,Z_j)$ is obtained from
Fig.~\ref{fig:cmi-partitions}(b) by an allowed deformation or data processing inequality.  The
space-uniformity assumption therefore gives
\begin{equation}
    I(X_j:Z_j| M_j)_{\rho_0}=0.
\end{equation}
Because
\begin{equation}
    R_j=X_jM_jZ_j,
    \qquad
    R_{j-1}=X_jM_j,
    \qquad
    Q_j=M_jZ_j,
\end{equation}
Eq.~\eqref{eq:space-uniformity-markov-factorization} applied at the
$j$th step gives the recursion
\begin{equation}
    \mathcal S(R_j)
    =
    \left\langle
        \mathcal S(R_{j-1}),\mathcal S(Q_j)
    \right\rangle .
    \label{eq:space-uniformity-stabilizer-recursion}
\end{equation}
The base case is $\mathcal S(R_1)=\mathcal S(Q_1)$.  If
\begin{equation}
    \mathcal S(R_{j-1})
    =
    \left\langle
        \mathcal S(Q_1),\ldots,\mathcal S(Q_{j-1})
    \right\rangle,
\end{equation}
then Eq.~\eqref{eq:space-uniformity-stabilizer-recursion} implies
\begin{equation}
    \mathcal S(R_j)
    =
    \left\langle
        \mathcal S(Q_1),\ldots,\mathcal S(Q_j)
    \right\rangle .
\end{equation}
Induction through $j=N$, together with $R_N=B$, yields
\begin{equation}
    \mathcal S(B)
    =
    \left\langle
        \mathcal S(Q_1),\ldots,\mathcal S(Q_N)
    \right\rangle .
\end{equation}
Thus the stabilizer group supported in $B$ has a generating set consisting
of stabilizers supported on the local patches $Q_j$.

\begingroup
\tikzset{
  pics/levin-wen-annulus/.style={code={
    \fill[blue!18] (35:0.52) arc (35:145:0.52) -- (145:1.35) arc (145:35:1.35) -- cycle;
    \fill[gray!22] (145:0.52) arc (145:215:0.52) -- (215:1.35) arc (215:145:1.35) -- cycle;
    \fill[red!18] (215:0.52) arc (215:325:0.52) -- (325:1.35) arc (325:215:1.35) -- cycle;
    \fill[gray!22] (325:0.52) arc (325:395:0.52) -- (395:1.35) arc (395:325:1.35) -- cycle;
    \draw[line width=1.8pt] (0,0) circle (1.35);
    \draw[line width=1.4pt] (0,0) circle (0.52);
    \foreach \angle in {35,145,215,325}{
      \draw[line width=1.2pt] (\angle:0.52) -- (\angle:1.35);
    }
    \node[font=\large] at (90:0.94) {$A$};
    \node[font=\large] at (180:0.94) {$B$};
    \node[font=\large] at (270:0.94) {$C$};
    \node[font=\large] at (0:0.94) {$B$};
  }},
  transport arrow/.style={
    line width=3.5pt,draw=blue!75!cyan,
    postaction={decorate},
    decoration={markings,mark=at position 0.55 with {\arrow{Stealth[length=4mm,width=3.5mm]}}},
    -{Stealth[length=5mm,width=4.5mm]}
  },
  placement label/.style={font=\large,fill=white,fill opacity=0.9,text opacity=1,rounded corners=2pt,inner sep=3pt}
}
\begin{figure}[!t]
\centering
\resizebox{0.33\textwidth}{!}{%
\begin{tikzpicture}[x=1cm,y=1cm,line cap=round,line join=round]
  \begin{scope}
    \clip[rounded corners=20pt] (0,0) rectangle (9,9);
    \fill[white] (0,0) rectangle (3.4,9);
    \shade[left color=white,right color=red!14] (3.4,0) rectangle (5.6,9);
    \fill[red!14] (5.6,0) rectangle (9,9);
  \end{scope}
  \draw[line width=2.4pt,rounded corners=20pt] (0,0) rectangle (9,9);
  \draw[transport arrow] (2.85,2.95) to[out=85,in=240] (5.95,6.05);
  \pic at (1.8,1.8) {levin-wen-annulus};
  \pic at (7.2,7.2) {levin-wen-annulus};
  \node[placement label] at (1.8,3.6) {$\mathcal G_0$};
  \node[placement label,fill=red!7] at (7.2,5.4) {$\mathcal G_m$};
  \node[font=\large,align=center] at (1.8,7.6) {clean region\\$Q$};
  \node[font=\large,align=center] at (7.1,1.3) {noise applied on\\$\overline Q$};
  \node[font=\large] at (4.5,-0.55) {(a)};
\end{tikzpicture}%
}
\hspace{0.06\textwidth}%
\resizebox{0.33\textwidth}{!}{%
\begin{tikzpicture}[x=1cm,y=1cm,line cap=round,line join=round]
  \filldraw[fill=red!14,line width=2.4pt,rounded corners=20pt] (0,0) rectangle (9,9);
  \draw[transport arrow] (5.95,6.05) to[out=240,in=85] (2.85,2.95);
  \pic at (1.8,1.8) {levin-wen-annulus};
  \pic at (7.2,7.2) {levin-wen-annulus};
  \node[placement label,fill=red!7] at (1.8,3.6) {$\mathcal G_0$};
  \node[placement label,fill=red!7] at (7.2,5.4) {$\mathcal G_m$};
  \node[font=\large,align=center] at (2.1,7.6) {fully noisy state\\$\rho=\mathcal N_Q(\sigma)$};
  \node[font=\large,align=center] at (7.1,1.3) {return to the\\original placement};
  \node[font=\large] at (4.5,-0.55) {(b)};
\end{tikzpicture}%
}
\caption{Spatial deformation of the Levin--Wen partition. (a) In $\sigma=\mathcal N_{\overline Q}(\rho_0)$, deform the annulus from $\mathcal G_0$ in the clean region $Q$ to $\mathcal G_m$ outside $Q$. (b) Apply the remaining noise on $Q$, leaving the reduced state on $\mathcal G_m$ unchanged, then reverse the deformation in $\rho$. Pink shading indicates applied noise; the gradient is schematic. }
\label{fig:levin-wen-transport}
\end{figure}
\endgroup

\section{Spatial deformation proof of Levin--Wen stability}
\label{sec:levin-wen-deformation}

We prove Corollary~\ref{cor:levin-wen-stability} using the spatial domain-wall construction of Ref.~\cite[Sec.~V]{yang2025topologicalmixedstatesphases}. For the Levin--Wen partition $\mathcal G=(A,B,C)$ in Fig.~\ref{fig:cmi-partitions}(c), write $\gamma_{\rm LW}(\omega;\mathcal G)=\frac12 I(A:C| B)_\omega$. Deformations will change the partition but not the state, and preserve the annular hole and the two components of $B$.

\subsection{Local deformation bounds}
Let $\delta(r)$ be the largest bound in Corollary~\ref{cor:space-uniformity-stability} for the space-uniformity partitions in Figure~\ref{fig:cmi-partitions}(b), all with separating width at least $r$. The bound includes partially applied onsite noise. For a simply connected patch $a$ outside $ABC$, enlarging $A$ to $Aa$ gives
\begin{equation}
\begin{aligned}
    I(Aa:C| B)_\omega-I(A:C| B)_\omega
    &=I(a:C| AB)_\omega\\
    &\leq I(a:BC| A)_\omega\leq\delta(r),
\end{aligned}
\label{eq:lw-endpoint-move}
\end{equation}
by the CMI chain rule, provided $(a,A,BC)$ has the space-uniformity geometry. An inward move can be obtained thorugh just data-processing inequality.

Transferring a patch $a\subset A$ into $B$, the chain rule gives
\begin{equation}
    I(A:C| B)_\omega-I(A\setminus a:C| Ba)_\omega
    =I(a:C| B)_\omega\leq\delta(r),
    \label{eq:lw-buffer-move}
\end{equation}
when $(a,B,C)$ has the space-uniformity geometry. Conversely, transferring $a\subset B$ into $A$ yields
\begin{equation}
    I(Aa:C| B\setminus a)_\omega-I(A:C| B)_\omega
    =I(a:C| B\setminus a)_\omega\leq\delta(r),
    \label{eq:lw-b-to-a-move}
\end{equation}
provided $(a,B\setminus a,C)$ has that geometry and the remaining buffer has width at least $r$. 
Both transfers leave $ABC$ unchanged; only the auxiliary CMI is required to be small. 
We call a deformation $r$-admissible if it is a finite sequence of the elementary moves above such that every auxiliary space-uniformity partition has separating width at least $r$, while the annular hole and the two components of $B$ are preserved. An $r$-admissible deformation can be constructed as shown in Ref.~\cite[Fig.~13(c) and Proposition~13]{yang2025topologicalmixedstatesphases}.
Each move satisfies the convenient bound
\begin{equation}
    \left|\gamma_{\rm LW}(\omega;\mathcal G_{j+1})-\gamma_{\rm LW}(\omega;\mathcal G_j)\right|
    \leq\delta(r),
    \label{eq:lw-local-deformation}
\end{equation}
including moves across a noisy--clean interface. The roles of $A$ and $C$ are symmetric, so the argument applies equally to $C$.

\subsection{Outward-and-return construction}

Choose a region $Q$ containing the original annulus $\mathcal G_0$ and an $r$-admissible sequence $\mathcal G_0\to\cdots\to\mathcal G_m$ ending entirely in $\overline Q:=\Lambda\setminus Q$. 
This requires sufficient room for the displaced annulus while maintaining the required separation width $r$.
Write $\mathcal N_Q:=\bigotimes_{i\in Q}\mathcal N_i$, acting trivially outside $Q$, and define $\mathcal N_{\overline Q}$ analogously.

First apply the noise on $\overline Q$, obtaining $\sigma:=\mathcal N_{\overline Q}(\rho_0)$. Since $\mathcal G_0\subset Q$, its reduced state is unchanged, so
\begin{align}
\gamma_{\rm LW}(\sigma;\mathcal G_0)=\gamma_{\rm LW}(\rho_0;\mathcal G_0).
\end{align}
Deform the annulus to $\mathcal G_m$ in the fixed state $\sigma$, as in Fig.~\ref{fig:levin-wen-transport}(a). Equation~\eqref{eq:lw-local-deformation} bounds the accumulated error by $m\delta(r)$.
Next apply the remaining noise on $Q$ to obtain $\rho=\mathcal N_Q(\sigma)=\mathcal N(\rho_0)$. Since $\mathcal G_m\subset\overline Q$, its reduced state is unchanged, and hence
\begin{align}
\gamma_{\rm LW}(\rho;\mathcal G_m)=\gamma_{\rm LW}(\sigma;\mathcal G_m).
\end{align}
Reverse the deformation in the fixed state $\rho$, as in Fig.~\ref{fig:levin-wen-transport}(b), incurring at most another $m\delta(r)$.

Thus the two changes of state occur at placements where the relevant reduced state is exactly unchanged, while every nontrivial comparison is between two spatial partitions of the same state. Combining the endpoint equalities gives
\begin{align}
\left|\gamma_{\rm LW}(\rho;\mathcal G_0)-\gamma_{\rm LW}(\rho_0;\mathcal G_0)\right|
\leq\left|\gamma_{\rm LW}(\rho;\mathcal G_0)-\gamma_{\rm LW}(\rho;\mathcal G_m)\right| +\left|\gamma_{\rm LW}(\sigma;\mathcal G_m)-\gamma_{\rm LW}(\sigma;\mathcal G_0)\right| 
\leq 2m\delta(r).
\end{align}
\label{eq}
Since $\gamma_{\rm LW}(\omega;\mathcal G_0)=\gamma_{\rm LW}(\omega)$, this proves Eq.~\eqref{eq:levin-wen-stability} with $C_{\rm def}=2m$.

\section{Proof of Corollary~\ref{cor:constant-depth-block-petz}}
\label{sec:proof-constant-depth-block-petz}

\begin{figure*}[t]
    \centering
    \begin{tikzpicture}[
        x=1cm,
        y=1cm,
        every node/.style={font=\small},
        paneltitle/.style={anchor=west,font=\bfseries\normalsize},
        recovery/.style={draw=blue!60!black, densely dotted, line width=0.9pt, fill=blue!7},
        core/.style={draw=black, line width=0.9pt, fill=orange!30},
        dim/.style={<->, >=Latex, line width=0.75pt},
        state/.style={
            circle,
            draw,
            line width=0.85pt,
            minimum size=0.92cm,
            inner sep=0pt,
            font=\small
        },
        forward/.style={-{Latex[length=2.1mm,width=1.5mm]}, line width=0.85pt},
        reverse/.style={-{Latex[length=2.1mm,width=1.5mm]}, dashed, line width=0.85pt},
        note/.style={font=\footnotesize, align=center},
        eqnote/.style={font=\scriptsize, align=center, text width=4.8cm},
        arrowlabel/.style={font=\footnotesize, fill=white, inner sep=1pt}
    ]
    \path[use as bounding box] (-0.35,-1.10) rectangle (14.45,5.95);

    \begin{scope}
        \node[paneltitle]
            at (-0.75,5.55)
            {(a) One parallel block layer};

        \draw[recovery] (0,0.80) rectangle (1.95,2.75);
        \draw[recovery] (2.30,0.80) rectangle (4.25,2.75);
        \draw[recovery] (0,3.10) rectangle (1.95,5.05);
        \draw[recovery] (2.30,3.10) rectangle (4.25,5.05);

        \draw[core] (0.55,1.35) rectangle (1.40,2.20);
        \draw[core] (2.85,1.35) rectangle (3.70,2.20);
        \draw[core] (0.55,3.65) rectangle (1.40,4.50);
        \draw[core] (2.85,3.65) rectangle (3.70,4.50);

        \node at (0.975,1.775) {$A_{s,j}$};
        \node at (3.275,1.775) {$A_{s,k}$};
        \node at (0.975,4.075) {$A_{s,\ell}$};
        \node at (3.275,4.075) {$A_{s,m}$};

        \node[font=\scriptsize, blue!60!black, fill=white, inner sep=1pt]
            at (0.38,4.82)
            {$B_{s,\ell}$};

        \node[
            blue!60!black,
            anchor=south east,
            align=right,
            font=\scriptsize,
            fill=white,
            inner sep=1pt
        ]
        at (4.06,0.95)
        {$Q_{s,j}=A_{s,j}\cup B_{s,j}$};

        \draw[dim]
            (2.85,4.65)
            --
            node[above=1pt,font=\scriptsize] {$2r_A$}
            (3.70,4.65);

        \draw[dim]
            (3.70,3.75)
            --
            node[above=1pt,font=\scriptsize] {$r_B$}
            (4.25,3.75);

        \node[align=center] at (2.37, 0.13) {
        $A^{(s)}=\bigsqcup_{j\in J_s}A_{s,j}$ \\
        $Q_{s,j}\cap Q_{s,k}=\varnothing\quad(j\neq k)$
        };

    \end{scope}

    \draw[black!25] (5.05,-0.55) -- (5.05,5.70);

    \begin{scope}[shift={(5.65,0)}]
        \node[paneltitle]
            at (-0.35,5.55)
            {(b) Reverse-order Petz recovery map};

        \node[note]
            at (4.10,5.05)
            {forward onsite Pauli noise};

        \node[state] (rho0) at (0.25,3.95)
            {$\rho_0$};

        \node[state] (rho1) at (2.55,3.95)
            {$\rho_1$};

        \node[font=\normalsize] at (4.90,3.95)
            {$\cdots$};

        \node[state, text width=1.15cm, align=center] (rhoM) at (7.85,3.95)
            {$\rho_M =\rho$};

        \draw[forward]
            (rho0.east)
            --
            node[arrowlabel, above=4pt]
            {$\mathcal{N}_1$}
            (rho1.west);

        \draw[forward]
            (rho1.east)
            --
            node[arrowlabel, above=4pt]
            {$\mathcal{N}_2,\ldots$}
            (4.35,3.95);

        \draw[forward]
            (5.45,3.95)
            --
            node[arrowlabel, above=4pt]
            {$\mathcal{N}_M$}
            (rhoM.west);

        \draw[reverse]
            (2.05,3.08)
            --
            node[arrowlabel, below=5pt]
            {$\mathcal{R}_{1,r}$}
            (0.78,3.08);

        \draw[reverse]
            (4.35,3.08)
            --
            node[arrowlabel, below=5pt]
            {$\ldots,\mathcal{R}_{2,r}$}
            (3.05,3.08);

        \draw[reverse]
            (7.25,3.08)
            --
            node[arrowlabel, below=5pt]
            {$\mathcal{R}_{M,r}$}
            (5.85,3.08);

        \node[note]
            at (4.10,2.05)
            {Petz maps applied in reverse order};

        \node[font=\footnotesize, align=center, text width=8.0cm]
            at (4.10,1.22)
            {$\displaystyle
            \mathcal{D}_r
            =
            \mathcal{R}_{1,r}
            \circ\cdots\circ
            \mathcal{R}_{M,r}
            $};

        \node[font=\footnotesize, align=center, text width=8.0cm]
            at (4.10,0.36)
            {$\displaystyle
            \left\|
                \mathcal{D}_r(\rho)-\rho_0
            \right\|_1
            \leq \varepsilon,
            \qquad
            r=
            O\!\left(
                \log\frac{|\Lambda|}{\varepsilon}
            \right)
            $};
    \end{scope}

    \end{tikzpicture}

    \caption{
    Block construction and reverse-order implementation of the quasi-local Petz decoder.
    \textbf{(a)}
    At a fixed layer $s$, where $J_s$ indexes the blocks in that layer, the noise support is decomposed as a (disjoint) union of non-overlapping blocks: 
    $A^{(s)}=\bigsqcup_{j\in J_s}A_{s,j}$.
    Each core block $A_{s,j}$ has radius $r_A$ and is surrounded by a buffer $B_{s,j}$ of width $r_B$, with $r_A=\Theta(r_B)$.
    The recovery region is $Q_{s,j}=A_{s,j}\cup B_{s,j}$, and the blocks are chosen so that $Q_{s,j}\cap Q_{s,k}=\varnothing$ for $j\neq k$ within the same layer.
    Hence, the corresponding Petz recovery maps can be applied in parallel, giving constant depth per layer.
    \textbf{(b)}
    The onsite Pauli noise channels generate the sequence 
    $\rho_0,\rho_1,\ldots,\rho_M=\rho$.
    The decoder applies the quasi-local Petz recovery maps in the reverse order, $\mathcal{D}_r = \mathcal{R}_{1,r}\circ\cdots\circ\mathcal{R}_{M,r}$, and achieves
    $\|\mathcal{D}_r(\rho)-\rho_0\|_1\leq\varepsilon$
    for recovery radius
    $r=O(\log(|\Lambda|/\varepsilon))$.
    }
    \label{fig:constant-depth-block-petz}
\end{figure*}

\begin{appendixcorollary}[Formal statement of
Corollary~\ref{cor:constant-depth-block-petz}]
\label{cor:formal-constant-depth-block-petz}
Under the assumptions of Corollary~\ref{cor:constant-depth-block-petz},
let $\rho_0$ be the normalized code state, write the onsite Pauli noise as
$\mathcal N=\bigotimes_{i\in\Lambda}\mathcal N_i$, and set
$\rho=\mathcal N(\rho_0)$.
The block construction and reverse-order decoder are illustrated in
Fig.~\ref{fig:constant-depth-block-petz}. Let $\Lambda$ be a $D$-dimensional lattice and write $n:=|\Lambda|$.  Fix a length scale $R$, and choose block and buffer radii
\begin{align}
    r_A=\alpha R, \qquad  r_B=R,
\end{align}
where $0<\alpha<1$. Partition the lattice into blocks
$\{A_{s,j}\}$ with
\begin{align}
    s=1,\ldots,\chi, \quad \chi = O\!\left(\frac{r_B}{r_A}\right)^D
    =
    O(1)
\end{align}
Let $J_s$ denote the index set of blocks belonging to family $s$. The families are chosen such that
\begin{align}
    \Lambda
    =
    \bigsqcup_{s=1}^{\chi}
    \bigsqcup_{j\in J_s} A_{s,j},
    \quad
    |A_{s,j}|=\Theta(R^D),
\end{align}
and define
\begin{align}
    Q_{s,j}
    :=
    N_{r_B}(A_{s,j}),
    \quad
    B_{s,j}
    :=
    Q_{s,j}\setminus A_{s,j},
    \quad
    C_{s,j}
    :=
    \Lambda\setminus Q_{s,j},
\end{align}
Since the enlarged regions in each family are pairwise disjoint:
\begin{align}
    Q_{s,j}\cap Q_{s,k}=\varnothing,
    \qquad
    j\neq k.
    \label{eq:disjoint-enlarged-blocks}
\end{align}
All recovery maps within one family have disjoint support and
can be applied in parallel.
For each family, define
$
    A^{(s)}
    :=
    \bigsqcup_{j\in J_s} A_{s,j}
$
and group the product Pauli noise as
\begin{align}
    \mathcal{N}
    =
    \mathcal{C}_{\chi}\circ\cdots\circ\mathcal{C}_1,
    \quad
    \mathcal{C}_s
    :=
    \bigotimes_{j\in J_s}\mathcal{N}_{s,j},
    \quad
    \mathcal{N}_{s,j}
    :=
    \bigotimes_{i\in A_{s,j}}\mathcal{N}_i .
\end{align}

Then there exist twirled Petz maps
$\mathcal{R}_{s,j,R}$ supported on $Q_{s,j}$ such that
\begin{equation}
    \widetilde{\mathcal{R}}_{s,R}
    :=
    \bigotimes_{j\in J_s}\mathcal{R}_{s,j,R}
    \label{eq:parallel-petz-layer}
\end{equation}
And the reverse-order Petz recovery map
\begin{equation}
    \mathcal{D}_R
    :=
    \widetilde{\mathcal{R}}_{1,R}
    \circ\cdots\circ
    \widetilde{\mathcal{R}}_{\chi,R}
    \label{eq:block-petz-decoder}
\end{equation}
satisfies
\begin{equation}
    \left\|
        \mathcal{D}_R(\rho)-\rho_0
    \right\|_1
    \leq \varepsilon
    \quad \text{with} \quad
     R
    =
    \Theta\!\left(
        \log\frac{n}{\varepsilon}
    \right)
    \label{eq:block-petz-global-error}
\end{equation}
so each Petz map acts on
\begin{align}
    |Q_{s,j}|
    =
    O(R^D)
    =
    O\!\left(
        \log^D\frac{n}{\varepsilon}
    \right)
\end{align}
qudits, and  the depth of the quasi-local block is
\begin{equation}
    \operatorname{depth}_{\mathrm{Petz}}(\mathcal{D}_R)
    \leq \chi
    =
    O(1).
    \label{eq:constant-petz-depth}
\end{equation}
\end{appendixcorollary}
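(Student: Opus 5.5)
The proof follows the Sang et al.\ strategy: decompose the noise into $\chi$ layers and undo each layer with parallel twirled Petz maps, in reverse order. The error of each Petz map is controlled by Theorem~\ref{thm:cmi-stability} applied to the disk--annulus tripartition $(A_{s,j},B_{s,j},C_{s,j})$.

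Write $\rho_s:=\mathcal C_s\circ\cdots\circ\mathcal C_1(\rho_0)$, so that $\rho_\chi=\rho$. The intermediate state $\rho_{s-1}$ is itself $\mathcal N'(\rho_0)$ for an onsite Pauli channel $\mathcal N'$ obtained by replacing some $\mathcal N_i$ with the identity. Since the identity has $\sum_{g\neq0}\sum_u\sqrt{q(u)q(u+g)}=0$, the threshold parameter $\alpha$ of Eq.~\eqref{eq:onsite-alpha} for $\mathcal N'$ is no larger than for $\mathcal N$. Theorem~\ref{thm:cmi-stability} (together with data processing if needed) therefore applies uniformly to every $\rho_{s-1}$ and every $\rho_s$.

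\textbf{Step 1: one block.} Fix $s$ and $j$. The channel $\mathcal N_{s,j}$ acts only on $A_{s,j}$, while $C_{s,j}$ lies at distance $R$ from it. By the approximate-Markov recovery theorem (Fawzi--Renner / Junge et al., twirled Petz form), there is a twirled Petz map $\mathcal R_{s,j,R}$ on $Q_{s,j}$, built from $\rho_{s-1}$ and $\mathcal N_{s,j}$, with
\[
\bigl\|\mathcal R_{s,j,R}\circ\mathcal N_{s,j}(\rho_{s-1})-\rho_{s-1}\bigr\|_1\le 2\sqrt{I(A_{s,j}:C_{s,j}|B_{s,j})_{\mathcal N_{s,j}(\rho_{s-1})}}
\]
(or the analogous bound from Ref.~\cite{Sang2024}). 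The stated form holds because $\rho_{s-1}$ is recoverable from $\mathcal N_{s,j}(\rho_{s-1})$ up to the CMI of the state in which the core $A_{s,j}$ has been noised. To compare with the state actually present during decoding, we may instead use the full tripartition with the remaining layers also applied. In either case Theorem~\ref{thm:cmi-stability} bounds this CMI by $f(|\partial B_A|,|\partial B_C|)e^{-c_0R}$. On a $D$-dimensional lattice $|\partial B_A|,|\partial B_C|=O(R^{D-1})$, so each block error is $O(R^{D-1}e^{-c_0R/2})$.

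\textbf{Step 2: parallelize and telescope.} Within a family the $Q_{s,j}$ are disjoint by Eq.~\eqref{eq:disjoint-enlarged-blocks}, so the maps $\mathcal R_{s,j,R}$ commute and form $\widetilde{\mathcal R}_{s,R}$. We then apply $\mathcal N_{s,j}$ and $\mathcal R_{s,j,R}$ block by block inside the layer. By the triangle inequality and contractivity of channels in trace norm,
\[
\|\widetilde{\mathcal R}_{s,R}(\rho_s)-\rho_{s-1}\|_1\le |J_s|\,O(R^{D-1}e^{-c_0R/2}),
\]
where each intermediate state is again an onsite-Pauli noised $\rho_0$, so Step~1 applies. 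Telescoping over $s$ with contractivity gives
\[
\|\mathcal D_R(\rho)-\rho_0\|_1\le \sum_s|J_s|\,O(R^{D-1}e^{-c_0R/2})=O\bigl(n R^{-1}e^{-c_0R/2}\bigr).
\]
Choosing $R=\Theta(\log(n/\varepsilon))$ makes this at most $\varepsilon$. The support bound $|Q_{s,j}|=O(R^D)$ and the depth $\chi=O((1+1/\alpha)^D)=O(1)$ follow from the geometry.

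\textbf{Main obstacle.} The delicate step is Step~1: the block Petz map must be built from a state that is fixed, so that the maps are well defined and parallel, even though the surrounding blocks of the same layer have already been noised or recovered. I would first try defining $\mathcal R_{s,j,R}$ as the Petz map of $\mathcal N_{s,j}$ with respect to the reference $(\rho_{s-1})_{Q_{s,j}}$. I would then show that the sequential in-layer error is controlled by the CMI of states of the form (partially noised $\rho_0$), which Theorem~\ref{thm:cmi-stability} covers uniformly by the monotonicity of $\alpha$ noted above. If that fails, I would fall back on the local-channel recovery lemma of Ref.~\cite{Sang2024}, which directly states the required bound in terms of the CMI of the output state.
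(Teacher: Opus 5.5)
Your overall architecture matches the paper's: partially noised intermediate states, one twirled Petz map per block on $Q_{s,j}$, commuting maps within a layer, reverse-order telescoping with trace-norm contractivity, and Theorem~\ref{thm:cmi-stability} applied uniformly because switching sites off only lowers $\alpha$. The one step that is wrong as written is the recovery inequality in Step~1.

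\textbf{The Step~1 inequality.} A bound $\|\mathcal R\circ\mathcal N_A(\sigma)-\sigma\|_1\le 2\sqrt{I(A:C|B)_{\mathcal N_A(\sigma)}}$ in terms of the \emph{output} CMI is false in general. Take $\sigma$ with $A$ maximally entangled with $C$, and let $\mathcal N_A$ be completely depolarizing. The output has zero CMI, yet no channel on $AB$ can restore the $A$--$C$ entanglement. Your justification sentence for this form, and your fallback to a lemma ``in terms of the CMI of the output state'', rest on the same misconception.

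What the Junge et al.\ theorem actually gives uses $\varrho=\sigma_{s,j}$ (the global state just before $\mathcal N_{s,j}$), $\omega=(\sigma_{s,j})_{Q_{s,j}}\otimes(\sigma_{s,j})_{C_{s,j}}$ and $\mathcal E=\mathcal N_{s,j}$. It bounds the squared error by the \emph{decrease} $I(Q_{s,j}:C_{s,j})_{\sigma_{s,j}}-I(Q_{s,j}:C_{s,j})_{\mathcal N_{s,j}(\sigma_{s,j})}$. Because $\mathcal N_{s,j}$ acts only on $A_{s,j}$ and leaves $I(B_{s,j}:C_{s,j})$ unchanged, this decrease equals $I(A_{s,j}:C_{s,j}|B_{s,j})_{\sigma_{s,j}}-I(A_{s,j}:C_{s,j}|B_{s,j})_{\mathcal N_{s,j}(\sigma_{s,j})}$. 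That is at most the CMI of the \emph{input} $\sigma_{s,j}$, which is exactly what the paper uses. You have already shown that every partially noised state satisfies Theorem~\ref{thm:cmi-stability} for the full tripartition $A_{s,j}\sqcup B_{s,j}\sqcup C_{s,j}=\Lambda$. So swapping in this inequality repairs Step~1 with no other change, and your final estimate $O(nR^{-1}e^{-c_0R/2})$ goes through.

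\textbf{Your ``main obstacle''.} This also dissolves. The Petz map above acts trivially on $C_{s,j}$ and depends only on $(\sigma_{s,j})_{Q_{s,j}}$. The other channels of layer $s$ act on $A_{s,k}\subset Q_{s,k}$, which is disjoint from $Q_{s,j}$, so this reduced state equals $(\rho_{s-1})_{Q_{s,j}}$ for any in-layer ordering. Your first choice of reference is therefore exactly right, and no fallback is needed.

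Likewise, you never need the tripartition ``with the remaining layers also applied''. The telescoping only compares $\mathcal R_{s,j,R}\mathcal N_{s,j}(\sigma_{s,j})$ with $\sigma_{s,j}$, and contractivity carries these errors through to the fully noisy input.
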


\begin{proof}
Let $\sigma_s$ denote the global state immediately before applying
$\mathcal C_s$.  Fix an arbitrary ordering of the blocks in $J_s$, apply
the channels $\mathcal N_{s,j}$ sequentially in this order, and let
$\sigma_{s,j}$ denote the global intermediate state immediately before
applying $\mathcal N_{s,j}$.  The approximate-sufficiency theorem of
Ref.~\cite{Junge2015} states that the twirled Petz map
$\mathcal P_{\mathcal E,\omega}$ satisfies, for density matrices
$\varrho$ and $\omega$ and a quantum channel $\mathcal E$,
\begin{equation*}
    D(\varrho\|\omega)
    -D(\mathcal E(\varrho)\|\mathcal E(\omega))
    \geq
    -2\log_2 F\!\left(
        \varrho,
        \mathcal P_{\mathcal E,\omega}\mathcal E(\varrho)
    \right).
\end{equation*}
Take $\varrho=\sigma_{s,j}$,
$\omega=(\sigma_{s,j})_{A_{s,j}B_{s,j}}\otimes
(\sigma_{s,j})_{C_{s,j}}$, and $\mathcal E=\mathcal N_{s,j}$.  Since
$\mathcal E$ acts only on $A_{s,j}$, we have
\begin{align*}
    I(A_{s,j}B_{s,j}:C_{s,j})_{\sigma_{s,j}}
    -I(A_{s,j}B_{s,j}:C_{s,j})_{\mathcal N_{s,j}(\sigma_{s,j})}
    ={}&I(A_{s,j}:C_{s,j}| B_{s,j})_{\sigma_{s,j}}-I(A_{s,j}:C_{s,j}| B_{s,j})_{\mathcal N_{s,j}(\sigma_{s,j})}.
\end{align*}
Moreover, writing $T(\varrho,\tau):=\frac12\|\varrho-\tau\|_1$, the
Fuchs--van de Graaf inequality
$T(\varrho,\tau)\leq\sqrt{1-F(\varrho,\tau)^2}$
~\cite{fuchs1998cryptographicdistinguishabilitymeasuresquantum}
implies
\begin{equation*}
    -2\log_2F(\varrho,\tau)
    \geq
    -\log_2\!\left(1-T(\varrho,\tau)^2\right)
    \geq
    \frac{1}{4\ln2}\|\varrho-\tau\|_1^2.
\end{equation*}
Denoting the resulting recovery
map on $A_{s,j}B_{s,j}=Q_{s,j}$ by $\mathcal R_{s,j,R}$ gives
\begin{align}
    \frac{1}{4\ln 2}
    \left\|
        \mathcal{R}_{s,j,R}
        \mathcal{N}_{s,j}(\sigma_{s,j})
        -
        \sigma_{s,j}
    \right\|_1^2
    &\leq
    I(A_{s,j}:C_{s,j}| B_{s,j})_{\sigma_{s,j}}
    -
    I(A_{s,j}:C_{s,j}| B_{s,j})_{\mathcal N_{s,j}(\sigma_{s,j})}
    \notag\\
    &\leq
    I(A_{s,j}:C_{s,j}| B_{s,j})_{\sigma_{s,j}}.
    \label{eq:block-local-petz}
\end{align}
Every $\sigma_{s,j}$ is obtained from $\rho_0$ by applying a subset of the
onsite Pauli channels $\mathcal N_i$, with the identity channel on all
remaining sites.  It therefore satisfies the same noise threshold as the
full channel, so Theorem~\ref{thm:cmi-stability} applies and gives
\[
    I(A_{s,j}:C_{s,j}| B_{s,j})_{\sigma_{s,j}}
    \leq
    K (|\partial B_{s,j,A}|+|\partial B_{s,j,C}|)^2e^{-c_0R}.
\]
Because both the core radius and the buffer width are $\Theta(R)$,
bounded lattice geometry gives
$
    (|\partial B_{s,j,A}|+|\partial B_{s,j,C}|)^2=O(R^{2D-2}).
$
Hence every block recovery error obeys
\begin{equation}
    \left\|
        \mathcal{R}_{s,j,R}
        \mathcal{N}_{s,j}(\sigma_{s,j})
        -
        \sigma_{s,j}
    \right\|_1
    =
    O\!\left(
        R^{D-1}e^{-c_0R/2}
    \right).
    \label{eq:block-single-error}
\end{equation}

The total number of blocks is $O\!\left(\frac{n}{R^D}\right)$.
Order all block channels consistently with the forward evolution and the
recovery maps in reverse order; inserting the corresponding intermediate
states $\sigma_{s,j}$ and using trace-norm contractivity under CPTP maps
yields
\begin{align}
    \left\|
        \mathcal{D}_R(\rho)-\rho_0
    \right\|_1
    &\leq
    \sum_{s,j}
    \left\|
        \mathcal{R}_{s,j,R}
        \mathcal{N}_{s,j}(\sigma_{s,j})
        -
        \sigma_{s,j}
    \right\|_1
    \notag\\
    &=
    O\!\left(
        nR^{-1}e^{-c_0R/2}
    \right).
\end{align}
Taking $R=\Theta(\log(n/\varepsilon))$ makes the expression at
most $\varepsilon$. Finally, Eq.~\eqref{eq:disjoint-enlarged-blocks} implies that all
maps $\mathcal{R}_{s,j,R}$ with fixed $s$ have disjoint support and
may be applied simultaneously. Thus each $\widetilde{\mathcal{R}}_{s,R}$ is one recovery layer and the total block-map depth is at most $\chi=O(1)$.
\end{proof}
Finally, we believe that the same recovery map construction can be adapted to the geometrically $k$-local noise easily.

\end{document}